\pdfoutput=1
\documentclass[12pt]{article}

\usepackage[utf8]{inputenc} 
\usepackage[T1]{fontenc}    

\usepackage[bitstream-charter]{mathdesign}
\usepackage{amsmath}
\usepackage[scaled=0.92]{PTSans}

\usepackage[
  paper  = letterpaper,
  left   = 1.0in,
  right  = 1.0in,
  top    = 1.0in,
  bottom = 1.0in,
  ]{geometry}

\usepackage[usenames,dvipsnames,table]{xcolor}
\definecolor{shadecolor}{gray}{0.9}

\usepackage[final,expansion=alltext]{microtype}
\usepackage[english]{babel}
\usepackage[parfill]{parskip}
\usepackage{afterpage}
\usepackage{framed}
\newcommand{\spacingset}[1]{%
  \renewcommand{\baselinestretch}{#1}\small\normalsize}
\spacingset{1.2}

{\endMakeFramed}

\usepackage{lineno}

\usepackage{ragged2e}

\newcounter{parcount}

\usepackage{graphicx}
\usepackage{wrapfig}
\usepackage[labelfont=bf,font={footnotesize,stretch=1},width=.9\textwidth]{caption}
\usepackage[format=hang]{subcaption}

\usepackage{booktabs,multirow,multicol}       

\usepackage[algoruled]{algorithm2e}
\usepackage{listings}
\usepackage{fancyvrb}
\fvset{fontsize=\normalsize}

\usepackage[colorlinks,linktoc=all]{hyperref}
\usepackage[all]{hypcap}
\hypersetup{citecolor=MidnightBlue}
\hypersetup{linkcolor=MidnightBlue}
\hypersetup{urlcolor=MidnightBlue}

\usepackage[nameinlink,capitalise]{cleveref}
\Crefname{section}{\S}{\S}
\Crefname{section}{Section}{Sections}

\usepackage[acronym,nowarn]{glossaries}

\lstdefinestyle{mystyle}{
    commentstyle=\color{OliveGreen},
    keywordstyle=\color{BurntOrange},
    numberstyle=\tiny\color{black!60},
    stringstyle=\color{MidnightBlue},
    basicstyle=\ttfamily,
    breakatwhitespace=false,
    breaklines=true,
    captionpos=b,
    keepspaces=true,
    numbers=left,
    numbersep=5pt,
    showspaces=false,
    showstringspaces=false,
    showtabs=false,
    tabsize=2
}
\usepackage{centernot}
\usepackage{amsthm}         
\usepackage{nicefrac}       
\usepackage{mathtools}      
\usepackage{amsbsy}         
\usepackage{amstext}        
\usepackage{thmtools}       
\usepackage{thm-restate}    

\begingroup
    \makeatletter
    \@for\theoremstyle:=definition,remark,plain\do{%
        \expandafter\g@addto@macro\csname th@\theoremstyle\endcsname{%
            \addtolength\thm@preskip\parskip
            }%
        }
\endgroup

\DeclareRobustCommand{\mb}[1]{\ensuremath{\mathbf{\boldsymbol{#1}}}}

\crefname{lemma}{lemma}{lemmas}
\Crefname{lemma}{Lemma}{Lemmas}
\crefname{thm}{theorem}{theorems}
\Crefname{thm}{Theorem}{Theorems}
\crefname{prop}{proposition}{propositions}
\Crefname{prop}{Proposition}{Propositions}
\crefname{assumption}{assumption}{assumptions}
\crefname{assumption}{Assumption}{Assumptions}
\crefname{condition}{condition}{conditions}
\Crefname{condition}{Condition}{Conditions}

\usepackage{arydshln}
\makeatletter
\def\adl@drawiv#1#2#3{%
        \hskip.5\tabcolsep
        \xleaders#3{#2.5\@tempdimb #1{1}#2.5\@tempdimb}%
                #2\z@ plus1fil minus1fil\relax
        \hskip.5\tabcolsep}
\newcommand{\cdashlinelr}[1]{%
  \noalign{\vskip\aboverulesep
           \global\let\@dashdrawstore\adl@draw
           \global\let\adl@draw\adl@drawiv}
  \cdashline{#1}
  \noalign{\global\let\adl@draw\@dashdrawstore
           \vskip\belowrulesep}}
\makeatother

\renewcommand{\epsilon}{\varepsilon}

\declaretheorem[style=plain,numberwithin=section,name=Theorem]{theorem}
\declaretheorem[style=plain,sibling=theorem,name=Lemma]{lemma}
\declaretheorem[style=plain,sibling=theorem,name=Proposition]{proposition}
\declaretheorem[style=plain,sibling=theorem,name=Corollary]{cor}

\declaretheorem[style=definition,sibling=theorem,name=Example]{example}

\declaretheorem[style=plain,sibling=theorem,name=Condition]{condition}

\newenvironment{example*}
 {\pushQED{\qed}\example}
 {\popQED\endexample}
\numberwithin{equation}{section}

\newcommand{\E}[2]{\mathbb{E}_{#1}\left[#2\right]}
\newcommand{\G}[2]{\mathbb{G}_{#1}\left[#2\right]}

\newcommand{\defeq}{\overset{\mathrm{def}}{=}}

\newcommand{\china}[1]{\textcolor{red}{#1}}
\newcommand{\us}[1]{\textcolor{blue}{#1}}
\newcommand{\eu}[1]{\textcolor{green!60!black}{#1}}

\definecolor{WowColor}{rgb}{.75,0,.75}
\definecolor{SubtleColor}{rgb}{0,0,.50}

\newcounter{margincounter}

\usepackage{xr-hyper}
\usepackage{makecell}

\AtBeginEnvironment{abstract}{%
  \renewcommand{\baselinestretch}{1.2}\selectfont}
\usepackage[authoryear,square,semicolon]{natbib}
\usepackage[affil-it]{authblk}
\newcommand{\blind}{0}
\title{Bayesian Poisson-Randomized Gamma Tensor Factorization with Application to \\International Trade Flows}
\date{}

\if0\blind
 \author[2]{Jie Jian}
\author[1,2]{Aaron Schein}
\affil[1]{Department of Statistics, The University of Chicago}
\affil[2]{Data Science Institute, The University of Chicago}
\fi

\begin{document}
\maketitle
\begin{abstract}
    We study sparse semi-continuous tensor data with excess zeros, heavy right tails, and slice-specific dispersion. Such features arise naturally in monetary-valued multi-way data, such as international trade, where most exporter--importer--product--year cells are zero while positive values are continuous and highly variable. To model these data, we propose a Bayesian hierarchical tensor factorization model that places a low-rank CP structure on a latent Poisson rate tensor and couples it with a conditional Gamma model for positive outcomes, with rate parameters that can vary across slices within a mode. The model therefore separates the occurrence and magnitude of positive observations while borrowing strength across all tensor dimensions through a shared low-rank latent structure. To scale posterior inference to large arrays, we develop a hybrid variational--Monte Carlo algorithm that combines efficient coordinate ascent updates with a partially collapsed augmented-data sampler. Applied to approximately 60 million trade flows, the method surfaces multiway dependence across exporters, importers, products, and years that is difficult to recover from gravity-type or pairwise network analyses, which do not jointly model the product and temporal dimensions.
\end{abstract}
\pagebreak

\section{Introduction}

Dynamic multiway networks record pairwise interactions together with additional indexing dimensions, such as relation type, category, or time \citep{kivela2014multilayer}. In many empirical settings, these interactions are observed as nonnegative edge weights on a sparse set of node pairs and indices, yielding a natural tensor representation \citep{DeDomenicoSoleRibaltaCozzoKivelaMorenoPorterGomezArenas2013}. Such weights may be discrete counts, as in social-science and biological networks \citep{schein2015bayesian,ChacoffResascoVazquez2018}, or continuous, as in economic and financial tensor time series \citep{ChenYangZhang2022TensorTS,BillioCasarinIacopiniKaufmann2023BDTR,BabiiGhyselsPan2025TensorPCA}. 
The low-rank structure of such tensors encodes cross-mode dependence, and tensor decomposition has consequently become a standard tool for their analysis \citep{KoldaBader2009, gillis2020nonnegative}.
Existing probabilistic formulations for tensor-valued data have focused mainly on discrete observations, particularly categorical and count-valued data \citep{DunsonXing2009,chi2012tensors, schein2016bayesian}, whereas financial and economic networks often involve monetary-valued edge weights, yielding tensors with excess zeros, heavy-tailed positive entries, and product- or transaction-type-specific variation in scale and dispersion \citep{helpman2008estimating,barigozzi2010multinetwork}. 


Our motivating application is international trade, where modern databases record dollar-valued flows indexed by exporting country $i$, importing country $j$, product $a$, and year $t$, denoted $y_{ijat}\in \mathbb{R}_{\geq 0}$. Countries export distinctive product baskets to particular sets of partners, and shocks propagate through shared supply chains, transport corridors, and policy regimes, generating clustering and co-movement that spans all four modes \citep{Hidalgo2007ProductSpace}. 
The resulting tensor exhibits precisely the challenges described above: for any given product, most country pairs record no trade at all, while active pairs exchange flows ranging from hundreds to billions of dollars. Dispersion also differs markedly across products \citep{fally2018commodity}. For example, petroleum oil is associated with much larger and more volatile flows than cocoa products.

This paper develops a probabilistic tensor factorization built on the nonnegative CANDECOMP/PARAFAC (CP) decomposition \citep{lee1999learning,KoldaBader2009}, which represents the four-way trade tensor through the low-rank signal $y_{ijat} \approx \sum_{k=1}^{K} \theta^{(1)}_{ik}\theta^{(2)}_{jk}\theta^{(3)}_{ak}\theta^{(4)}_{tk}$ via exporter, importer, product, and time ``factors'', that are all non-negative thus respecting the support of the non-negative data. As in many applications, the non-negativity of the factors yields a ``parts-based'' representation that is highly interpretable \citep{gillis2020nonnegative}. \Cref{fig:app_russia} illustrates such an  interpretable structure that is surfaced by our proposed model.


Standard non-negative CP models typically assume conditionally Gaussian or Poisson entries \citep{shashua2005nonnegative, cemgil2009bayesian,gillis2020nonnegative} and govern only the location parameter, like the mean, through the low-rank signal, leaving two limitations. First, they do not accommodate the mixed zero-positive support of semi-continuous data. Second, they offer no mechanism for capturing heterogeneous variability across slices when observations share similar means. We address both through a compound Poisson--Gamma construction, which is conceptually related to two-part ``hurdle'' models \citep{Mullahy1986} and closely connected to Tweedie compound Poisson models \citep{SmythJorgensen2002,DunnSmyth2005}. Each tensor entry is driven by a latent Poisson count whose mean follows the CP decomposition, so zeros arise naturally from the probability of no events. Conditional on a positive count, the observed magnitude follows a Gamma distribution whose shape depends on the latent count. This construction can be viewed as closely related to a Tweedie model with a restricted power parameterization. Crucially, dispersion is not tied to the mean but is instead allowed to vary freely across modes, allowing the heterogeneous variability observed across product slices and other tensor margins to be captured through dedicated mode-specific parameters. Such separate dispersion modeling is new in probabilistic CP-type tensor models.

\begin{figure}[t]
\centering
 \includegraphics[width=\linewidth]{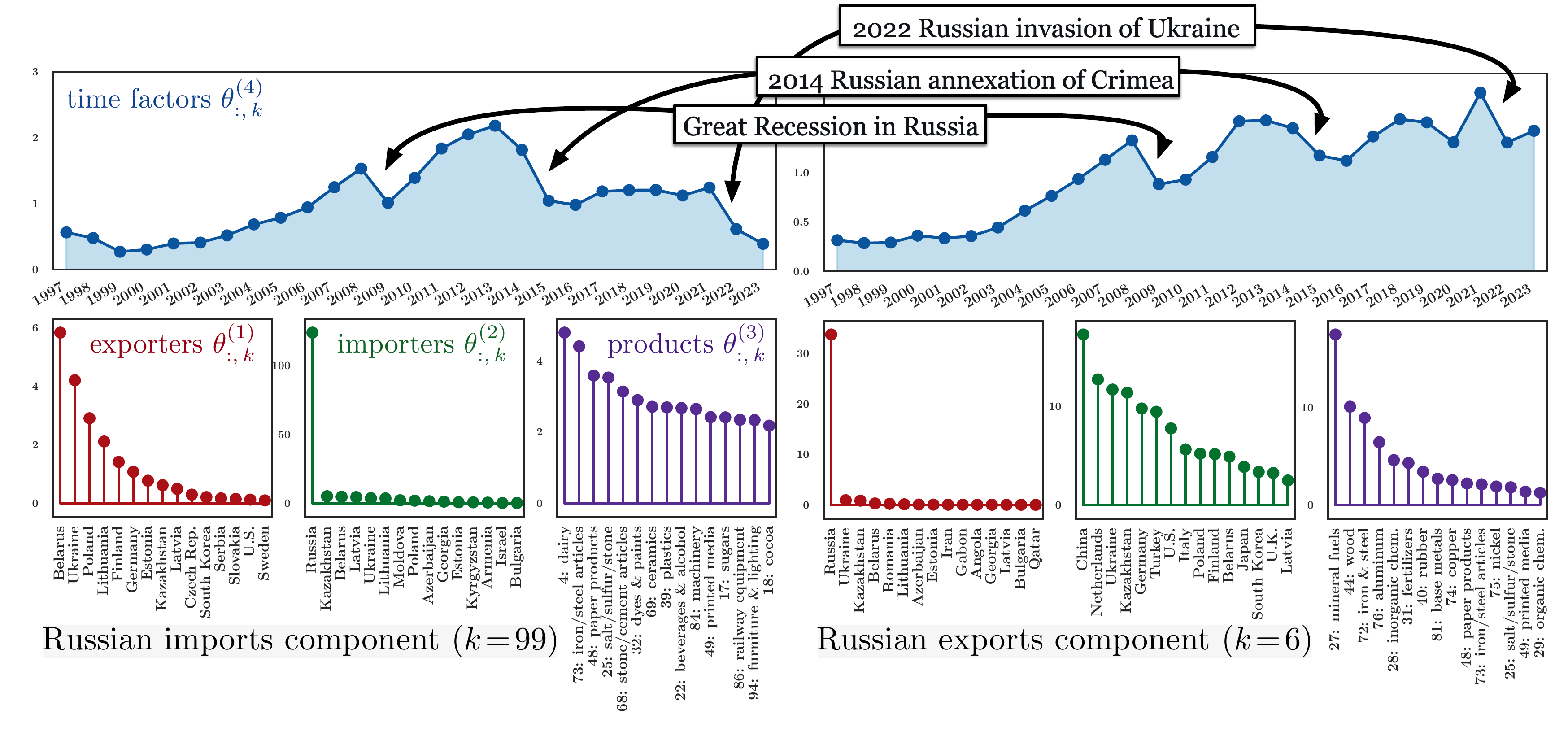}
\caption{\textbf{Russia trading components.}
\textbf{Left:} Russia acts as an importer and manufactures goods sourced primarily from nearby countries. \textbf{Right:} Russia acts as an exporter of mineral, metal, and industrial products to diverse destinations. Both components' time factors show deviations around the 2009 Great Recession, the 2014 annexation of Crimea, and the 2022 invasion of Ukraine.}
  \label{fig:app_russia}
\end{figure}

To enable posterior inference, we introduce two augmentations of the generative process that express the observed dollar-value as a sum of per-component dollar-values. Since the latent Poisson count driving each component can itself be zero, some of these per-component values are exactly zero under the model. To connect the two augmentations and update the per-component values in closed form, we introduce binary indicator variables that explicitly identify active non-zero components, and update them through a ``partially collapsed''~\citep{vanDykPark2008} sampler that marginalizes over variables with which the indicators have a deterministic relationship. Conditioning on these indicators, we show that all remaining parameter blocks can be updated in closed-form via coordinate ascent variational inference (CAVI) \citep{blei2017variational}. The resulting approximate inference algorithm is thus a hybrid of variational and Monte Carlo methods.~\looseness=-1


We apply the proposed model to a large international trade tensor constructed from a World Bank database \citep{wits2025} of size \(151 \times 151 \times 96 \times 27\), recording bilateral trade values among 151 countries as exporters and importers, across 96 high-level product categories, from 1997 to 2023. The model recovers 125 interpretable components, each identifying a coherent combination of exporters, importers, products, and temporal trends. The recovered components reveal economically meaningful structure: concentration and specialization patterns across country and product modes, opposing temporal trends that reflect market-share reallocation across competing exporters, trade-routing shifts around policy changes, and value-chain reconfiguration within industries such as textiles.

Existing statistical approaches to trade data typically operate on fewer dimensions. Gravity models regress bilateral volume on dyad-specific covariates \citep{anderson2011gravity}, while network and community detection methods summarize pairwise trade relationships without incorporating the product or temporal dimensions \citep{wang2023fast, jian2025restricted}. Among tensor-based approaches, \citet{BillioCasarinIacopiniKaufmann2023BDTR} develop a Bayesian dynamic tensor autoregressive model for a two-layer international network of bilateral trade and capital, though at a much smaller scale without granular product-level flows, and \citet{kim2020measuring} use product-level bilateral trade data to cluster exporter-importer pairs by the composition of goods exchanged. Our framework operates on the full four-way structure of transaction values.


After reviewing basic tensor notation and Bayesian Poisson tensor factorization in~\cref{sec:preliminary}, the paper makes four main contributions.

\begin{itemize}
    \item \textbf{Modeling semi-continuous tensor data.}
    In~\cref{method}, we develop a Bayesian hierarchical low-rank tensor model for zero-inflated non-negative continuous arrays, combining a Poisson CP latent-rate structure with a Gamma layer for positive magnitudes.

    \item \textbf{Heterogeneous dispersion.}
    Also in~\cref{method}, 
    we allow the Gamma rate to vary across index-specific subtensors to capture the heterogeneous dispersion.

    \item \textbf{Scalable computation.}
In~\cref{sec:computation}, we develop a hybrid CAVI-sampling algorithm in which conditioning on binary auxiliary variables leads to the optimal variational family involving the lesser-known Bessel distribution~\citep{yuan2000bessel}, and moments that can be computed entirely in closed form, facilitating fast CAVI updates. An asymptotic approximation to the binary auxiliary variables' conditional probabilities further simplifies computation, enabling GPU-accelerated inference for large-scale tensors.
    

    \item \textbf{Empirical validation and international trade application.}
In \cref{sec:simulation}, we present synthetic experiments, and \cref{subsec:trading-oos} evaluates out-of-sample performance on trading data, jointly validating the method under a strong-generalization setting. In \cref{subsec:data}, we describe the trading data and missingness structure, and \cref{subsec:eda} uncovers interpretable multi-way dependence patterns across countries, products, and time using the full trading dataset, which are difficult to detect with standard bilateral summaries. 
\end{itemize}

\section{Preliminaries}
\label{sec:preliminary}

In this section, we briefly review the foundational concepts of non-negative tensor factorization and the Poisson Bayesian tensor factorization approach. For clarity, \cref{sec:preliminary,sec:computation} develop the methodology and inference under a general $M$-way tensor formulation. 

\subsection{CP representation and nonnegative tensor decomposition}

Let $\mathcal{Y}\in \mathbb{R}_{\ge 0}^{I_1 \times I_2 \times \cdots \times I_M}$ be a nonnegative $M$-way tensor with entries $y_{\mathbf{i}}\ge 0$, where $\mathbf{i}\equiv(i_1, \ldots,i_M)$ denotes a multi-index and $i_m\in\{1,\ldots,I_m\}$ for $m=1,\ldots,M$.
We write $\circ$ for the vector outer product. For vectors $\mathbf{a}\in\mathbb{R}^{I_1}$ and $\mathbf{b}\in\mathbb{R}^{I_2}$, 
$\mathbf{a}\circ \mathbf{b}\in\mathbb{R}^{I_1\times I_2}$ yields a matrix with entries
$(\mathbf{a}\circ\mathbf{b})_{ij}=a_i b_j$. This is a rank-one matrix in the usual matrix rank sense.
A rank-$K$ CANDECOMP/PARAFAC (CP) representation writes
\begin{equation}
\label{eq:cp}
\widehat{\mathcal{Y}}
= \sum_{k=1}^K 
\boldsymbol{\theta}^{(1)}_{:,k}\circ \boldsymbol{\theta}^{(2)}_{:,k}\circ \cdots \circ \boldsymbol{\theta}^{(M)}_{:,k},
\qquad
\boldsymbol{\theta}^{(m)}_{:,k}\in \mathbb{R}_{\ge 0}^{I_m},\qquad m=1,\ldots,M
\end{equation}
where $\boldsymbol{\Theta}^{(m)}=[\boldsymbol{\theta}^{(m)}_{:, 1},\ldots,\boldsymbol{\theta}^{(m)}_{:,k}]
\in\mathbb{R}_{\ge 0}^{I_m\times K}$ is the \emph{mode-$m$} factor matrix. Thus an $M$-way tensor admits exactly $M$ such factor matrices, one for each mode.
Equivalently, entry $\mathbf{i}$ is
\begin{equation}
\label{eq:cp-entry}
\widehat{y}_{\mathbf{i}}
= \sum_{k=1}^K \prod_{m=1}^M \theta^{(m)}_{i_m k}.
\end{equation}

Nonnegative tensor decomposition estimates the $M$ mode-specific factor matrices $\{\Theta^{(m)}\}_m$ so that the low-rank reconstruction $\widehat{\mathcal{Y}}$ in~\cref{eq:cp} approximates $\mathcal{Y}$ well. 
Two broad approaches are common. One treats $\widehat{\mathcal{Y}}$ as a deterministic approximation and solves $\min_{\boldsymbol{\Theta}^{(1)},\ldots,\boldsymbol{\Theta}^{(M)}\ge 0}
\ \ \mathcal{L}\!\left(\mathcal{Y},\widehat{\mathcal{Y}}\right)$
where $\mathcal{L}$ is a loss or divergence (e.g., squared Frobenius loss or generalized KL divergence). The other specifies a probabilistic model for $\mathcal{Y}$ with mean or intensity tensor $\widehat{\mathcal{Y}}$ and learns the factors via maximum likelihood or Bayesian inference. Either way, the CP structure yields a parsimonious yet interpretable representation: an unrestricted tensor over $I_1\times\cdots\times I_M$ cells involves on the order of $\prod_{m=1}^M I_m$ parameters, whereas a rank-$K$ CP parameterization uses only $K\sum_{m=1}^M I_m$ mode-specific loadings.

CP factors are identifiable up to componentwise rescaling and permutation. Specifically, for any component $k$ and constants $c_1,\ldots,c_M>0$ with $\prod_{m=1}^M c_m=1$, replacing $\boldsymbol{\theta}^{(m)}_{:,k}$ by $c_m\,\boldsymbol{\theta}^{(m)}_{:,k}$ leaves the implied reconstruction unchanged. This rescaling does not affect the reconstruction \cref{eq:cp-entry}. We therefore do not impose an explicit identifiability constraint. Instead, we focus on within-component comparisons when interpreting results in application in \cref{sec:application}.

\subsection{Bayesian Poisson tensor factorization and Poisson allocation}

For count-valued $M$-way data, Bayesian Poisson tensor factorization (BPTF)~\citep{cemgil2009bayesian,gopalan2015scalable,schein2015bayesian} first models entries $y_{\mathbf{i}}\in\mathbb{Z}_{\ge 0}$ as conditionally independent Poisson random variables whose rates form a low-rank tensor:
\begin{equation}
\label{eq:mu-cp}
\lambda_{\mathbf{i}} = \sum_{k=1}^K \prod_{m=1}^M \theta^{(m)}_{i_m k}, \qquad y_{\mathbf{i}}\mid \lambda_{\mathbf{i}} \sim \ \mathrm{Pois}(\lambda_{\mathbf{i}}).
\end{equation}
The Bayesian formulation then places priors on the nonnegative factors, commonly Gamma for conjugate conditional updates, $\theta^{(m)}_{i_m k}\sim \mathrm{Gam}(\textsc{shape}=a_m,\textsc{rate}=b_m)$. 

Posterior computation exploits an additive augmentation known as Poisson thinning, also called Poisson allocation. 
Specifically, we decompose each observed count $y_{\mathbf{i}}$ into $K$ latent component counts $y^{(k)}_{\mathbf{i}}$ such that
\begin{equation}
\label{eq:poisson-allocation-prelim}
y^{(k)}_{\mathbf{i}}\ \sim\ \ \mathrm{Pois}\Big(\prod_{m=1}^M \theta^{(m)}_{i_m k}\Big),
\qquad
y_{\mathbf{i}}=\sum_{k=1}^K y^{(k)}_{\mathbf{i}}.
\end{equation}
By Poisson additivity, this preserves the marginal model in \cref{eq:mu-cp}. Moreover, conditioning on the total
$y_{\mathbf{i}}$ yields the equivalent multinomial form
\begin{equation}
\label{eq:multinomial-thinning-prelim}
\big(y^{(1)}_{\mathbf{i}},\ldots,y^{(K)}_{\mathbf{i}}\big)\mid y_{\mathbf{i}}, \{\Theta^{(m)}\}_{m}
\sim \mathrm{Mult}\Big(y_{\mathbf{i}};\ \tau^{(1)}_{\mathbf{i}},\ldots,\tau^{(K)}_{\mathbf{i}}\Big),
\end{equation}
where $\tau^{(k)}_{\mathbf{i}}
=\nicefrac{\prod_{m=1}^M \theta^{(m)}_{i_m k}}{\sum_{\ell=1}^K \prod_{m=1}^M \theta^{(m)}_{i_m \ell}}$.
This augmentation restores conditional conjugacy for the factors. In particular, with $\theta^{(m)}_{i_m k}\sim\text{Gam}(a_m,b_m)$, the complete conditional is
\begin{equation}
\label{eq:gamma_posterior}
\theta^{(m)}_{i_m k}\mid \cdot \ \sim\ \mathrm{Gam}\!\left(
a_m+\sum_{\mathbf{i}_{-m}} y^{(k)}_{\mathbf{i}},
\ \ 
b_m+\sum_{\mathbf{i}_{-m}} \prod_{\ell \neq m} \theta^{(\ell)}_{i_\ell k}
\right),
\end{equation}
where $\mathbf{i}_{-m}\equiv(i_1,\ldots,i_{m-1},i_{m+1},\ldots,i_M)$ and $\sum_{\mathbf{i}_{-m}}$ denotes summation over all indices other than $i_m$.
\Cref{eq:multinomial-thinning-prelim,eq:gamma_posterior} form the backbone of scalable Gibbs or variational inference algorithms for Poisson tensor factorization.

\section{Bayesian Poisson-Randomized Gamma Factorization}
\label{method}


We propose the Poisson-randomized Gamma tensor factorization model, which extends the CP decomposition to semi-continuous arrays with many zeros and positive continuous values. The model preserves the interpretability of low-rank CP structure while addressing two limitations of standard Poisson factorization: the inability to accommodate mixed zero-positive support, and the Poisson mean-variance equality that prevents flexible modeling of over-dispersed, heavy-tailed positive magnitudes. In \cref{subsec:model}, we introduce the generative model, and \cref{subsec:reparameterization} develops the reparameterization that enables efficient posterior inference.

\subsection{The Poisson-Randomized Gamma Model}
\label{subsec:model}

To model positive continuous outcomes with exact zeros, we assume the following
\begin{align}
y_{\mathbf{i}}\mid \eta_{\mathbf{i}} \ \overset{\textrm{ind.}}{\sim} \
\begin{cases}
\delta_0, & \eta_{\mathbf{i}}=0, \\
\mathrm{Gam}(\eta_{\mathbf{i}},\beta_{\mathbf{i}}), & \eta_{\mathbf{i}}>0,
\end{cases}
\label{observation-mix-gamma}
\end{align}
where $\eta_{\mathbf{i}}$ is a latent count variable that is Poisson-distributed as
\begin{align}
\eta_{\mathbf{i}} \mid \lambda_{\mathbf{i}} \overset{\textrm{ind.}}{\sim} \mathrm{Pois}(\lambda_{\mathbf{i}}),
\qquad 
\lambda_{\mathbf{i}}=\sum_{k=1}^K \prod_{m=1}^M \theta^{(m)}_{i_m k} .
\label{ijat-Pois-aggregation}
\end{align}
Here $\delta_0$ denotes a Dirac delta function at zero and the Gamma distribution is parameterized by positive shape $\eta_{\mathbf{i}} > 0 $ and rate $\beta_{\mathbf{i}} > 0$.

Marginalizing over the latent count $\eta_{\mathbf{i}}$ then yields the Poisson-randomized Gamma (PRG) distribution \citep{zhou2016augmentable}, which we denote by $y_{\mathbf{i}} \sim \mathrm{PRG}(\lambda_{\mathbf{i}},\beta_{\mathbf{i}})$.
Thus, the hierarchical construction in \cref{observation-mix-gamma,ijat-Pois-aggregation} defines a PRG tensor model with low-rank location parameter $\lambda_{\mathbf{i}}$ and dispersion parameter $\beta_{\mathbf{i}}$, and falls within the general framework of~\cref{eq:cp}. The Poisson component governs the latent event rate, inducing zeros through the probability that $\eta_{\mathbf{i}}=0$ and setting the overall scale of nonzero observations.


In principle, when repeated tensor observations are available, the rate parameter in \cref{observation-mix-gamma} may vary across all cells, yielding a fully cell specific rate tensor with entries $\beta_{\mathbf{i}}$. In practice, however, we typically observe only a single tensor array. Under a single observation, lower-dimensional parameterizations are required to facilitate stable estimation via statistical pooling. This includes, for example, constant rate $\beta$, mode-specific structures such as $\beta_{i_{m^\star}}$, pairwise structures such as $\beta_{i_j,i_a}$, and other lower-dimensional parameterizations. For example, in our international trade application (\cref{sec:application}), restricting $\beta_{\mathbf{i}}\equiv \beta_{i_{m^\star}}$ allows variance dispersion to vary uniquely across products. To maintain a unified presentation that covers any of these pooling structures, we retain the general notation $\beta_{\mathbf{i}}$ throughout the majority of the paper.

An observed zero $y_{\mathbf{i}}=0$ occurs only when $\eta_{\mathbf{i}}=0$, so during inference, $\eta_{\mathbf{i}}=0$ almost surely for all zero entries. More generally, if an entity along any mode (e.g., a particular $i_m$) is consistently associated with small outcomes, the likelihood pushes its corresponding factor loading $\Theta^{(m)}$ toward zero, shrinking its contribution to $\eta_{\mathbf{i}}$. Since these factors are constrained to be nonnegative, we place independent conditionally conjugate Gamma priors over them:
\begin{align}
    \theta_{i_m k}^{(m)} \sim \mathrm{Gam}(\alpha,\alpha \nu^{(m)}), \qquad m=1,\ldots,M,
    \label{prior}
\end{align}
where 
$\mathbb{E}[\theta_{i_m k}^{(m)}]= \nicefrac{\alpha}{\alpha \nu^{(m)}}= \nicefrac{1}{\nu^{(m)}}$ and $\boldsymbol{\nu} = (\nu^{(1)}, \dots, \nu^{(M)})^T$ collects these mode-specific rates.
In all of our experiments, we fix $\alpha=0.1$, which encourages shrinkage among the latent factors.

Having specified the full hierarchical model, we first establish an entrywise identifiability property of the PRG sampling distribution.

\begin{lemma}[Entrywise identifiability of $\mathrm{PRG}(\lambda,\beta)$]
\label{lem:ident_entrywise_mu_beta}
For $\lambda>0$ and $\beta>0$, let $Y\sim \mathrm{PRG}(\lambda,\beta)$ denote the distribution induced by
\begin{equation}\label{eq:prg}
\eta\mid \lambda \sim \mathrm{Pois}(\lambda), 
\qquad
Y\mid \eta,\beta \sim 
\begin{cases}
\delta_0 & \textrm{if }\eta=0,\\[-0.5em]
\mathrm{Gam}(\eta,\beta) &  \textrm{otherwise}.
\end{cases}
\end{equation}
The family $\{\mathrm{PRG}(\lambda,\beta):\lambda>0,\beta>0\}$ is identifiable: if $\mathrm{PRG}(\lambda,\beta)=\mathrm{PRG}(\lambda',\beta')$ for some $\lambda',\beta'>0$, then $\lambda=\lambda'$ and $\beta=\beta'$. A proof of \Cref{lem:ident_entrywise_mu_beta} is given in Supplementary~\ref{app:PRGidentifiability}. 
\end{lemma}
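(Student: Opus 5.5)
The plan is to read off $\lambda$ from the atom at zero, and then recover $\beta$ from the mean of the distribution. The key observation is that $\mathrm{PRG}(\lambda,\beta)$ is a mixture of a point mass at $0$ and an absolutely continuous part on $(0,\infty)$. Indeed, for $\eta\ge 1$ the conditional law $\mathrm{Gam}(\eta,\beta)$ has a Lebesgue density, so the mixture over $\eta\ge1$ puts zero mass at $0$. Hence
\[
\Pr(Y=0)=\Pr(\eta=0)=e^{-\lambda}.
\]
If $\mathrm{PRG}(\lambda,\beta)=\mathrm{PRG}(\lambda',\beta')$, then the atoms at zero must agree, which gives $e^{-\lambda}=e^{-\lambda'}$. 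Since $x\mapsto e^{-x}$ is strictly monotone, $\lambda=\lambda'$.

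Next, with $\lambda=\lambda'$ established, compute the mean by iterated expectation:
\[
\mathbb{E}[Y]=\mathbb{E}\bigl[\mathbb{E}[Y\mid\eta]\bigr]=\mathbb{E}[\eta/\beta]=\lambda/\beta.
\]
The inner expectation equals $\eta/\beta$ for $\eta\ge1$. For $\eta=0$ it is $0$, which matches the same formula. This mean is finite, since all moments of the Poisson exist. Equality of the laws forces equality of the means, so $\lambda/\beta=\lambda'/\beta'=\lambda/\beta'$. Because $\lambda>0$, it follows that $\beta=\beta'$.

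As a cross-check, one could instead use the Laplace transform
\[
\mathbb{E}[e^{-sY}]=\exp\{-\lambda(1-\beta/(\beta+s))\}=\exp\{-\lambda s/(\beta+s)\},
\]
and argue that the function $s\mapsto \lambda s/(\beta+s)$ determines $(\lambda,\beta)$: its limit as $s\to\infty$ is $\lambda$, and its derivative at $0$ is $\lambda/\beta$.

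There is no real obstacle. The only point requiring care is justifying that the Gamma mixture component places no mass at $0$. This holds because each $\mathrm{Gam}(n,\beta)$ with $n\ge1$ is absolutely continuous, and countable mixtures preserve this property. Consequently the atom at zero is exactly $e^{-\lambda}$, and the positivity of $\lambda$ is what allows us to cancel $\lambda$ in the mean identity.
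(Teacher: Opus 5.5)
Your proof is correct. The first step, reading $\lambda$ off the atom $\Pr(Y=0)=e^{-\lambda}$, is the same as in the paper, and you are more careful than the paper in justifying that the Gamma part of the mixture puts no mass at $0$.

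The second step takes a different route. The paper identifies $\beta$ by computing the Laplace transform $\mathbb{E}[e^{-tY}]=\exp\{-\lambda t/(\beta+t)\}$ through the Poisson generating function and equating it for all $t>0$. You instead use the single scalar identity $\mathbb{E}[Y]=\lambda/\beta$.

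Your route is more elementary and needs only the first moment, which is finite here. The paper's Laplace transform is always finite and determines the whole law, so that argument does not depend on any moment condition. As your cross-check suggests, it also contains your first step, since $\lim_{t\to\infty}\mathbb{E}[e^{-tY}]=\Pr(Y=0)=e^{-\lambda}$. Both arguments use $\lambda>0$ at the same point, to cancel $\lambda$ before concluding $\beta=\beta'$.
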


Let $\boldsymbol{\Lambda}, \mathbf{B} \in \mathbb{R}_+^{I_1\times\cdots\times I_M}$ denote the intensity and rate tensors with entries $\lambda_{\mathbf{i}}$ and $\beta_{\mathbf{i}}$. As an entrywise distribution level result, \Cref{lem:ident_entrywise_mu_beta} uniquely determines each parameter pair $(\lambda_{\mathbf{i}}, \beta_{\mathbf{i}})$ and ensures that no simultaneous scaling can produce the same marginal data distribution. Consequently, due to conditional independence, equality of the joint sampling distribution implies entrywise equality of the full parameter tensors, meaning $\boldsymbol{\Lambda} = \tilde{\boldsymbol{\Lambda}}$ and $\mathbf{B} = \tilde{\mathbf{B}}$. To lift this result from $\boldsymbol{\Lambda}$ to its latent factors, we impose the following standard CP uniqueness condition.



\begin{condition}[Kruskal-type CP uniqueness condition]
\label{cond:cp_unique_mu}
Suppose $M\geq 3$ and the intensity tensor admits the rank-$K$ nonnegative CP representation $\lambda_{\mathbf{i}}=\sum_{k=1}^K \prod_{m=1}^M \theta^{(m)}_{i_mk}$, where $\theta^{(m)}_{i_mk}>0 $. Let $k_m$ denote the Kruskal rank of the factor matrix in mode $m$. Assume the factor matrices satisfy the Kruskal-type condition $\sum_{m=1}^M k_m \ge 2K + (M-1)$. Under this condition, the CP representation of $\lambda_{\mathbf{i}}$ is essentially unique up to permutation and scaling~\citep{kruskal1977threeway,sidiropoulos2000uniqueness}.
\end{condition}


Combining the entrywise PRG identifiability with the CP uniqueness condition yields global structural identifiability of the full rate tensor and the latent CP factors:

\begin{cor}[Global identifiability of rate tensors and CP factors] \label{cor:ident_cp_full}
Consider the observed data model induced by \cref{observation-mix-gamma,ijat-Pois-aggregation}, parameterized by the rate tensor $\mathbf{B}$ and the collection of CP factor matrices $\{\Theta^{(m)}\}_{m}$. Suppose that the intensity tensor $\boldsymbol{\Lambda}$ satisfies Condition~\ref{cond:cp_unique_mu}. If two parameter configurations $\left(\{\Theta^{(m)}\}_{m}, \mathbf{B} \right)$ and $\left( \{\tilde{\Theta}^{(m)}\}_{m}, \tilde{\mathbf{B}}\right)$ induce the same joint distribution of the observed tensor $\mathcal{Y}$, then $\tilde{\mathbf{B}} = \mathbf{B}$. Moreover, the two collections of CP factor matrices agree up to the usual permutation and scaling indeterminacies: there exist a permutation $\pi$ of $\{1,\ldots,K\}$ and positive constants $c_{mk}>0$ satisfying $\prod_{m=1}^M c_{mk}=1$ for each $k=1,\ldots,K,$ such that$$ \tilde{\theta}^{(m)}_{i_mk} = c_{mk}\, \theta^{(m)}_{i_m\pi(k)} \qquad \text{for all } m,\ i_m,\ k . $$\end{cor}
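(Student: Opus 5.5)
The argument has three steps: pass from the joint law of $\mathcal{Y}$ to its one-dimensional marginals, apply \Cref{lem:ident_entrywise_mu_beta} entrywise, and then invoke the Kruskal uniqueness theorem behind \Cref{cond:cp_unique_mu}.

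\textbf{Step 1: reduce to marginals.} Given the factors and $\mathbf{B}$, the model in \cref{observation-mix-gamma,ijat-Pois-aggregation} makes the entries $y_{\mathbf{i}}$ independent. Each has law $\mathrm{PRG}(\lambda_{\mathbf{i}},\beta_{\mathbf{i}})$, with $\lambda_{\mathbf{i}}=\sum_k\prod_m\theta^{(m)}_{i_mk}$. Suppose the two configurations induce the same joint distribution of $\mathcal{Y}$. Then every one-dimensional marginal agrees, so
\[
\mathrm{PRG}(\lambda_{\mathbf{i}},\beta_{\mathbf{i}})=\mathrm{PRG}(\tilde\lambda_{\mathbf{i}},\tilde\beta_{\mathbf{i}})\quad\text{for every } \mathbf{i}.
\]
Both $\lambda_{\mathbf{i}}$ and $\tilde\lambda_{\mathbf{i}}$ are strictly positive, because \Cref{cond:cp_unique_mu} assumes positive factor entries. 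This is what the lemma requires. For $\tilde\lambda_{\mathbf{i}}$ this needs the competing configuration to lie in the same positive parameter space; I would state that explicitly as part of the parameter space.

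\textbf{Step 2: entrywise identifiability.} Applying \Cref{lem:ident_entrywise_mu_beta} at each $\mathbf{i}$ gives $\lambda_{\mathbf{i}}=\tilde\lambda_{\mathbf{i}}$ and $\beta_{\mathbf{i}}=\tilde\beta_{\mathbf{i}}$. Hence $\boldsymbol{\Lambda}=\tilde{\boldsymbol{\Lambda}}$ and $\tilde{\mathbf{B}}=\mathbf{B}$, which proves the first claim.

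The pooled parameterizations, such as $\beta_{\mathbf{i}}\equiv\beta_{i_{m^\star}}$, are sub-cases of this. Equality of the cellwise values identifies the pooled parameters, since every pooled parameter appears in at least one cell.

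\textbf{Step 3: lift to the factors.} Now $\boldsymbol{\Lambda}$ has two rank-$K$ nonnegative CP representations, one given by $\{\Theta^{(m)}\}$ and one by $\{\tilde\Theta^{(m)}\}$. The representation $\{\Theta^{(m)}\}$ satisfies the Kruskal condition $\sum_m k_m\ge 2K+M-1$. By the Kruskal--Sidiropoulos--Bro theorem, any other rank-$K$ decomposition of the same tensor equals it up to a column permutation $\pi$ and per-mode column scalings $c_{mk}$ with $\prod_m c_{mk}=1$. Note that the theorem only needs the condition on one of the two decompositions, so nothing has to be assumed about $\tilde\Theta$.

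Two further points need checking here. First, the scalings are a priori real. They are positive because both sets of factors are positive, so each $c_{mk}$ is a ratio of positive entries. Second, $\prod_m c_{mk}=1$ follows from matching rank-one terms. After permutation, each rank-one component must coincide exactly. If the product of scalings differed from one, that component would be rescaled, which contradicts uniqueness of the rank-one terms (the components are linearly independent, which the Kruskal condition implies).

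The main subtlety is in Step 3. One must make sure the cited theorem applies in the form needed: a general $M$-way tensor with $M\ge3$, comparison against an arbitrary second rank-$K$ decomposition, and a conclusion at the level of factor columns rather than only rank-one terms. I would cite Sidiropoulos and Bro's $M$-way extension for this and then read off the column relation $\tilde\theta^{(m)}_{:,k}=c_{mk}\theta^{(m)}_{:,\pi(k)}$. The rank-one term $\bigcirc_m\boldsymbol{\theta}^{(m)}_{:,k}$ has all factor vectors nonzero, so each factor vector is determined up to a scalar, and this yields the stated relation.
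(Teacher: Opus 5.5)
Your proposal is correct and follows the same route as the paper's proof. Equality of the joint law gives equality of every entrywise $\mathrm{PRG}(\lambda_{\mathbf{i}},\beta_{\mathbf{i}})$ marginal, \Cref{lem:ident_entrywise_mu_beta} then yields $\boldsymbol{\Lambda}=\tilde{\boldsymbol{\Lambda}}$ and $\mathbf{B}=\tilde{\mathbf{B}}$, and Kruskal/Sidiropoulos--Bro uniqueness lifts this to the factors; your extra checks (positivity of the $c_{mk}$ and $\prod_m c_{mk}=1$) fill in details the paper leaves implicit, and you need not assume $\tilde\lambda_{\mathbf{i}}>0$ separately, since $e^{-\tilde\lambda_{\mathbf{i}}}=\mathbb{P}(y_{\mathbf{i}}=0)=e^{-\lambda_{\mathbf{i}}}<1$ already forces $\tilde\lambda_{\mathbf{i}}=\lambda_{\mathbf{i}}>0$.
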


\begin{proof}
Because the tensor entries $y_{\mathbf{i}}$ are conditionally independent given $\boldsymbol{\Lambda}$ and $\mathbf{B}$, the marginal distribution of each entry $y_{\mathbf{i}}$ depends solely on its local parameters, reducing exactly to $\mathrm{PRG}(\lambda_{\mathbf{i}}, \beta_{\mathbf{i}})$. Since equality of the joint distribution of the observed tensor $\mathcal{Y}$ inherently implies equality of all its single-entry marginal distributions, it follows that $\mathrm{PRG}(\lambda_{\mathbf{i}}, \beta_{\mathbf{i}}) = \mathrm{PRG}(\tilde{\lambda}_{\mathbf{i}}, \tilde{\beta}_{\mathbf{i}})$ for every multi-index $\mathbf{i}$. 
By \Cref{lem:ident_entrywise_mu_beta}, each marginal PRG law uniquely determines its parameters, yielding $\lambda_{\mathbf{i}} = \tilde{\lambda}_{\mathbf{i}}$ and $\beta_{\mathbf{i}} = \tilde{\beta}_{\mathbf{i}}$ for all $\mathbf{i}$. Therefore, $\boldsymbol{\Lambda} = \tilde{\boldsymbol{\Lambda}}$ and $\mathbf{B} = \tilde{\mathbf{B}}$. \Cref{cond:cp_unique_mu} then guarantees the essential uniqueness of the rank-$K$ CP representation of $\boldsymbol{\Lambda}$, completing the proof.
\end{proof}

\subsection{Two Thinning Representations} 
\label{subsec:reparameterization}

Our goal in the subsequent \cref{sec:computation} is to perform Bayesian inference for the latent factors $\{\Theta^{(m)}\}_{m}$ under the model in \cref{subsec:model}. Direct posterior computation is intractable because the likelihood couples the factors through a sum--product structure, so the full conditionals are not available in closed form. We therefore use an equivalent auxiliary-variable augmentation that preserves the original marginal model while simplifying posterior computation.
Our construction combines Poisson–multinomial allocation with a new thinning for the Gamma layer in \cref{observation-mix-gamma} that links positive magnitudes to the latent structure.

\textbf{Scheme 1: Poisson thinning.}
The first augmentation is Poisson thinning, which provides a latent-source allocation of the Poisson CP mean across the \(K\) rank-one components. This is the tensor analogue of allocative Poisson factorization: each latent count is decomposed into component-specific sub-counts that record how much of the total intensity is attributed to each CP component. In our setting, the same Poisson--multinomial identity used in Bayesian Poisson tensor factorization reviewed in \cref{sec:preliminary} applies to the latent Poisson layer, except that the total count is now the latent variable \(\eta_{\mathbf{i}}\) rather than an observed count---i.e.,
\[
\eta_{\mathbf{i}}\sim \mathrm{Pois}(\mu_{\mathbf{i}}),
\qquad
\mu_{\mathbf{i}}
=
\sum_{k=1}^K \prod_{m=1}^M \theta^{(m)}_{i_mk},
\]
where the per-component sub-counts are then defined as
\begin{equation}
\label{ijat-Pois-thin}
\eta_{\mathbf{i}}
=
\sum_{k=1}^K \eta^{(k)}_{\mathbf{i}},
\qquad
\eta^{(k)}_{\mathbf{i}}
\sim
\mathrm{Pois}\Big(\prod_{m=1}^M \theta^{(m)}_{i_mk}\Big),
\end{equation}
which preserve the original marginal model by Poisson additivity. Equivalently, conditional on the total \(\eta_{\mathbf{i}}\), the sub-counts follow a multinomial allocation,
\begin{equation}
\label{eq:eta_multinomial_thinning}
\big(\eta^{(1)}_{\mathbf{i}},\ldots,\eta^{(K)}_{\mathbf{i}}\big)
\mid \eta_{\mathbf{i}},\{\Theta^{(m)}\}_{m}
\sim
\mathrm{Mult} \Big(
\eta_{\mathbf{i}};
\tau^{(1)}_{\mathbf{i}},\ldots,\tau^{(K)}_{\mathbf{i}}
\Big) \textrm{ where } \tau^{(k)}_{\mathbf{i}}
=
\frac{\prod_{m=1}^M \theta^{(m)}_{i_mk}}
{\sum_{\ell=1}^K \prod_{m=1}^M \theta^{(m)}_{i_m\ell}}.
\end{equation}

This Poisson--multinomial thinning representation is valuable both statistically and computationally: it yields an interpretable component-wise allocation of latent events, restores the conditional structure needed for efficient factor updates, and implies that whenever \(\eta_{\mathbf{i}}=0\), all sub-counts \(\eta^{(k)}_{\mathbf{i}}\) are automatically zero almost surely, so no component-wise allocation is needed for that entry. Hence, this thinning step scales as \(O(\|\mathcal{Y}\|_0 K)\), where \(\|\mathcal{Y}\|_0\) denotes the number of nonzero tensor entries.

\textbf{Scheme 2: Gamma thinning and binary indicators.} In our Poisson-randomized Gamma specification in~\cref{subsec:model}, the latent count $\eta_{\mathbf{i}}$ plays a dual role: it determines whether $y_{\mathbf{i}}$ is zero and, conditional on being positive, acts as the Gamma shape parameter. Under Scheme~1, we thin $\eta_{\mathbf{i}}$ into component-specific sub-counts $(\eta^{(k)}_{\mathbf{i}})_{k=1}^K$. Exploiting the additivity of the Gamma distribution under a common rate, we introduce a matching decomposition of each observation,~\looseness=-1
\[
y_{\mathbf{i}} \;=\; \sum_{k=1}^K y_{\mathbf{i}}^{(k)},
\]
where, conditional on $\eta^{(k)}_{\mathbf{i}}$, the latent contributions are independent with
\begin{equation}
\label{sub-observation}
y_{\mathbf{i}}^{(k)}\mid \eta^{(k)}_{\mathbf{i}} \ \overset{ind.}{\sim}\
\begin{cases}
\delta_0, & \eta^{(k)}_{\mathbf{i}}=0,\\[2pt]
\mathrm{Gam} \big(\eta^{(k)}_{\mathbf{i}},\beta_{\mathbf{i}}\big), & \eta^{(k)}_{\mathbf{i}}>0.
\end{cases}
\end{equation}
This construction is consistent with the original model: $y_{\mathbf{i}}=0$ occurs if and only if $\eta^{(k)}_{\mathbf{i}}=0$ for all $k$, i.e., if the total count $\eta_{\mathbf{i}}=\sum_{k=1}^K \eta^{(k)}_{\mathbf{i}}$ is zero. Consequently, the probability of a zero observation is preserved under thinning, matching \cref{observation-mix-gamma,ijat-Pois-aggregation}.

This formulation, in turn, makes each nonnegative observation $y_{\mathbf{i}}>0$ now a hybrid sum of components that are partly degenerate at zero and partly Gamma contributions, so the marginal likelihood is not available in a tractable form for direct inference. We introduce auxiliary binary activity indicators
$b^{(k)}_{\mathbf{i}}\in\{0,1\}$ that switch component $k$ on or off.
Specifically, $b^{(k)}_{\mathbf{i}} =\mb{1} \{\eta^{(k)}_{\mathbf{i}}>0 \}=\mb{1} \{y^{(k)}_{\mathbf{i}}>0 \}$. 
The active set is then defined as
$ \Delta_{\mathbf{i}}:=\{ k\in \{ 1,\dots, K\}: b_{\mathbf{i}}^{(k)}=1 \}$.
We defer detailed motivation and discussion of the indicator and activity set to \cref{subsec:indicator}. Conditioning on $\Delta_{\mathbf{i}}$ and other latent variables, each positive $y_{\mathbf{i}}$ is a sum of independent Gamma random variables with the same rate. When the sequence of latent variables $(y_{\mathbf{i}}^{(k)})_{k\in \Delta_{\mathbf{i}}}$ for a given multi-index $\mathbf{i}$ consists of independent Gamma random variables sharing a common rate, their normalized vector follows a Dirichlet distribution:
\begin{align}
\big(\nicefrac{y_{\mathbf{i}}^{(k)}}{y_{\mathbf{i}}}\big)_{k\in \Delta_{\mathbf{i}}} \mid \cdot
\sim
\mathrm{Dir}\!\left( (\eta_{\mathbf{i}}^{(k)})_{k\in \Delta_{\mathbf{i}}} \right).
\label{eq:yk_dir}
\end{align}
This yields a second thinning step, which refer to throughout as Gamma--Dirichlet thinning. 

Although one could in principle work only with the first thinning scheme, it leaves an intractable product-sum term that obstructs closed-form updates. The second thinning scheme and the binary indicators decouple this term and are key to making inference practical, as will become clear in \cref{sec:computation}.


We now present the complete hierarchical model, collecting the observation model, latent structure, and priors into a single specification as follows: 
\begin{align*}
\theta^{(m)}_{i_m k} &\sim \mathrm{Gam}\!\left(\alpha,\alpha\nu^{(m)}\right), \qquad m=1,\ldots,M,
&& \text{(factor prior)}\\
\lambda^{(k)}_{\mathbf{i}} &= \prod_{m=1}^M \theta^{(m)}_{i_m k}, 
\qquad 
\eta^{(k)}_{\mathbf{i}} \sim \mathrm{Pois}\!\left(\lambda^{(k)}_{\mathbf{i}}\right),
&& \text{(rank-1 intensity and latent Poisson count)}\\
b^{(k)}_{\mathbf{i}} &= \mb{1}\!\left\{\eta^{(k)}_{\mathbf{i}}>0\right\},
&& \text{(activation indicator)}\\
y^{(k)}_{\mathbf{i}} \mid \eta^{(k)}_{\mathbf{i}} &\sim
\begin{cases}
\delta_0, & \eta^{(k)}_{\mathbf{i}}=0,\\[2pt]
\mathrm{Gam}\!\left(\eta^{(k)}_{\mathbf{i}},\beta_{\mathbf{i}}\right), & \eta^{(k)}_{\mathbf{i}}>0,
\end{cases}
&& \text{(component contribution)}\\
y_{\mathbf{i}} &= \sum_{k=1}^K y^{(k)}_{\mathbf{i}}.
&& \text{(observed entry)}
\end{align*}
This summary consolidates the reparameterizations and augmentation introduced above and will serve as the basis for posterior inference. Throughout the rest of the paper, we work with this model representation.

\section{Posterior Inference}
\label{sec:computation}

Given an observed tensor $\mathcal{Y}$, posterior inference targets the joint distribution of all unknown quantities under the hierarchical model in \cref{method}. We partition these unknowns into the binary activity indicators $\mathcal{A}\stackrel{\mathrm{def}}{=} \{ b^{(k)}_{\mathbf{i}} \}_{\mathbf{i},\,k}$ and the remaining latent variables $\mathcal{Z}\stackrel{\mathrm{def}}{=} \{y_{\mathbf{i}}^{(k)}\}_{\mathbf{i}, k} \;\cup\; \{\eta_{\mathbf{i}}^{(k)}\}_{\mathbf{i}, k} \;\cup\; \{\Theta^{(m)}\}_{m}$. Our goal is therefore to approximate the posterior $p(\mathcal{A},\mathcal{Z} \mid \mathcal{Y},\mathcal{H})$, where $\mathcal{H}\stackrel{\mathrm{def}}{=}\{\alpha\} \;\cup\; \{\boldsymbol{\nu}\} \;\cup\; \{\beta_{\mathbf{i}}\}_{\mathbf{i}}$ denotes the collection of model hyperparameters. In what follows, we treat $\mathcal{H}$ separately from the posterior updates: $\alpha$ is fixed a priori at $0.1$, while $\boldsymbol{\nu}$ and $\{ \beta_{\mathbf{i}}\}$ are updated by an empirical Bayes step. We therefore suppress its dependence in the notation.

In principle, one could perform full MCMC using the conditionals, but this becomes computationally prohibitive for high-dimensional tensors. We therefore adopt a hybrid strategy that alternates between sampling the binary indicators $\mathcal{A}$ and updating the remaining block $\mathcal{Z}$ by coordinate-ascent variational inference (CAVI) conditional on $\mathcal{A}$. We first describe the variational approximation and the general CAVI framework, then derive the model-specific updates, and finally present the sampling step for $\mathcal{A}$.

\subsection{Coordinate-ascent variational inference (CAVI)}
\label{subsec:CAVI}


Conditional on the current binary indicators $\mathcal{A}$, we approximate the conditional posterior $p(\mathcal{Z}\mid \mathcal{Y},\mathcal{A})$ by a tractable variational distribution $q(\mathcal{Z})$. Variational inference replaces the exact posterior by a simpler family of distributions and chooses the member that is closest in Kullback--Leibler divergence \(\mathrm{KL}\!\left(q(\mathcal{Z})\,\|\,p(\mathcal{Z}\mid \mathcal{Y},\mathcal{A})\right)\), or equivalently maximizes the evidence lower bound (ELBO) \citep{Jordan1999VI}, $\mathrm{ELBO}(q)=\E{q} {\log p(\mathcal{Y},\mathcal{Z}\mid \mathcal{A})}-\E{q}{\log q(\mathcal{Z})}$.
The derivation of the ELBO for our model is given in Supplementary~\ref{appendix:elbo}.

In CAVI, one assumes that $q$ factorizes over blocks of $\mathcal{Z}$, and then updates each block holding the others fixed. Under a mean-field approximation, the optimal factor for a block $\mathcal{Z}_j$ satisfies
\[
q^\star(\mathcal{Z}_j)\ \propto\ \exp\!\left\{
\E{q(\mathcal{Z}_{-j})}{\log p(\mathcal{Z}_j\mid \mathcal{Z}_{-j},\mathcal{Y},\mathcal{A})}
\right\},
\]
where $\mathcal{Z}_{-j}$ denotes all remaining blocks. Thus, each variational update has the same functional form as the corresponding complete conditional, with its parameters replaced by expectations under the current variational distribution \citep{blei2017variational}.

\subsection{Complete conditionals and CAVI updates}
\label{subsec:caviupdate}

We next provide the complete conditionals and corresponding variational updates for the latent variables $\mathcal{Z}$, conditional on the sampled indicators $\mathcal{A}$. We also give the associated empirical Bayes updates for the hyperparameters.

\textbf{(i) Update for Dirichlet mixing proportions $\pi^{(k)}_{\mathbf{i}}$ and Gamma sub-mass $y^{(k)}_{\mathbf{i}}$.}
For an entry with $y_{\mathbf{i}}>0$, define proportions over the active set $\pi^{(k)}_{\mathbf{i}}=\nicefrac{y^{(k)}_{\mathbf{i}}}{y_{\mathbf{i}}}$ for $k\in\Delta_{\mathbf{i}}$, so that $\sum_{k\in\Delta_{\mathbf{i}}}\pi^{(k)}_{\mathbf{i}}=1$ and $y^{(k)}_{\mathbf{i}}=\pi^{(k)}_{\mathbf{i}}\,y_{\mathbf{i}}$. The complete conditional is Dirichlet, as in the Gamma thinning step \cref{eq:yk_dir}, with $(\pi^{(k)}_{\mathbf{i}})_{k\in\Delta_{\mathbf{i}}} \mid \cdot\ \sim\ \mathrm{Dir}\big((\eta^{(k)}_{\mathbf{i}})_{k\in\Delta_{\mathbf{i}}}\big)$.
However, because \((\eta^{(k)}_{\mathbf{i}})_{k\in\Delta_{\mathbf{i}}}\) is itself latent, the variational distribution for \((\pi^{(k)}_{\mathbf{i}})_{k\in\Delta_{\mathbf{i}}}\) is written with free Dirichlet parameters \((\omega^{(k)}_{\mathbf{i}})_{k\in\Delta_{\mathbf{i}}}\) as  $q\big((\pi^{(k)}_{\mathbf{i}})_{k\in\Delta_{\mathbf{i}}}\big)=\mathrm{Dir}\big( (\pi^{(k)}_{\mathbf{i}})_{k\in\Delta_{\mathbf{i}}};(\omega^{(k)}_{\mathbf{i}})_{k\in\Delta_{\mathbf{i}}}\big)$.
The coordinate update is then obtained by matching these variational parameters to the expected conditional parameters, yielding $\omega^{(k)}_{\mathbf{i}}=\E{q}{\eta^{(k)}_{\mathbf{i}}}, \ k\in\Delta_{\mathbf{i}}$:
\[
\E{q}{y^{(k)}_{\mathbf{i}}}
= y_{\mathbf{i}}\,\E{q}{\pi^{(k)}_{\mathbf{i}}}
= y_{\mathbf{i}}\,
\frac{\omega^{(k)}_{\mathbf{i}}}{\sum_{\ell\in\Delta_{\mathbf{i}}}\omega^{(\ell)}_{\mathbf{i}}},
\]
and the geometric expectation used in subsequent derivations is
\[
\G{q}{y^{(k)}_{\mathbf{i}}}
=\exp\!\big(\E{q}{\log y^{(k)}_{\mathbf{i}}}\big)
= y_{\mathbf{i}}\,
\exp\!\left\{\psi(\omega^{(k)}_{\mathbf{i}})
-\psi\!\Big(\sum_{\ell\in\Delta_{\mathbf{i}}}\omega^{(\ell)}_{\mathbf{i}}\Big)\right\}.
\]

\textbf{(ii) Update for Poisson sub-counts $\eta^{(k)}$.}
For each component $k$, the conditional model
$
y^{(k)}_{\mathbf{i}}\mid \eta^{(k)}_{\mathbf{i}}
\sim \mathrm{Gam}(\eta^{(k)}_{\mathbf{i}},\beta_{\mathbf{i}}),
$
so that $y^{(k)}_{\mathbf{i}}>0$ implies $\eta^{(k)}_{\mathbf{i}}\ge 1$ almost surely.
Since the prior is $\eta^{(k)}_{\mathbf{i}}\sim \mathrm{Pois}\big(\lambda^{(k)}_{\mathbf{i}}\big)$ with
$\lambda^{(k)}_{\mathbf{i}}=\prod_{m=1}^M \theta^{(m)}_{i_m k}$ denoted as the $k-$th component CP rate,
conditioning only on the event $y^{(k)}_{\mathbf{i}}>0$ yields the zero-truncated Poisson law,
\[
\eta^{(k)}_{\mathbf{i}}\mid \big(y^{(k)}_{\mathbf{i}}>0\big)
\ \sim\ \mathrm{Pois}_+\!\left(\lambda^{(k)}_{\mathbf{i}}\right).
\]

More importantly for inference, conditioning on positive $y^{(k)}_{\mathbf{i}}$ and $\lambda^{(k)}_{\mathbf{i}}$ leads to a closed-form posterior for the integer shape parameter.
Up to constants not depending on $\eta^{(k)}_{\mathbf{i}}=n\ge 1$,
\[
\mathbb{P}\!\left(\eta^{(k)}_{\mathbf{i}}=n \mid y^{(k)}_{\mathbf{i}},\lambda^{(k)}_{\mathbf{i}},\beta_{\mathbf{i}}\right)
\ \propto\
\frac{\big(\lambda^{(k)}_{\mathbf{i}}\big)^n}{n!}\cdot
\frac{\beta_{\mathbf{i}}^n \big(y^{(k)}_{\mathbf{i}}\big)^{n-1}}{\Gamma(n)}.
\]
The normalizing constant is a modified Bessel function of the first kind, so the complete conditional is a zero-truncated Bessel distribution \citep{yuan2000bessel}:
\begin{equation}
\label{eq:mk_bessel}
\eta^{(k)}_{\mathbf{i}}\mid \cdot \ \sim\
\mathrm{Bessel}_{-1}\!\left(2\sqrt{\beta_{\mathbf{i}}\, y^{(k)}_{\mathbf{i}}\,\lambda^{(k)}_{\mathbf{i}}}\right).
\end{equation}
Following \citet{zhou2016augmentable}, if $N\sim \mathrm{Bessel}_{-1}(\alpha)$, then for $n\in\{1,2,\ldots\}$,
\[
\mathbb{P}(N=n)
=
\frac{(\nicefrac{\alpha}{2})^{2n-1}}{I_{-1}(\alpha)\,n!\,\Gamma(n)},
\qquad
I_{-1}(\alpha)
=
\sum_{n=1}^\infty \frac{(\nicefrac{\alpha}{2})^{2n-1}}{n!\,\Gamma(n)}.
\]
Since the order is fixed at $-1$ here, the conditional law is indexed only by the scalar argument $\alpha$. In this fixed-order form, the Bessel distribution is a one-parameter discrete exponential family, and more broadly the Bessel family is a power-series class of underdispersed count distributions. Accordingly, in the variational approximation we write $q\big(\eta^{(k)}_{\mathbf{i}}\big)=\mathrm{Bessel}_{-1}\!\big(\eta^{(k)}_{\mathbf{i}};\,\zeta^{(k)}_{\mathbf{i}}\big)$, where $\zeta^{(k)}_{\mathbf{i}}$ denotes the corresponding Bessel variational parameter.

For $k\in\Delta_{\mathbf{i}}$, the complete conditional in \cref{eq:mk_bessel} is zero-truncated Bessel. Let
$\lambda^{(k)}_{\mathbf{i}}=\prod_{m=1}^M \theta^{(m)}_{i_mk}$, and denote its geometric expectation by
\[
\G{q}{\lambda^{(k)}_{\mathbf{i}}} =\exp\!\big(\E{q}{\log \lambda^{(k)}_{\mathbf{i}}]}\big) = \prod_{m=1}^M \exp\!\big(\E{q}{\log \theta^{(m)}_{i_m k}}\big)
= \prod_{m=1}^M \G{q}{\theta^{(m)}_{i_m k}}.
\]
The corresponding CAVI update therefore remains in the same family:
\[ q\big(\eta^{(k)}_{\mathbf{i}}\big) = \mathrm{Bessel}_{-1}\!\big(\eta^{(k)}_{\mathbf{i}};\,\zeta^{(k)}_{\mathbf{i}}\big),
\qquad
\zeta^{(k)}_{\mathbf{i}}
=
2\sqrt{\beta_{\mathbf{i}}\,\G{q}{y^{(k)}_{\mathbf{i}}}\,\G{q}{\lambda^{(k)}_{\mathbf{i}}}}.
\]
Under the variational distribution
$q(\eta^{(k)}_{\mathbf{i}})=\mathrm{Bessel}_{-1}(\eta^{(k)}_{\mathbf{i}};\,\zeta^{(k)}_{\mathbf{i}})$, the expectation of $\eta^{(k)}_{\mathbf{i}}$ is
\[
\E{q}{\eta^{(k)}_{\mathbf{i}}}
=
\zeta^{(k)}_{\mathbf{i}}
\cdot
\frac{I_{0}(\zeta^{(k)}_{\mathbf{i}})}{I_{-1}(\zeta^{(k)}_{\mathbf{i}})} .
\]
From this update, the necessity of the second thinning scheme and the binary indicators becomes clear in \cref{subsec:reparameterization}. Under the first thinning scheme alone, the Bessel update depends on a sum-product term inside the square root, $\eta^{(k)}_{\mathbf{i}} \mid \cdot \sim\
\mathrm{Bessel}_{-1}\!\left(
2\sqrt{
\beta_{\mathbf{i}}\, y^{(k)}_{\mathbf{i}}
\sum_{k=1}^K \prod_{m=1}^M \theta^{(m)}_{i_m k}}
\right)$,
so inference requires approximating $\mathbb{G}_{q}\!\left[\sum_{k=1}^K \prod_{m=1}^M \theta^{(m)}_{i_m k}\right]$. In our experiments not reported here, delta-method approximations, Monte Carlo estimates, and arithmetic-mean substitutes were all either inaccurate or computationally expensive. The second thinning scheme decouples this bottleneck and restores closed-form inference.


\textbf{(iii) Update for factor matrices $\Theta$.}
Under the Poisson allocation $\eta^{(k)}_{\mathbf{i}}\sim\text{Pois}(\lambda^{(k)}_{\mathbf{i}})$ and the Gamma prior $\theta^{(m)}_{i_m  k}\sim \text{Gam}(\alpha,\alpha\nu^{(m)})$, the complete conditional of $\theta^{(m)}_{i_m  k}$ is Gamma by Poisson--Gamma conjugacy 
\begin{equation}
\label{eq:gamma_posterior_BGPGTF}
\theta^{(m)}_{i_m  k}\mid \cdot\ \sim\ \text{Gam}\!\left(
\alpha+\sum_{{\mathbf{i}_{-m}}} \eta^{(k)}_{\mathbf{i}},
\ \ 
\alpha \nu^{(m)}+\sum_{\mathbf{i}_{-m}} \prod_{\ell\neq m}\theta^{(\ell)}_{i_\ell  k}
\right).
\end{equation}
Indeed, collecting only the terms that depend on $\theta^{(m)}_{i_m  k}$,
\[
\log p(\theta^{(m)}_{i_m  k}\mid \cdot )
= \Big(\alpha-1+\sum_{\mathbf{i}_{-m}}\eta^{(k)}_{\mathbf{i}}\Big)\log \theta^{(m)}_{i_m  k}
-\Big(\alpha\nu^{(m)}+\sum_{\mathbf{i}_{-m}}\prod_{\ell\neq m}\theta^{(\ell)}_{i_\ell  k}\Big)\theta^{(m)}_{i_m  k}
+\text{const}.
\]
Taking expectations with respect to the variational distribution, the CAVI update for $\theta^{(m)}_{i_m k}$ remains in the Gamma family $q\big(\theta^{(m)}_{i_m k}\big) = \mathrm{Gam}\big(\theta^{(m)}_{i_m k};\gamma^{(m)}_{i_m k},\delta^{(m)}_{i_m k}\big)$,
with parameters updated as
\[
\gamma^{(m)}_{i_m k}
=
\alpha+\sum_{\mathbf{i}_{-m}}\E{q}{\eta^{(k)}_{\mathbf{i}}},
\qquad
\delta^{(m)}_{i_m k}
=
\alpha\nu^{(m)}
+
\sum_{\mathbf{i}_{-m}}\prod_{\ell\neq m}\E{q}{\theta^{(\ell)}_{i_\ell k}},
\]
where the product of expectations follows from the mean-field factorization across modes.
With $\psi(\cdot)$ denotes the digamma function, the arithmetic and geometric expectations are
\[
\E{q}{\theta^{(m)}_{i_m k}}
=
\frac{\gamma^{(m)}_{i_m k}}{\delta^{(m)}_{i_m k}},
\qquad
\G{q}{\theta^{(m)}_{i_m k}}
=
\exp\!\Big(\E{q}{\log \theta^{(m)}_{i_m k}}\Big)
=
\frac{\exp\!\big(\psi(\gamma^{(m)}_{i_m k})\big)}{\delta^{(m)}_{i_m k}}.
\]

\textbf{(iv) Empirical Bayes updates for $\beta_{i_{m^\star}}$ and $\boldsymbol{\nu}$.}
The Gamma rate parameters $\beta_{\mathbf{i}}$ and $\nu^{(m)}$ are not assigned priors, so we update them by empirical Bayes, i.e., by maximizing the ELBO with respect to these parameters while holding the current variational distribution fixed \citep{schein2015bayesian}. 
This yields closed-form updates of moment type:
\begin{align}
\nu^{(m)} \ =\ \Big(\tfrac{1}{I_m K}\sum_{i_m=1}^{I_m}\sum_{k=1}^K \E{q}{\theta^{(m)}_{i_m  k}}\Big)^{-1},
\qquad m=1,\ldots,M. \label{eq:beta_n}
\end{align}

For the Gamma rate parameter, the entrywise update takes the form $\beta_{\mathbf{i}} =\nicefrac{\sum_{k=1}^K \E{q}{\eta^{(k)}_{\mathbf{i}}}} {y_{\mathbf{i}}}$.
More generally, when the rate parameter is constrained to be shared across a collection of tensor entries, the corresponding update is obtained by pooling the numerator and denominator over all entries in that collection. In particular, if the rate parameter is assumed to depend only on a designated mode $m^\star$, then the update becomes
\begin{align}
\beta_{i_{m^\star}}
=
\frac{\sum_{k=1}^K \sum_{i_{-m^\star}} \E{q}{\eta^{(k)}_{\mathbf{i}}}}
{\sum_{i_{-m^\star}} y_{\mathbf{i}}},
\label{eq:beta_a}
\end{align}
where $\sum_{i_{-m^\star}}$ denotes summation over all indices except $i_{m^\star}$.


\subsection{Sampling binary indicators}
\label{subsec:indicator}

The binary indicator $b^{(k)}_{\mathbf{i}}$ is updated alternately with the remaining latent variables and hyperparameters. Throughout this subsection, latent quantities such as $\eta^{(k)}_{\mathbf{i}}$ denote their current variational expectations from \cref{subsec:caviupdate}.

This augmented variable addresses a computational bottleneck at the zero--positive boundary. Under the Poisson-randomized Gamma construction (\cref{subsec:reparameterization}), $y_{\mathbf{i}}^{(k)}=0 \iff \eta^{(k)}_{\mathbf{i}}=0$, coupling their supports and rendering the full conditionals degenerate on the boundary. Coordinate-ascent or Gibbs updates that treat these variables separately inherit near-reducibility at this interface: in MCMC, components exhibit poor mixing across the boundary; in mean-field variational inference, the independent factorization assumption conflicts with the coupled support, promoting excessively sparse solutions.

When $y_{\mathbf{i}}=0$, the update is deterministic: $b^{(k)}_{\mathbf{i}}=0$ for all $k$. When $y_{\mathbf{i}}>0$, we sample $b^{(1)}_{\mathbf{i}},\ldots,b^{(K)}_{\mathbf{i}}$ sequentially, each conditioned on $\{b^{(\ell)}_{\mathbf{i}}\}_{\ell\neq k}$, thereby avoiding marginalization over $2^{K-1}$ activation patterns. Crucially, each $b^{(k)}_{\mathbf{i}}$ is drawn from a collapsed conditional that marginalizes out $(y^{(k)}_{\mathbf{i}},\eta^{(k)}_{\mathbf{i}})$; conditioning on either variable directly would render the update degenerate, reintroducing the boundary stickiness the augmentation is designed to eliminate.

Let $y_{\mathbf{i}}^{(-k)}=\{y^{(\ell)}_{\mathbf{i}}\}_{\ell\neq k}$ and $b_{\mathbf{i}}^{(-k)}=\{b^{(\ell)}_{\mathbf{i}}\}_{\ell\neq k}$ denote the collections excluding component $k$, and define $s_{\mathbf{i}}^{(-k)}\;=\;\sum_{\ell\neq k} b^{(\ell)}_{\mathbf{i}}\,\eta^{(\ell)}_{\mathbf{i}}$.
Then, for $y_{\mathbf{i}}>0$,
\begin{equation}
\label{eq:b-bernoulli-update}
\mathbb{P}\!\left(b^{(k)}_{\mathbf{i}}=0 \mid y_{\mathbf{i}},  s_{\mathbf{i}}^{(-k)}, \lambda^{(k)}_{\mathbf{i}} \right)
=\frac{w_0}{w_0+w_1},
\end{equation}
where the two weights correspond to the two competing explanations of $y_{\mathbf{i}}$:
\begin{align*}
w_0
&=\mathbb{P}\!\left(\eta^{(k)}_{\mathbf{i}}=0\mid \lambda^{(k)}_{\mathbf{i}}\right)\,
\text{Gam} \!\left(y_{\mathbf{i}};\, s_{\mathbf{i}}^{(-k)},\, \beta_{\mathbf{i}}\right)
=\exp\!\left(-\lambda^{(k)}_{\mathbf{i}}\right)\ \text{Gam}\!\left(y_{\mathbf{i}};\, s_{\mathbf{i}}^{(-k)},\, \beta_{\mathbf{i}}\right),\\[4pt]
w_1
&=\mathbb{P}\!\left(\eta^{(k)}_{\mathbf{i}}\ge 1\mid \lambda^{(k)}_{\mathbf{i}}\right)\,
f_{y\mid b^{(k)}=1}\!\left(y_{\mathbf{i}}\mid s_{\mathbf{i}}^{(-k)},\lambda^{(k)}_{\mathbf{i}},\beta_{\mathbf{i}}\right)   =\bigl(1-e^{-\lambda^{(k)}_{\mathbf{i}}}\bigr)\, f_{y\mid b^{(k)}=1}\!\left(y_{\mathbf{i}}\mid s_{\mathbf{i}}^{(-k)},\lambda^{(k)}_{\mathbf{i}},\beta_{\mathbf{i}}\right).
\end{align*}
Here $f_{y\mid b^{(k)}=1}$ is the marginal density of $y_{\mathbf{i}}$ when component $k$ is active, obtained by convolving the PRG term of the active component with the Gamma contribution of the others:
\begin{equation}
\label{eq:fy_active_conv}
f_{y\mid b^{(k)}=1}\!\left(y\mid s_{\mathbf{i}}^{(-k)},\lambda^{(k)}_{\mathbf{i}},\beta_{\mathbf{i}}\right)
=
\int_0^{y} \mathrm{PRG}_{+}\!\left(y-\tilde{y};\,\lambda^{(k)}_{\mathbf{i}},\beta_{\mathbf{i}}\right)\, \text{Gam}\!\left(\tilde{y};\,s_{\mathbf{i}}^{(-k)},\beta_{\mathbf{i}}\right)\,d\tilde{y},
\end{equation}
where $\mathrm{PRG}_{+}$ denotes the zero-truncated Poisson--Gamma mixture density induced by
$\eta^{(k)}\sim \mathrm{Pois}_{+}(\lambda^{(k)}_{\mathbf{i}})$ and $y^{(k)}\mid \eta^{(k)}\sim \mathrm{Gam}(\eta^{(k)},\beta_{\mathbf{i}})$. Its closed form is given in Supplementary~\ref{appendix:RGplus}. 


Together, for $y_{\mathbf{i}}>0$ the collapsed Bernoulli update \cref{eq:b-bernoulli-update} admits the closed form
\begin{equation}
\label{eq:Pb}
\mathbb{P}\!\left(b^{(k)}_{\mathbf{i}}=0 \mid y_{\mathbf{i}}, s_{\mathbf{i}}^{(-k)}, \lambda^{(k)}_{\mathbf{i}},\beta_{\mathbf{i}} \right)
=
\frac{\Big(\sqrt{\beta_{\mathbf{i}}\, y_{\mathbf{i}}\, \lambda^{(k)}_{\mathbf{i}}}\Big)^{\,s_{\mathbf{i}}^{(-k)}-1}}
{\Gamma(s_{\mathbf{i}}^{(-k)})\,
I_{s_{\mathbf{i}}^{(-k)}-1}\!\Big(2\sqrt{\beta_{\mathbf{i}}\, y_{\mathbf{i}}\, \lambda^{(k)}_{\mathbf{i}}}\Big)}.
\end{equation}

\begin{algorithm}[H]
\DontPrintSemicolon
\caption{Update of $b^{(k)}_{\mathbf{i}}$}\label{alg:sample-b}
\uIf{$s_{\mathbf{i}}^{(-k)}=0$}{%
  $b^{(k)}_{\mathbf{i}} \gets 1$\;
}\uElseIf{$s_{\mathbf{i}}^{(-k)}>20$}{%
  $p^{(k)}_{\mathbf{i}} \gets 1-\exp(- \nicefrac{\beta_{\mathbf{i}} y_{\mathbf{i}}\lambda^{(k)}_{\mathbf{i}}}{s_{\mathbf{i}}^{(-k)}})$\;
  Sample $b^{(k)}_{\mathbf{i}} \sim \mathrm{Bernoulli}(p^{(k)}_{\mathbf{i}})$\;
}\Else{%
  Sample $b^{(k)}_{\mathbf{i}}$ from \cref{eq:Pb}\;
}
\end{algorithm}

The Bessel function in~\Cref{eq:Pb} is implemented in libraries like \texttt{scipy.special.ive} \citep{SciPyIveDocs} but can be expensive to compute for large $s_{\mathbf{i}}^{(-k)}$. We thus adopt the hybrid sampling rule in~\Cref{alg:sample-b}.

The approximate branch of \Cref{alg:sample-b} is justified by \Cref{prop1} below, which shows that as $s_{\mathbf{i}}^{(-k)}\to\infty$ the exact conditional converges to:
\[
    b^{(k)}_{\mathbf{i}}=\mathbf{1}(\kappa^{(k)}_{\mathbf{i}} >0),\quad \kappa^{(k)}_{\mathbf{i}} \sim \operatorname{Pois}\!\left( 
    \nicefrac{\beta_{\mathbf{i}} y_{\mathbf{i}} \lambda^{(k)}_{\mathbf{i}}}{s_{\mathbf{i}}^{(-k)} }
     \right),
\]
corresponding to the Bernoulli--Poisson link of \cite{zhou2015infinite}. Although the result is asymptotic, the approximation is accurate for moderate $s_{\mathbf{i}}^{(-k)}$, motivating the threshold $20$ above.

\begin{proposition}[Bernoulli--Poisson approximation]
\label{prop1}
Let $\lambda,s,\beta>0$, and let $Y=Y_1+Y_2$ with
$Y_1\sim\mathrm{Gam}(s,\beta)$, $\eta\sim\mathrm{Pois}(\lambda)$, and
$Y_2\mid\eta\sim\mathrm{Gam}(\eta,\beta)$, interpreted as $\delta_0$ when $\eta=0$.
Set $Z=\mb{1}(Y_2>0)$, and let $\widetilde Z=\mb{1}(X>0)$ with
$X\mid Y=y\sim\mathrm{Pois}\bigl( \nicefrac{\lambda\beta y}{s}\bigr)$. Then:
\[
d_{\mathrm{TV}}\!\left(\mathcal{L}(Z),\,\mathcal{L}(\widetilde Z)\right)\longrightarrow 0
\qquad\text{as }s\to\infty,
\]
so the conditional law of $Z$ is asymptotically equivalent to that of its approximation $\widetilde Z$.
\end{proposition}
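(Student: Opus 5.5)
The plan is to reduce everything to a comparison of two Bernoulli parameters conditional on $Y=y$, and then to average over $Y$. Both $Z$ and $\widetilde Z$ are $\{0,1\}$-valued. For Bernoulli laws the total variation distance is $|P(Z=0\mid Y=y)-P(\widetilde Z=0\mid Y=y)|$. Moreover, the TV distance between the joint (or marginal) laws of $(Y,Z)$ and $(Y,\widetilde Z)$ is at most the $Y$-average of this conditional distance. So it suffices to show
\[
\Delta_s(y)\;:=\;\bigl|P(Z=0\mid Y=y)-e^{-\lambda\beta y/s}\bigr|
\]
is small uniformly on a set of $y$ values carrying probability tending to one.

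\textbf{Exact conditional.} By Bayes' rule, exactly as in the derivation of \cref{eq:Pb} with $s_{\mathbf{i}}^{(-k)}=s$, we have $P(Z=0\mid Y=y)=(\sqrt{c})^{s-1}/\{\Gamma(s)\,I_{s-1}(2\sqrt{c})\}$ with $c=\lambda\beta y$. I will substitute the power series $I_{s-1}(2\sqrt c)=\sum_{n\ge0} c^{\,n+(s-1)/2}/\{n!\,\Gamma(s+n)\}$. The prefactors cancel and give
\[
P(Z=0\mid Y=y)=\Bigl(\sum_{n\ge0}\frac{c^n}{n!\,(s)_n}\Bigr)^{-1},\qquad (s)_n=s(s+1)\cdots(s+n-1).
\]
Writing $u=c/s=\lambda\beta y/s$, the sum becomes $S_s(u)=\sum_n \frac{u^n}{n!}\prod_{j<n}\frac{s}{s+j}$. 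The approximating quantity is $e^{-u}=(\sum_n u^n/n!)^{-1}$. Each factor $\prod_{j<n} s/(s+j)$ lies in $(0,1]$ and tends to $1$ as $s\to\infty$ for fixed $n$. Since the terms are dominated by $u^n/n!$, dominated convergence gives $S_s(u)\to e^u$, uniformly for $u$ in any compact set $[0,U]$ (using the tail bound $\sum_{n>N}U^n/n!$). Both $S_s(u)$ and $e^{u}$ are at least $1$, so $|1/S_s-e^{-u}|\le |S_s-e^u|$. Hence $\sup_{u\le U}\Delta_s\to0$.

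\textbf{Controlling the range of $u$.} The main obstacle is that $c$ itself grows with $s$, because $Y$ is of order $s/\beta$. This is why the statement rescales by $s$, and why uniformity in $u$ rather than in $c$ is needed. Under the generative model, $Y=Y_1+Y_2$ with $Y_1/s\to1/\beta$ in probability (since $\mathrm{Var}(Y_1/s)=1/(s\beta^2)$), while $Y_2$ has a law that does not depend on $s$ and is therefore tight. Thus $u=\lambda\beta Y/s\to\lambda$ in probability, and $P(u>2\lambda)\to0$.

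\textbf{Conclusion.} Splitting the expectation of $\Delta_s(Y)$ on $\{u\le 2\lambda\}$ and its complement, and using $\Delta_s\le1$ on the complement, gives
\[
\mathbb{E}\,\Delta_s(Y)\le \sup_{u\le2\lambda}\Delta_s+P(u>2\lambda)\to0 .
\]
This yields the claimed convergence in total variation. The same argument gives the stronger statement that the conditional laws agree asymptotically, uniformly over $y$ with $\beta y/s$ bounded.
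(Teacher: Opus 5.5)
Your proof is correct, and it reaches the statement by a different and stronger route than the paper. The paper's supplementary proof works at a fixed observation $y$. It sets $x=\lambda\beta y$, writes $A_m^{-1}=\sum_k \frac{x^k}{k!}\frac{\Gamma(m)}{\Gamma(m+k)}$ and $B_m^{-1}=e^{x/m}$, and uses $\Gamma(m)/\Gamma(m+k)\le m^{-k}$ with Tannery's theorem to get $B_m^{-1}-A_m^{-1}\to 0$. It then concludes via $|A_m-B_m|=A_mB_m\,|A_m^{-1}-B_m^{-1}|$. Your $S_s(u)$ is the same series rewritten in $u=x/s$, and your dominated-convergence step is the same termwise comparison. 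The difference is the limit regime: you keep $u$ (rather than $x$) in a compact set, and then average over $Y$ using $Y/s\to 1/\beta$ in probability.

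The paper's version needs nothing about the law of $Y$, and it matches the full statement's reading (conditional on a fixed $Y=y$). In that regime, however, both success probabilities vanish. Indeed $B_m\le A_m\le 1$ already gives $0\le \widetilde p-p\le 1-e^{-x/s}\to 0$, so the pointwise result says little about where the approximation is actually used.

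Your uniform-in-$u$ bound contains the fixed-$y$ case, since there $u\to 0$. It also covers $y\asymp s/\beta$, the typical size of $Y$ under the model, where $p$ stays bounded away from $0$. This is the regime that justifies the $s^{(-k)}_{\mathbf{i}}>20$ branch of Algorithm~\ref{alg:sample-b}, and it yields the averaged (joint or marginal) total-variation statement as well.

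Your route also becomes quantitative at no extra cost. Since $1-\prod_{j<n}\frac{s}{s+j}\le\sum_{j<n}\frac{j}{s}=\frac{n(n-1)}{2s}$, you get $0\le e^u-S_s(u)\le \frac{u^2e^u}{2s}$, and hence $\Delta_s(y)\le \frac{u^2}{2s}$. This is an $O(1/s)$ bound, and it can be averaged directly because $\mathbb{E}[(\lambda\beta Y/s)^2]\to\lambda^2$, with no need to truncate at $u\le 2\lambda$.
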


Explicit expressions for the two conditional Bernoulli success probabilities, together with the proof, are given in Supplementary~\ref{appendix:proof_proposition1}.
\Cref{alg:sample-b}, together with the CAVI updates for $(y^{(k)},\eta^{(k)},\Theta)$, constitutes the full hybrid inference procedure summarized in \Cref{alg:thinGam}.

\begin{algorithm}[t]
\caption{Hybrid CAVI-MC algorithm for BPRGTF}\label{alg:thinGam}
\DontPrintSemicolon
\SetAlgoNoEnd  
\renewcommand{\baselinestretch}{1}\selectfont
\KwIn{$\mathcal{Y}=\{y_{\mathbf{i}}\}$, rank $K$, hyperparameter $\alpha=0.1$}
\KwOut{$q(\Theta), q(\eta), q(\pi)$ and indicators $b$}
\KwData{Initialize $\{\gamma^{(m)}_{i_m k},\delta^{(m)}_{i_m k}\}$, $\{\omega^{(k)}_{\mathbf{i}}\}$, $\{\zeta^{(k)}_{\mathbf{i}}\}$, $\{b^{(k)}_{\mathbf{i}}\}$}
\While{ELBO not converged \textnormal{(see Supplementary~\ref{appendix:elbo})}}{
  \ForEach{observed tensor entry $\mathbf{i}$ with $y_{\mathbf{i}}>0$}{
    \For(\tcp*[f]{Update Bernoulli $b^{(k)}_{\mathbf{i}}$}){$k \in \{1,\ldots,K\}$}{
      Sample $b^{(k)}_{\mathbf{i}}$ and update $\Delta_{\mathbf{i}}$ via \cref{alg:sample-b}\;
    }
  }
  \ForEach{observed tensor entry $\mathbf{i}$ with $y_{\mathbf{i}}>0$}{
    \For(\tcp*[f]{Update $\mathrm{Dirichlet}((\nicefrac{y^{(k)}_{\mathbf{i}}}{y_{\mathbf{i}}})_{k\in\Delta_{\mathbf{i}}};\,(\omega^{(k)}_{\mathbf{i}})_{k\in\Delta_{\mathbf{i}}})$}){active component $k\in\Delta_{\mathbf{i}}$}{
      $\omega^{(k)}_{\mathbf{i}} \leftarrow \E{q}{\eta^{(k)}_{\mathbf{i}}}$\;
      $\E{}{y^{(k)}_{\mathbf{i}}} \leftarrow y_{\mathbf{i}}\,\tfrac{\omega^{(k)}_{\mathbf{i}}}{\sum_{\ell\in\Delta_{\mathbf{i}}}\!\omega^{(\ell)}_{\mathbf{i}}}$\;
      $\G{}{y^{(k)}_{\mathbf{i}}} \leftarrow y_{\mathbf{i}}\exp\!\big\{\psi(\omega^{(k)}_{\mathbf{i}}) - \psi(\textstyle\sum_{\ell}\omega^{(\ell)}_{\mathbf{i}})\big\}$\;
    }
  }
  \ForEach{observed tensor entry $\mathbf{i}$ with $y_{\mathbf{i}}>0$}{
    \For(\tcp*[f]{Update $\mathrm{Bessel}_{-1}(\eta^{(k)}_{\mathbf{i}};\zeta^{(k)}_{\mathbf{i}})$}){active component $k\in\Delta_{\mathbf{i}}$}{
      $\zeta^{(k)}_{\mathbf{i}} \leftarrow 2\sqrt{\beta_{\mathbf{i}}\,\G{}{y^{(k)}_{\mathbf{i}}}\cdot \prod_{m=1}^M \G{}{\theta^{(m)}_{i_m k}}}$\;
      $\E{q}{\eta^{(k)}_{\mathbf{i}}} \leftarrow \tfrac{1}{2}\zeta^{(k)}_{\mathbf{i}}\, \tfrac{I_0(\zeta^{(k)}_{\mathbf{i}})}{I_{-1}(\zeta^{(k)}_{\mathbf{i}})}$\;
    }
  }
  \For{component $k \in \{1, \dots, K\}$}{
    \For{mode $m \in \{1,\dots, M\}$}{
      \For(\tcp*[f]{Update $\mathrm{Gamma}(\theta^{(m)}_{i_m k};\gamma^{(m)}_{i_m k},\delta^{(m)}_{i_m k})$}){entity $i_m \in \{1,\dots, I_m\}$}{
        $\gamma^{(m)}_{i_m k} \leftarrow \alpha + \sum_{\mathbf{i}_{-m}} \E{q}{\eta^{(k)}_{\mathbf{i}}}$\;
        $\delta^{(m)}_{i_m k} \leftarrow \alpha\nu^{(m)} + \sum_{\mathbf{i}_{-m}} \prod_{\ell\neq m}\E{}{\theta^{(\ell)}_{i_\ell k}}$\;
        $\E{}{\theta^{(m)}_{i_m k}} \leftarrow \frac{\gamma^{(m)}_{i_m k} }{\delta^{(m)}_{i_m k}}$\;
        $\G{}{\theta^{(m)}_{i_m k}} \leftarrow \frac{\exp(\psi(\gamma^{(m)}_{i_m k}))}{\delta^{(m)}_{i_m k}}$\;
      }
    }
  }
}
\end{algorithm}

\section{Synthetic Simulation}
\label{sec:simulation}





In this section, we evaluate the proposed model against recent baselines in a series of controlled synthetic experiments designed to assess predictive performance under the data-generating process described in \cref{method}. The same experimental setup, baselines, and reconstruction method are reused in \cref{subsec:trading-oos} for out-of-sample prediction on the real trade data.

\textbf{Data.} We generate a four-way tensor $\mathcal{Y} \in \mathbb{R}_{\ge 0}^{I \times J \times A \times T}$ with rank $R=20$ and mode dimensions $(I,J,A,T) = (50,40,30,10)$. We first construct four factor matrices $\Theta^{(1)} {\in} \mathbb{R}_{\ge 0}^{I \times R}$, $\Theta^{(2)} {\in} \mathbb{R}_{\ge 0}^{J \times R}$, $\Theta^{(3)} {\in} \mathbb{R}_{\ge 0}^{A \times R}$, and $\Theta^{(4)} \in \mathbb{R}_{\ge 0}^{T \times R}$ by first drawing each entry from $\mathrm{Gam}(1,1)$ and then imposing sparsity by setting each entry to zero with probability $0.3$. We then define $\lambda_{ijat}{=}\sum_{r=1}^R \theta_{ir}^{(1)}\theta_{jr}^{(2)}\theta_{ar}^{(3)}\theta_{tr}^{(4)}$, sample $\eta_{ijat}\sim\mathrm{Pois}(\lambda_{ijat})$, and set $y_{ijat}{=}0$ if $\eta_{ijat}{=}0$ and $y_{ijat}{\sim}\mathrm{Gam}(\eta_{ijat},\beta_a)$ otherwise, where $(\beta_a)_{a=1}^A$ is set to an equally spaced grid from $0.01$ to $0.1$. For each tensor, we then generate 30 random 80-20 train-test splits.~\looseness=-1 

\textbf{Models.} We compare against two baselines for nonnegative tensors with zeros and positive continuous values: non-negative tensor factorization minimizing the least squared loss (NTF-LS) and minimizing the $\beta$-divergence with $\beta\!=\!1.5$ (NTF-$\beta$), with the latter corresponding to a Tweedie loss. To fit NTF-LS we use the implementation in \textsc{Tensorly}~\citep{JMLRtensorly} and for NTF-$\beta$ we use \textsc{NNEinFact}~\citep{hood2026near}. We also consider two variants of our method: \textit{shared}, which uses a common Gamma rate $\beta_{\mathbf{i}}=\beta$ across all entries, and \textit{local}, which allows the rate to vary along one designated mode $\beta_{\mathbf{i}}=\beta_a$. Each method is fit to each train-test split of each data tensor with $K\in\{15,20,25\}$.

\textbf{Evaluation.} We evaluate performance via mean absolute error over all held-out entries (MAE) and over nonzero held-out entries (MAE-NZ), mean absolute relative error over nonzero entries (MArel-NZ), Hamming loss on zeros (HAM-Z), and the area under the ROC curve (AUC) for zero versus nonzero entries. For NTF-$\beta$, we use multiple learning rates and report the best in terms of test-set MAE. For NTF-LS and NTF-$\beta$, held-out entries are reconstructed from point-estimated factors, while for BPGPTF they are reconstructed with the geometric mean under the variational posterior, $\widehat{y}_{\mathbf{i}}=\nicefrac{\sum_{k=1}^K \prod_{m=1}^M \G{} {\theta^{(m)}_{i_m k}}}{\beta_{\mathbf{i}}}$.

%
 
\Cref{tab:synthetic_K_20} reports the mean and standard error over 30 random train--test splits, where the proposed model with the local Gamma rate is labeled \textit{local}. The \textit{local} model achieves the best predictive performance across all metrics and rank settings, with particularly large gains in the Hamming loss on zeros, indicating improved recovery of sparsity. For all methods, predictive performance generally improves as $K$ increases.

\begin{table}[!t]
\caption{\textbf{Synthetic-data predictive performance with row-wise optima bolded:} the proposed model with the local Gamma rate achieves the best performance across all metrics and scenarios. Entries are mean (SE) over 30 random 80/20 entrywise train--test splits.}
\label{tab:synthetic_K_20}
\centering
{\renewcommand{\baselinestretch}{1}\selectfont
\setlength{\tabcolsep}{6pt}
\begin{tabular}{@{}clcccc@{}}
\toprule
 & & \multicolumn{2}{c}{Baselines} & \multicolumn{2}{c}{Proposed} \\
\cmidrule(lr){3-4} \cmidrule(lr){5-6}
$K$ & Metric & NTF-LS & NTF-$\beta$ & shared $(\beta_{\mathbf{i}}=\beta)$ & local $(\beta_{\mathbf{i}}=\beta_a)$ \\
\midrule
\multirow{5}{*}{15}
& MAE ($\downarrow$)      & 20.6 (0.5)  & 28.4 (0.8)  & 22.3 (0.7)  & \textbf{18.2 (0.5)}  \\
& MAE-NZ ($\downarrow$)   & 56.3 (0.9)  & 67.3 (1.2)  & 57.5 (1.0)  & \textbf{55.3 (0.8)}  \\
& MArel-NZ ($\downarrow$) & 4.6 (0.3)   & 5.5 (0.3)   & 5.7 (0.3)   & \textbf{3.2 (0.1)}   \\
& HAM-Z ($\downarrow$)    & 0.70 (0.01) & 0.89 (0.01) & 0.95 (0.01) & \textbf{0.43 (0.01)} \\
& AUC ($\uparrow$)        & 0.86 (0.01) & 0.82 (0.00) & 0.89 (0.00) & \textbf{0.91 (0.00)} \\
\midrule
\multirow{5}{*}{20}
& MAE ($\downarrow$)      & 19.8 (0.4)  & 26.3 (0.8)  & 20.1 (0.5)  & \textbf{16.8 (0.4)}  \\
& MAE-NZ ($\downarrow$)   & 54.3 (0.7)  & 63.8 (1.2)  & 53.7 (0.6)  & \textbf{51.6 (0.6)}  \\
& MArel-NZ ($\downarrow$) & 4.7 (0.3)   & 5.2 (0.2)   & 5.4 (0.3)   & \textbf{3.2 (0.2)}   \\
& HAM-Z ($\downarrow$)    & 0.67 (0.01) & 0.85 (0.01) & 0.89 (0.01) & \textbf{0.34 (0.01)} \\
& AUC ($\uparrow$)        & 0.88 (0.00) & 0.84 (0.01) & 0.92 (0.00) & \textbf{0.94 (0.00)} \\
\midrule
\multirow{5}{*}{25}
& MAE ($\downarrow$)      & 19.4 (0.4)  & 24.3 (0.7)  & 19.3 (0.5)  & \textbf{16.6 (0.4)}  \\
& MAE-NZ ($\downarrow$)   & 53.5 (0.7)  & 60.8 (1.1)  & 52.4 (0.6)  & \textbf{51.2 (0.5)}  \\
& MArel-NZ ($\downarrow$) & 4.8 (0.3)   & 5.2 (0.3)   & 5.3 (0.3)   & \textbf{3.3 (0.2)}   \\
& HAM-Z ($\downarrow$)    & 0.65 (0.01) & 0.81 (0.01) & 0.84 (0.01) & \textbf{0.32 (0.01)} \\
& AUC ($\uparrow$)        & 0.90 (0.00) & 0.86 (0.00) & 0.93 (0.00) & \textbf{0.94 (0.00)} \\
\bottomrule
\end{tabular}
}
\end{table}

\section{Application to International Trade Data}
\label{sec:application}

We apply the proposed model to international merchandise trade data. In \cref{subsec:data} we describe the dataset and preprocessing. We then evaluate out-of-sample predictive performance under the strong-generalization regime in \cref{subsec:trading-oos}. In \cref{subsec:eda}, we illustrate exploratory analysis through a case study of the fitted components.

\subsection{Data and preprocessing}
\label{subsec:data}

We analyze annual merchandise-trade flows from the World Integrated Trade Solution (WITS) \citep{wits2025}, a World Bank platform providing standardized access to trade data reported by national customs authorities. We use export values at the HS-1996 2-digit (HS2) level, yielding all 96 product categories (e.g., 01 Live animals, 84 Machinery, 85 Electrical machinery) for 196 countries from 1997 to 2023, measured in thousand USD. We arrange the data as a four-mode tensor $\mathcal{Y}\in\mathbb{R}_{\ge0}^{I\times J\times A\times T}$, where $i$ indexes exporters, $j$ indexes importers, $a$ indexes HS2 products, and $t$ indexes years. Each entry $y_{ijat}$ records the value of product $a$ exported from country $i$ to country $j$ in year $t$, so $y_{ijat}=0$ may reflect either a true zero flow or an unreported observation. More details are in Supplementary~\ref{app:preprocessing}.

Missingness is pervasive in international trade data, arising from reporter-year non-submission, confidentiality suppressions, reporting thresholds, and HS revisions \citep{UNComtradeMethodology,WITSUserGuide}. We preprocess the data by flagging reporter-year lapses (exporters with no positive flows in a given year), imputing short internal gaps, and retaining a binary observation mask so that structurally missing entries do not contribute to estimation; details are in Supplementary~\ref{app:masking}. The resulting tensor covers 151 countries, 27 years, 96 products, and 59,100,192 entries, of which 21.3\% are nonzero. To allow product-specific dispersion, we set the Gamma rate to vary across HS2 categories: $y_{ijat}\mid \eta_{ijat} \overset{ind.}{\sim} \mathrm{Gam}(\eta_{ijat},\beta_a)$.

\subsection{Out-of-sample predictive performance}
\label{subsec:trading-oos}

We adopt the strong-generalization regime of \cite{marlin2004collaborative}, using a leave-one-year-out design. For each year $t_0 \in \{1,\dots,T\}$, we fit the model to the tensor $\mathcal{Y}_{\neg t_0} \in \mathbb{R}_{\geq 0}^{I \times J \times A \times (T-1)}$ defined as the concatenation $\mathcal{Y}_{\neg t_0} \defeq \textrm{concat}(\{\mathcal{Y}_{:,:,:,t}\}_{t \neq t_0})$ of the set of all 3-mode slices for each year $t \neq t_0$. We then split $\mathcal{Y}_{t_0} = \mathcal{Y}_{t_0}^{\textrm{obs}} \cup  \mathcal{Y}_{t_0}^{\textrm{eval}}$ into a 75\% observed set and a 25\% evaluation set, refitting only the time-mode factor $\Theta^{(4)}$ on the observed set while fixing $\Theta^{(1)}$, $\Theta^{(2)}$, and $\Theta^{(3)}$. We consider three masking schemes: random holdout, complement-of-dense-block, and dense-block, following the evaluation design of \citet{schein2015bayesian}. The dense block is defined by the $N_0$ highest-volume countries in each held-out year; see Supplementary~\ref{app:masking} for details and an illustration.

\Cref{fig:metrics_over_K} summarizes performance across $K\in\{5,10,50,100,125\}$ for the three masking schemes; full numerical results at $K=125$ are in Supplementary~\ref{app:oos_table}. NTF-LS is competitive at small ranks but saturates as $K$ increases. Our proposed model with the local Gamma rate benefits substantially from additional rank, consistently outperforming all baselines at moderate to large $K$ across masking regimes.

\begin{figure}[ht]
  \centering
  \includegraphics[width=\textwidth]{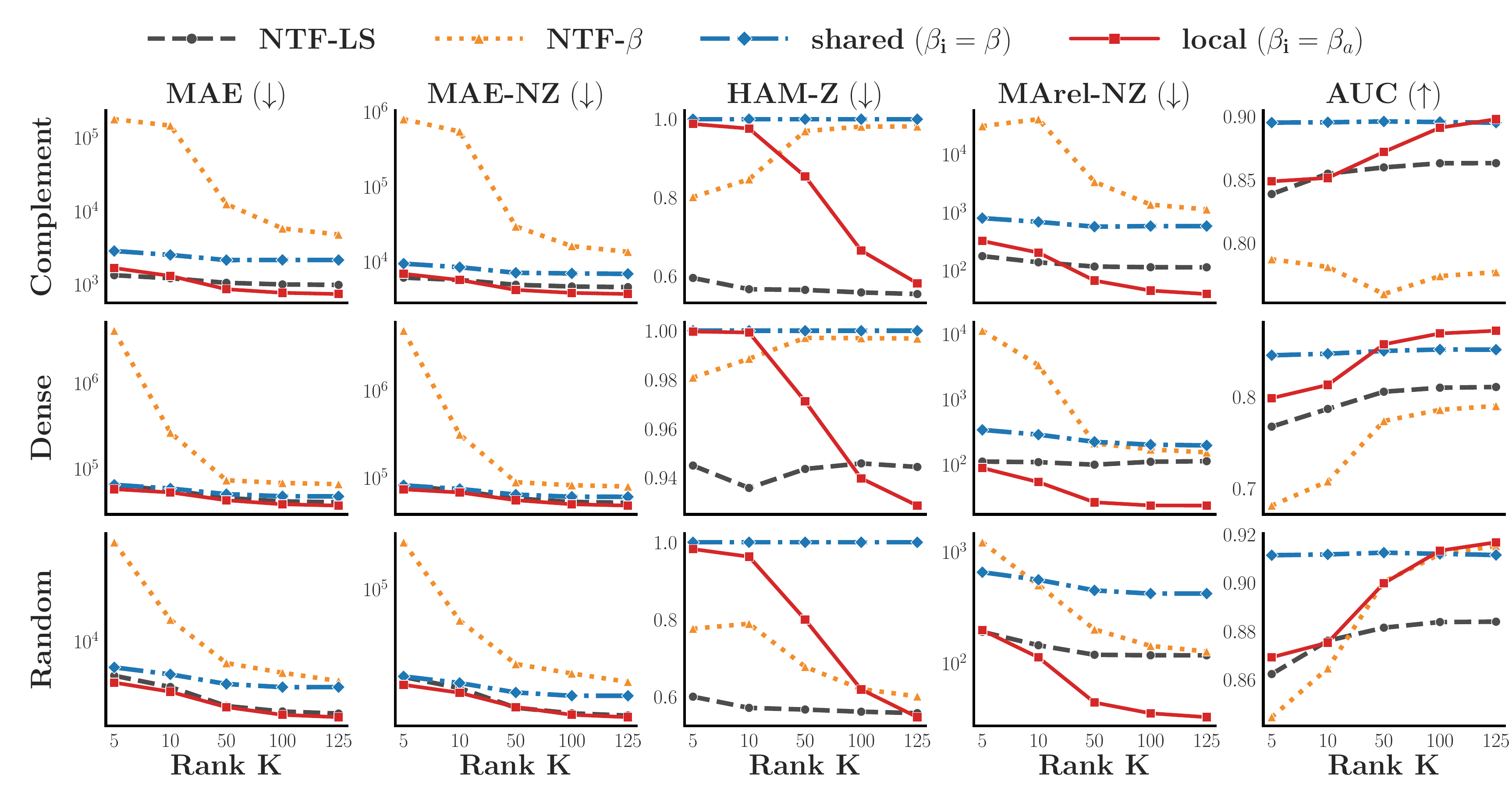}
\caption{\textbf{Real-data predictive performance across ranks and masking schemes:} the proposed model with the local Gamma rate (red) benefits from increasing rank and becomes the top performer on most metrics at moderate to large $K$.}
  \label{fig:metrics_over_K}
\end{figure}

\subsection{Exploratory case study}
\label{subsec:eda}

We now illustrate how the proposed model can be used for exploratory analysis. Since $K=125$ gives the best out-of-sample predictive performance, we refit the model to the full unmasked dataset. The estimated product-specific Gamma rates $\beta_a$ are strongly negatively associated with both the mean and variance of nonzero entries in each product subtensor (Spearman $\rho=-0.991$ and $-0.947$), consistent with heavier industries exhibiting larger and more volatile flows; see Supplementary~\ref{subsec:beta_values} for details.

For each component $k$, the nonnegative factors $\boldsymbol{\theta}_{:,k}^{(1)}$, $\boldsymbol{\theta}_{:,k}^{(2)}$, $\boldsymbol{\theta}_{:,k}^{(3)}$, $\boldsymbol{\theta}_{:,k}^{(4)}$ describe its exporter, importer, product, and temporal structure. Earlier \Cref{fig:app_russia} illustrates a component dominated by Russian exports of mineral, metal, and industrial products, with temporal deviations around the 2009 crisis, 2014 Crimea annexation, 2020 pandemic, and 2022 Russia--Ukraine war. A small number of components reflect reporting conventions or definitional breaks rather than substantive trade structure; see Supplementary~\ref{app:data_artifacts} for an example.

\begin{figure}[ht]
  \centering
  \includegraphics[width=0.9\linewidth]{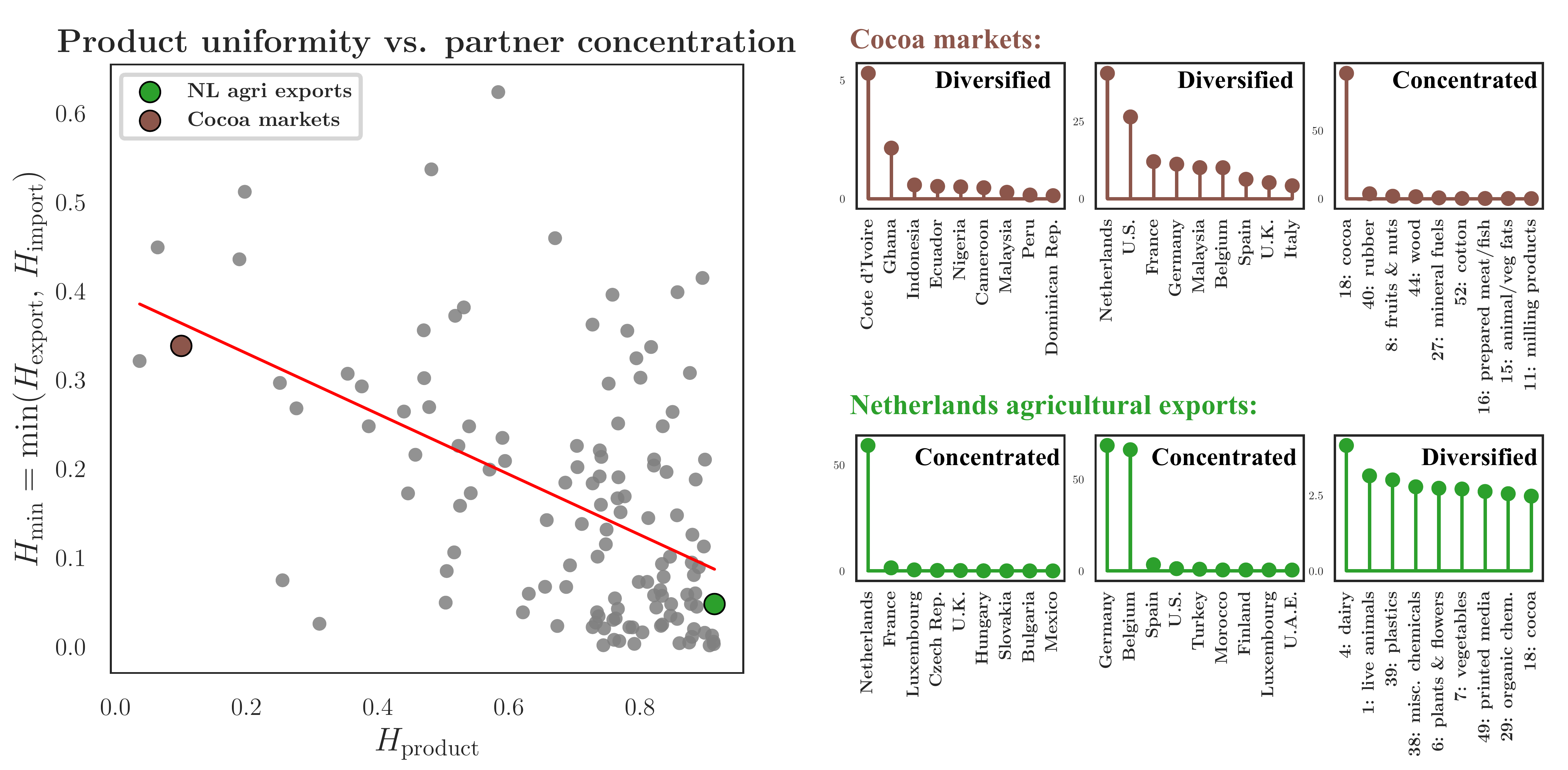}
  \caption{\textbf{Product uniformity vs.\ partner concentration.} \textbf{Left:} Each point is a component, with $x$-coordinate $H_{\text{product}}$ (entropy of the product distribution within the component) and $y$-coordinate $H_{\min} = \min(H_{\text{export}}, H_{\text{import}})$ (the smaller of the exporter and importer entropies). Higher values indicate more uniform composition. The red line confirms that broader product composition coincides with more concentrated partner profiles. \textbf{Right:} Two highlighted examples illustrate the extremes: cocoa markets (brown) concentrate on a few products but reach many partners; Netherlands agricultural exports (green) show the reverse pattern. A binned summary is in Supplementary~\ref{app:entropy_binned}.}
  \label{fig:entropy_example}
\end{figure}

\textbf{Concentration and entropy.} To compare components systematically, we normalize each factor vector into shares ($\tilde{\theta}_{ik} = \nicefrac{\theta^{(1)}_{ik}}{\sum_{i'} \theta^{(1)}_{i'k}}$, analogously for importers and products) and measure each by the normalized Shannon entropy $H_k = -\nicefrac{1}{\ln I} \cdot \sum_{i} \tilde{\theta}_{ik}\ln\tilde{\theta}_{ik} \in [0,1]$, where $I$ is the number of entities in the mode and $H_k=1$ indicates a uniform distribution while $H_k\to 0$ indicates full concentration on a single entity. \Cref{fig:entropy_example} reveals a sharp cross-mode trade-off: components with more uniform product coverage ($H^P_k$ high) systematically have more concentrated partner profiles ($H_{\min,k} = \min\{H^E_k, H^M_k\}$ low). Components anchored to a single product class tend to involve many trading partners, reflecting either the geographic concentration of natural resources (e.g.\ cocoa, coffee) or a small number of technologically advanced exporters (e.g.\ aircraft, pharmaceuticals), while components spanning many products are driven by a few large, diversified economies.

\newlength{\tradepanelheight}
\setlength{\tradepanelheight}{3.9cm} 

\begin{figure}[ht]
    \centering

    \begin{subfigure}[b]{0.50\textwidth}
        \centering
        \vspace{0pt}
        \resizebox{!}{\tradepanelheight}{%
\renewcommand{\baselinestretch}{1}\selectfont  
\tiny

\begin{tabular}{lclc}
\toprule
\textbf{Most symmetric} & GL & \textbf{Most asymmetric} & GL \\
\midrule
\eu{Germany} -- \eu{Luxembourg}   & 0.85 & Chile -- \china{China}               & 0.07 \\
\eu{Austria} -- \eu{Hungary}      & 0.84 & Brazil -- Russia                     & 0.08 \\
\eu{Czech Rep.} -- \eu{Hungary}   & 0.84 & Brazil -- \china{China}              & 0.10 \\
\eu{Czech Rep.} -- \eu{Romania}   & 0.84 & \china{China} -- Peru                & 0.10 \\
\eu{Belgium} -- \eu{Luxembourg}   & 0.83 & Pakistan -- \us{U.S.}                & 0.13 \\
\eu{Austria} -- \eu{Czech Rep.}   & 0.83 & Argentina -- \eu{Spain}              & 0.14 \\
\eu{Czech Rep.} -- \eu{Slovakia}  & 0.82 & Brazil -- Taiwan                     & 0.15 \\
\eu{Hungary} -- \eu{Slovakia}     & 0.82 & El Salvador -- \us{U.S.}             & 0.15 \\
\eu{Austria} -- \eu{Slovenia}     & 0.80 & Chile -- \eu{Spain}                  & 0.15 \\
\eu{Belgium} -- \eu{Netherlands}  & 0.80 & India -- Russia                      & 0.16 \\
\eu{France} -- \eu{Germany}       & 0.79 & \china{China} -- New Zealand         & 0.16 \\
\eu{Germany} -- \eu{U.K.}         & 0.78 & Australia -- Japan                   & 0.16 \\
\eu{Hungary} -- \eu{Romania}      & 0.78 & Argentina -- \china{China}           & 0.17 \\
\eu{Austria} -- \eu{Germany}      & 0.78 & Canada -- India                      & 0.17 \\
\eu{Austria} -- \eu{Slovakia}     & 0.77 & Chile -- \us{U.S.}                   & 0.17 \\
\eu{Belgium} -- \eu{France}       & 0.77 & Brazil -- South Korea                & 0.17 \\
\eu{Czech Rep.} -- \eu{Poland}    & 0.76 & Brazil -- Japan                      & 0.17 \\
\eu{Italy} -- \eu{Slovenia}       & 0.75 & Australia -- \china{China}           & 0.18 \\
\eu{Czech Rep.} -- \eu{Germany}   & 0.75 & Brazil -- Canada                     & 0.18 \\
\eu{Germany} -- \eu{Netherlands}  & 0.75 & \china{China} -- \eu{Norway}         & 0.18 \\
\eu{Austria} -- \eu{Poland}       & 0.75 & Russia -- \eu{Switzerland}           & 0.21 \\
\eu{Germany} -- \eu{Slovenia}     & 0.74 & \eu{Italy} -- \eu{Norway}            & 0.21 \\
South Korea -- Taiwan             & 0.74 & Canada -- \china{China}              & 0.21 \\
\eu{Hungary} -- \eu{Poland}       & 0.73 & \us{U.S.} -- Vietnam                 & 0.21 \\
\eu{Poland} -- \eu{Slovakia}      & 0.73 & \eu{Germany} -- Vietnam              & 0.22 \\
\bottomrule[1pt]
\end{tabular}\vspace{0.3ex}
        }
        \caption{Most/least symmetric trade pairs}
        \label{tab:gl}
    \end{subfigure}
    \hfill
    \begin{subfigure}[b]{0.49\textwidth}
        \centering
        \vspace{0pt}
        \includegraphics[
            height=2\tradepanelheight
        ]{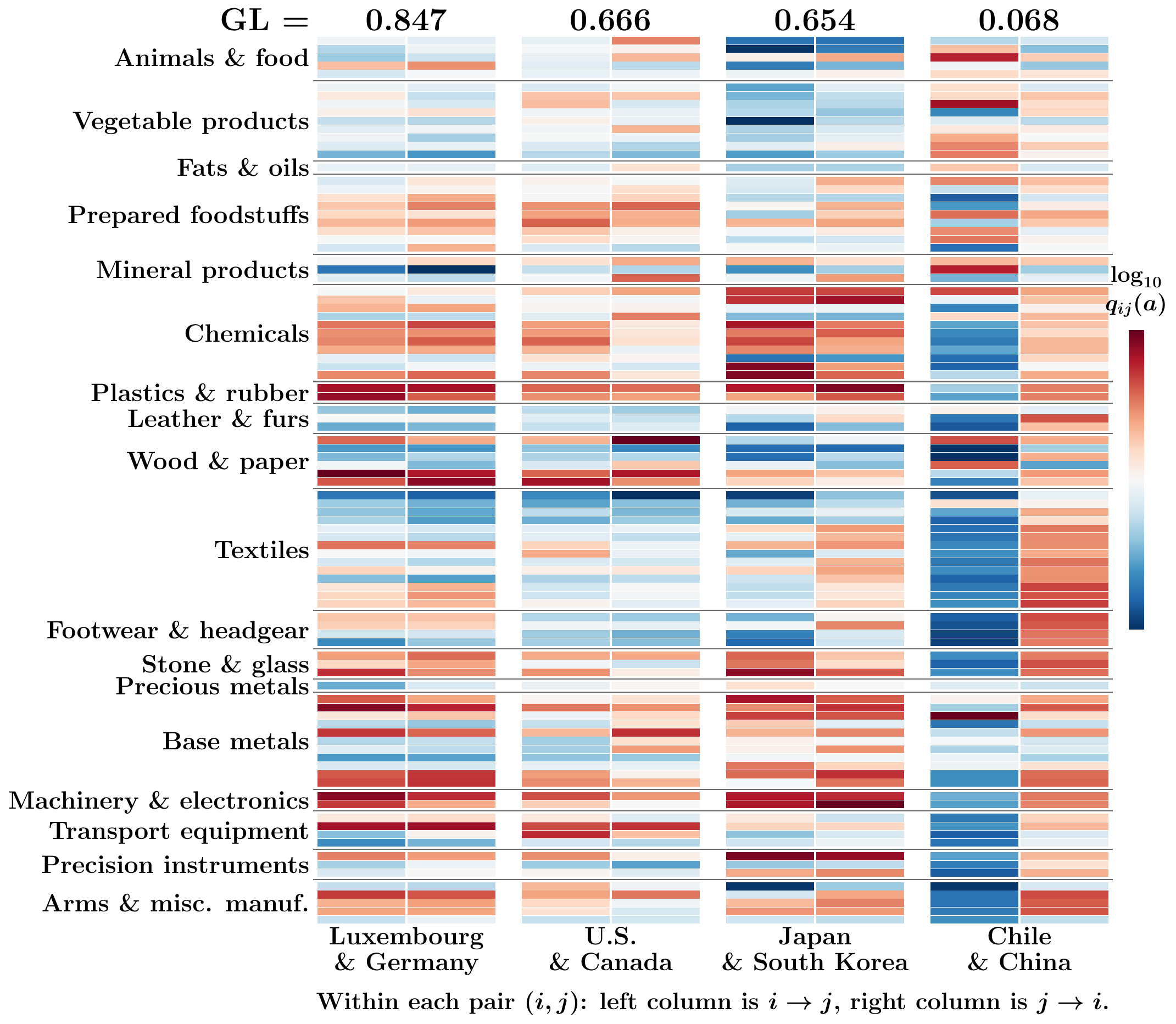}
        \caption{Four product-level examples.}
        \label{fig:4pairs}
    \end{subfigure}
\caption{\textbf{Grubel--Lloyd structure of trade pairs from BPRGTF reconstructions.} \textbf{Left:} The most symmetric trade pairs (GL near 1) are dominated by neighboring European economies, while the most asymmetric pairs (GL near 0) involve trade-specialized partners such as Chile--China and Brazil--Russia. \textbf{Right:} Four product-level examples span the spectrum from high (Luxembourg--Germany, GL $= 0.847$) to low (Chile--China, GL $= 0.068$) symmetry: Luxembourg--Germany exhibits nearly mirrored trade across all product categories, whereas Chile--China is dominated by a few categories flowing in one direction. Within each pair, the left column shows $i \to j$ flows and the right column shows $j \to i$ flows.}
    \label{fig:trade_overview}
\end{figure}

\textbf{Two-way trade balance.} For each ordered direction $i\to j$, the decomposition implies a product distribution $q_{ij}(a) = \sum_{k=1}^{K} \pi_{ij}(k) \tilde\theta^{(3)}_{ak}$ with $\pi_{ij}(k) \propto \tilde \theta^{(1)}_{ik} \tilde\theta^{(2)}_{jk}$, i.e.\ the share of $i$'s exports to $j$ in each product $a$ as attributed by the latent channels. We measure compositional symmetry by the Grubel--Lloyd index $\mathrm{GL}_{ij} = 1 - \tfrac{1}{2}\sum_a |q_{ij}(a) - q_{ji}(a)| \in [0, 1]$, with $\mathrm{GL}_{ij}\to 1$ for similar product baskets in both directions (intra-industry trade) and $\mathrm{GL}_{ij}\to 0$ for disjoint baskets (inter-industry specialization). The classical index \citep{grubel_lloyd1975} is computed on raw bilateral flows, while ours operates on the rank-$K$ latent reconstruction $q_{ij}$, suppressing flow noise and isolating structural composition. Among pairs whose smaller direction of trade lies in the top 5\% of directional volumes (\Cref{tab:gl}), the symmetric end is dominated by pairs within EU, especially Visegr\'ad countries, the canonical intra-industry pattern among similar economies 
\citep{helpman_krugman1985}, 
while the asymmetric end pairs commodity exporters (Chile, Brazil, Peru, Argentina) with industrial economies (China, U.S., Japan, Korea, Russia), reproducing the Heckscher–Ohlin pattern, in which factor-endowment differences drive trade between resource-abundant and capital-abundant economies \citep{feenstra2016advanced}.
\Cref{fig:4pairs} shows $q_{ij}$ and $q_{ji}$ at the HS2 product level for four pairs. For Luxembourg and Germany, both directions trade similar product mixes and GL is high, and for Chile and China, each direction concentrates in different products and GL is low. These two extremes correspond to the intra- and inter-industry regimes that \citet{kim2020measuring} obtain by clustering compositional profiles formed directly from observed dollar trade. More details are in Supplementary~\ref{sec:supp_gl}.

\textbf{Opposite time-factor trends.}
Most time factors exhibit gradual growth consistent with expanding global trade, with deviations around major shocks. A subset of components, however, displays opposing time trends despite sharing similar importer sets and product families. We interpret these through three lenses.

\emph{(i) Market share reallocation across latent export components.} \Cref{fig:us-machine} shows the U.S.-centered component for manufactures (HS84/85/90) into Japan, Korea, and Canada declining in relative prominence, while China- and Germany-centered components into overlapping markets rise. This reflects a loss of relative weight in the non-negative decomposition rather than an absolute decline in U.S. exports; the German time factor also rises more slowly than China's, consistent with U.S. manufacturing losing competitiveness to Chinese imports \citep{pierce_schott_2016} and China's rising export market share in medium- and high-tech goods, where Germany has lost ground since 2017 \citep{meinen_nagengast_2025}.

\begin{figure}[t!]
    \centering
    \begin{subfigure}[t]{0.50\textwidth}
        \centering
        \includegraphics[width=\linewidth]{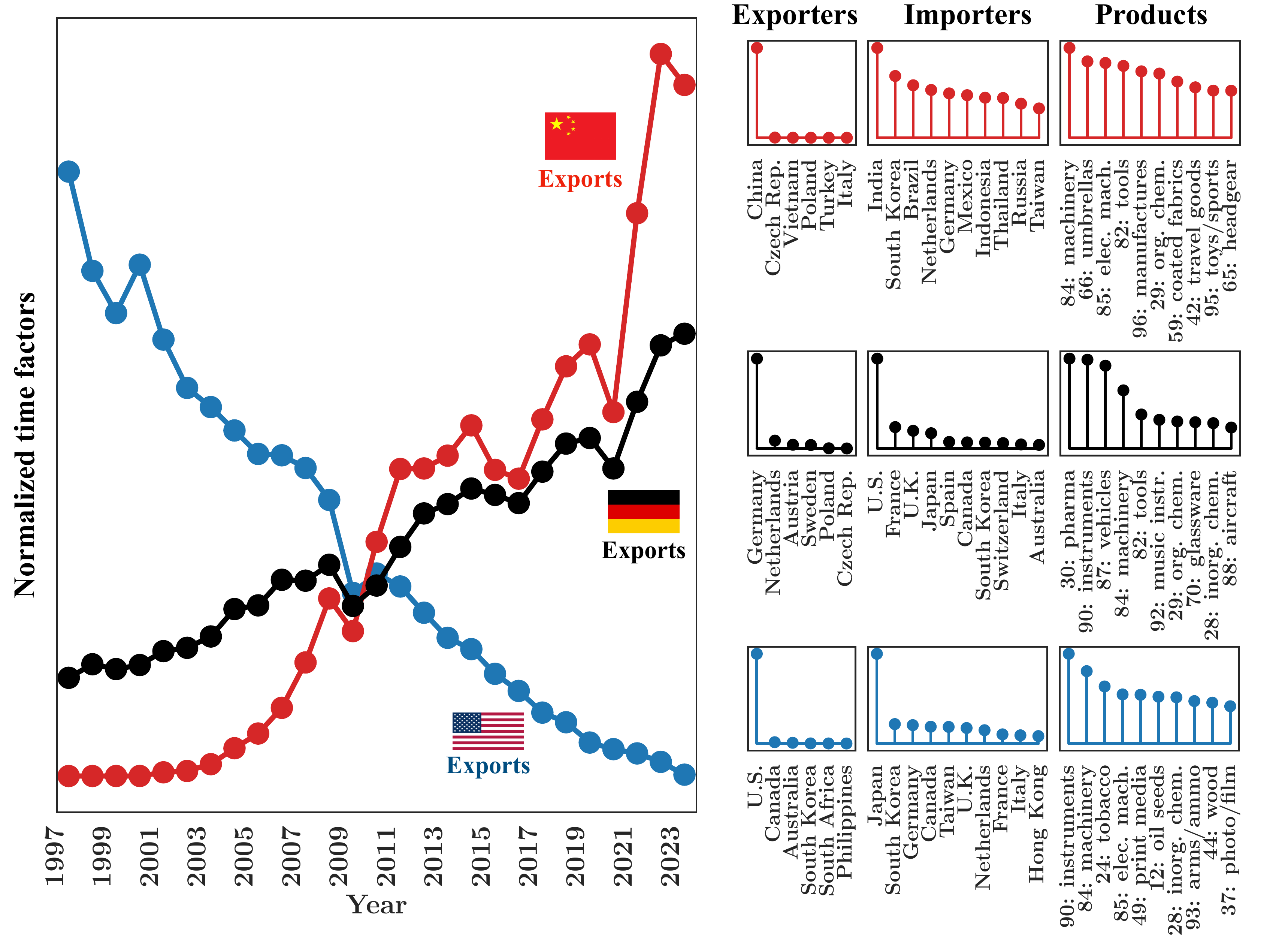}
        \caption{US-machine market competition}
        \label{fig:us-machine}
    \end{subfigure}\hfill
    \begin{subfigure}[t]{0.50\textwidth}
        \centering
        \includegraphics[width=\linewidth]{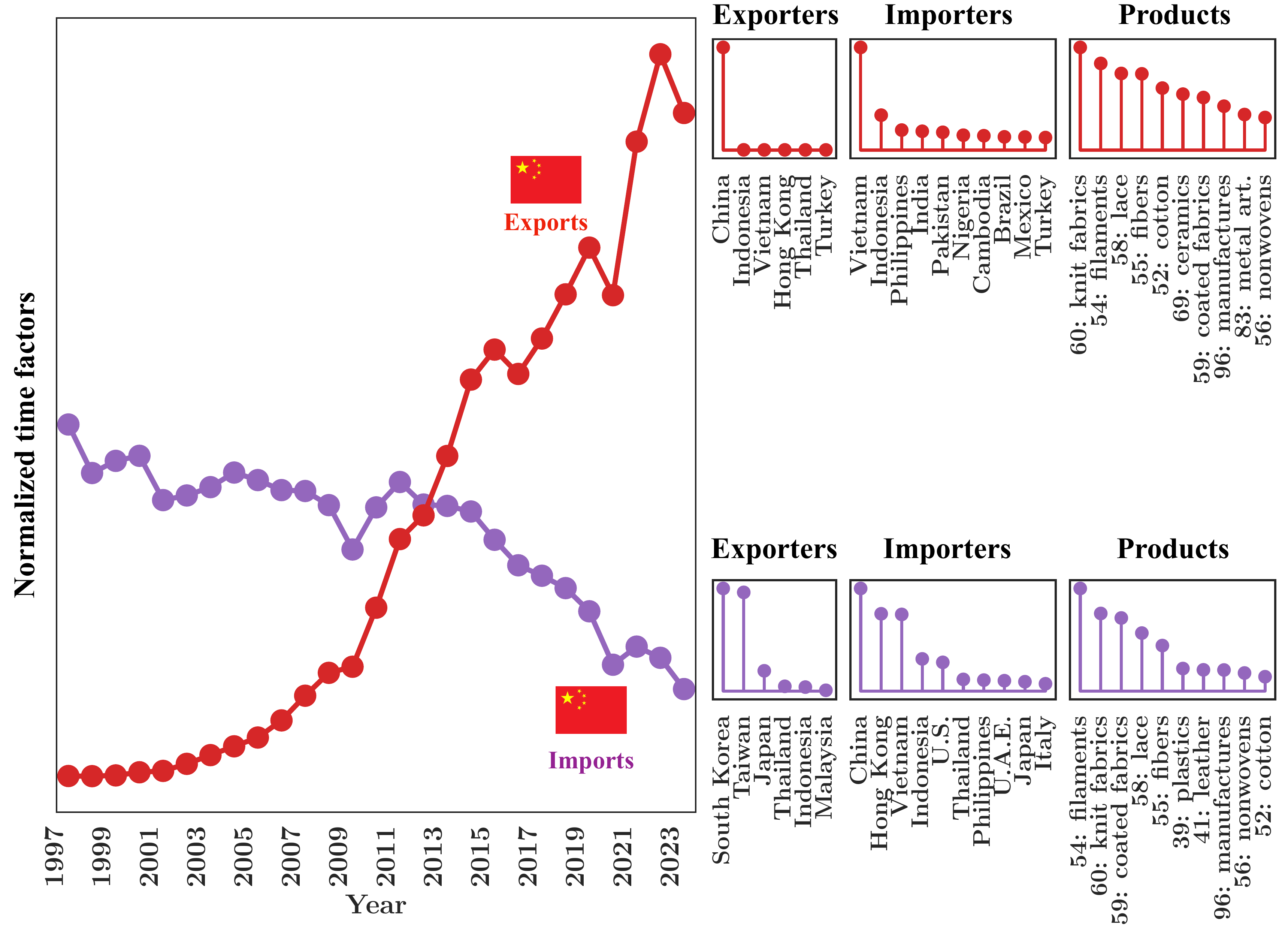}
        \caption{China-textiles industry}
        \label{fig:chn-textiles}
    \end{subfigure}
\caption{\textbf{Opposite time-factor trends in trade components.} Declining components are paired with rising components in overlapping markets and product families, revealing market-share reallocation and supply-chain restructuring. \textbf{Left:} A U.S.-centered machinery export component (blue) declines while China- and Germany-centered components (red and black) into overlapping destinations rise. \textbf{Right:} China's textile imports (purple) decline while its textile exports (red) rise over the same period, reflecting a transition from net importer to dominant exporter.}
    \label{fig:2decline}
\end{figure}

\emph{(ii) Trade-routing and policy changes.}
A China$\to$Hong Kong textiles component (Supplementary~\ref{app:china_hk}) drops sharply around 2005, consistent with reduced re-exporting, the routing of Chinese goods through Hong Kong to third markets \citep{hktid_mainland_entrepot}.

\emph{(iii) Supply-chain restructuring.}
\Cref{fig:chn-textiles} shows declining exports of textile intermediates (HS52/54/55/56/58/59/60) from Japan, Korea, and Thailand to China alongside rising Chinese exports of overlapping intermediates to Vietnam, Cambodia, and Indonesia. This is consistent with China's growing role as a regional supplier of fibers and fabrics to garment manufacturers across the textiles and clothing value chain \citep{wto_gvc_textiles_clothing_2024}.

\section{Conclusion}
\label{sec:conclusion}

We proposed Bayesian Poisson-randomized Gamma tensor factorization for semi-continuous multiway data which combines low-rank nonnegative CP structure with a compound Poisson--Gamma likelihood and mode-specific dispersion. The model extends Bayesian Poisson tensor factorization for count data \citep{schein2015bayesian} to semi-continuous observations, complementing existing extensions to binary data \citep{zhou2015infinite}, compound Poisson exponential-dispersion families \citep{BasbugEngelhardt2016HCPF}, and bounded responses on [0,1] \citep{albert2024doubly}.~\looseness=-1


A central computational contribution is the treatment of the binary activity indicators governing the zero-positive boundary. Their exact conditional involves a modified Bessel function of the first kind, which is costly to evaluate at scale. We introduce a partially collapsed sampler that avoids marginalizing over exponentially many activation configurations, and derive an asymptotic Bernoulli--Poisson approximation to the Bessel-family conditional that is accurate in the regime most common in practice. This reduces sampling to a single exponential evaluation per entry, enabling GPU-accelerated inference for tensors with tens of millions of entries. Applied to 60 million international trade flows, the model outperforms competing tensor baselines in out-of-sample prediction and recovers 125 interpretable components spanning exporters, importers, products, and time.

The proposed framework opens several future directions. The rank is treated as fixed, where a natural extension is to infer it from the data in a nonparametric Bayesian manner, as in \citet{zhao2015bayesian} and \citet{zhou2015infinite}. The temporal factors are modeled without temporal structure, treated as independent and exchangeable across years, and incorporating smoothness priors or regularization across adjacent time points would capture gradual trends between periods. Dyadic covariates such as bilateral distance, tariff schedules, and preferential trade agreements could enter through the latent Poisson rate tensor, connecting the model to structural gravity frameworks and tensor regression with structured coefficient priors \citep{guhaniyogi2017bayesian}. Beyond monetary-valued data, the framework applies broadly to any zero-inflated, positive-continuous multiway array, including drug-response tensors in pharmacology, species-abundance data in ecology, and healthcare utilization records in epidemiology.

{\renewcommand{\baselinestretch}{1}\selectfont
\bibliographystyle{plainnat}
\bibliography{references}
}

\clearpage
\appendix
\renewcommand{\thesection}{S\arabic{section}}
\renewcommand{\thesubsection}{S\arabic{section}.\arabic{subsection}}
\renewcommand{\thefigure}{S\arabic{figure}}
\renewcommand{\thetable}{S\arabic{table}}
\renewcommand{\theequation}{S\arabic{equation}}
\renewcommand{\thetheorem}{S\arabic{theorem}}
\setcounter{section}{0}
\setcounter{figure}{0}
\setcounter{table}{0}
\setcounter{equation}{0}
\setcounter{theorem}{0}


\section{Proof of identifiability}
\label{app:PRGidentifiability}

\begin{proof}[Proof of Lemma~\ref{lem:ident_entrywise_mu_beta} in the main text]
Suppose $F_{\lambda,\beta}=F_{\lambda',\beta'}$.

First, by the definition of PRG$(\lambda,\beta)$,
\begin{equation} \label{eq:prg0}
\eta\mid \lambda \sim \mathrm{Pois}(\lambda), 
\qquad
Y\mid \eta,\beta \sim 
\begin{cases}
\delta_0, & \eta=0,\\
\mathrm{Gamma}(\eta,\beta), & \eta>0,
\end{cases}
\end{equation}
we have 
\[
\mathbb{P}(Y=0\mid \lambda,\beta)=\mathbb{P}(\eta=0\mid \lambda)=e^{-\lambda},
\]
and similarly $\mathbb{P}(Y=0\mid \lambda',\beta')=e^{-\lambda'}$. Equality in distribution implies
$e^{-\lambda}=e^{-\lambda'}$, hence $\lambda=\lambda'$.

Next, for any $t>0$, the Laplace transform of $Y$ under \cref{eq:prg0} satisfies
\begin{align*}
\mathbb{E}\!\left[e^{-tY}\mid \eta,\beta\right]
&=
\begin{cases}
1, & \eta=0,\\[1mm]
\left(\dfrac{\beta}{\beta+t}\right)^{\eta}, & \eta>0,
\end{cases}
\end{align*}
so by iterated expectation and the probability generating function of a Poisson random variable,
\begin{align*}
\mathbb{E}\!\left[e^{-tY}\mid \lambda,\beta\right]
&=\mathbb{E}\!\left[\left(\frac{\beta}{\beta+t}\right)^{\eta}\right]
=\exp\!\left(\lambda\left(\frac{\beta}{\beta+t}-1\right)\right)
=\exp\!\left(-\lambda\,\frac{t}{\beta+t}\right).
\end{align*}
Since $F_{\lambda,\beta}=F_{\lambda',\beta'}$, the Laplace transforms coincide for all $t>0$; using $\lambda=\lambda'$ from above,
\[
\exp\!\left(-\lambda\,\frac{t}{\beta+t}\right)=\exp\!\left(-\lambda\,\frac{t}{\beta'+t}\right)
\quad \forall\, t>0,
\]
which implies $\frac{t}{\beta+t}=\frac{t}{\beta'+t}$ for all $t>0$, and hence $\beta=\beta'$.
\end{proof}

\section{The zero-truncated Poisson--Gamma mixture density}
\label{appendix:RGplus}

We adopt the notation from the main text. We define $y \sim \mathrm{PRG}_+(\lambda^{(k)}_{\mathbf{i}}, \beta_{\mathbf{i}})$ if $y \mid \eta^{(k)} \sim \mathrm{Gam}(\eta^{(k)}, \beta_{\mathbf{i}})$ where $\eta^{(k)} \sim \mathrm{Pois}_+(\lambda^{(k)}_{\mathbf{i}})$. The density of $y$ is obtained by marginalizing over $\eta^{(k)}$, with $m$ used as a local summation index:
\begin{align*}
\mathrm{PRG}_+\!\left(y;\lambda^{(k)}_{\mathbf{i}},\beta_{\mathbf{i}}\right)
&= \sum_{m=1}^{\infty} \mathrm{Gam}\!\left(y;m,\beta_{\mathbf{i}}\right)\cdot \mathrm{Pois}_+\!\left(m;\lambda^{(k)}_{\mathbf{i}}\right)\\
&= \sum_{m=1}^{\infty} \frac{\beta_{\mathbf{i}}^{m}}{\Gamma(m)}y^{m-1} e^{-\beta_{\mathbf{i}} y} \cdot \frac{\left(\lambda^{(k)}_{\mathbf{i}}\right)^{m} e^{-\lambda^{(k)}_{\mathbf{i}}}}{m!\left(1-e^{-\lambda^{(k)}_{\mathbf{i}}}\right)}\\
&= \frac{e^{-\beta_{\mathbf{i}} y}}{y\!\left(e^{\lambda^{(k)}_{\mathbf{i}}}-1\right)} \sum_{m=1}^{\infty} \frac{\left(\beta_{\mathbf{i}}\,\lambda^{(k)}_{\mathbf{i}}\,y\right)^{m}}{m!\,\Gamma(m)}\\
&= \frac{e^{-\beta_{\mathbf{i}} y}\sqrt{\beta_{\mathbf{i}}\,\lambda^{(k)}_{\mathbf{i}}\,y}}{y\!\left(e^{\lambda^{(k)}_{\mathbf{i}}}-1\right)}\,I_{-1}\!\left(2\sqrt{\beta_{\mathbf{i}}\,\lambda^{(k)}_{\mathbf{i}}\,y}\right)
\end{align*}

\section{Evidence lower bound (ELBO)}
\label{appendix:elbo}

For observations $\mathcal{Y}$ and latent variables $\mathcal{Z}= \{y_{\mathbf{i}}^{(k)}\}_{\mathbf{i}, k} \;\cup\; \{\eta_{\mathbf{i}}^{(k)}\}_{\mathbf{i}, k} \;\cup\; \{\Theta^{(m)}\}_{m}$, the evidence lower bound (ELBO) rewrites the log marginal likelihood as
\begin{equation}
\label{eq:elbo-basic}
\mathrm{ELBO}(q)=\E{q}{\log p(\mathcal{Y},\mathcal{Z})}-\E{q}{\log q(\mathcal{Z})}.
\end{equation}

In our hybrid algorithm, the binary activity indicators $\mathcal{A}= \{ b^{(k)}_{\mathbf{i}} \}_{\mathbf{i},\,k}$ are sampled each sweep, and thus the variational approximation is applied to the remaining latent variables conditional on $\mathcal{A}$.

\textbf{Variational family.}
Although $\{y_{\mathbf{i}}^{(k)}\}_{\mathbf{i},k}$ enters $\mathcal{Z}$ directly, we work with the equivalent reparametrization $y_{\mathbf{i}}^{(k)} = \pi_{\mathbf{i}}^{(k)}\,y_{\mathbf{i}}$ from the main text, treating the total $y_{\mathbf{i}} \sim \mathrm{Gam}(\cdot)$ and proportions $(\pi_{\mathbf{i}}^{(k)})_{k\in\Delta_{\mathbf{i}}} \sim \mathrm{Dir}(\cdot)$ as the latent quantities.
We adopt the mean-field factorization
\[
q\!\left(\{\pi_{\mathbf{i}}\}_{\mathbf{i}},\, \{\eta^{(k)}_{\mathbf{i}}\}_{\mathbf{i},k},\, \{\Theta^{(m)}\}_{m}\right)
=
\prod_{\mathbf{i}:\,y_{\mathbf{i}}>0}
\prod_{k\in \Delta_{\mathbf{i}}}
q(\pi_\mathbf{i}^{(k)})
\ \cdot\
\prod_{\mathbf{i}:\,y_{\mathbf{i}}>0}
\prod_{k\in \Delta_{\mathbf{i}}}
q\!\left(\eta^{(k)}_\mathbf{i}\right)
\ \cdot\
\prod_{m=1}^M\prod_{k=1}^K\prod_{i_m=1}^{I_m}
q\!\left(\theta^{(m)}_{i_m k}\right),
\]
where $q(\pi_{\mathbf{i}})$ is Dirichlet supported on the active set $\Delta_{\mathbf{i}}$,
$q(\eta^{(k)}_{\mathbf{i}})$ is a truncated Bessel variational factor for $k\in\Delta_{\mathbf{i}}$ and degenerates to a point mass at $0$ for $k\notin\Delta_{\mathbf{i}}$,
and each $q(\theta^{(m)}_{i_m k})$ is Gamma.

\textbf{Joint distribution under augmentation.}
For each nonzero entry $y_{\mathbf{i}}>0$, let
$s_{\mathbf{i}}=\sum_{k\in\Delta_{\mathbf{i}}}\eta^{(k)}_{\mathbf{i}}$.
Using the Gamma-additivity or Dirichlet allocation identity, the likelihood contribution can be written as
\[
\mathbb{P} \big(y_{\mathbf{i}},\pi_{\mathbf{i}}\mid \eta, b\big)
=
\text{Gam}\!\big(y_{\mathbf{i}};\, s_{\mathbf{i}},\,\beta_{\mathbf{i}}\big)\,
\text{Dir}\!\big(\pi_{\mathbf{i}};\,(\eta^{(k)}_{\mathbf{i}})_{k\in\Delta_{\mathbf{i}}}\big),
\]
up to constants that do not depend on variational parameters.
The remaining components of the joint are
\[
\eta^{(k)}_{\mathbf{i}}\mid \Theta \sim \text{Pois}\!\big(\lambda^{(k)}_{\mathbf{i}}\big),
\qquad
\lambda^{(k)}_{\mathbf{i}}=\prod_{m=1}^M \theta^{(m)}_{i_m k},
\qquad
\theta^{(m)}_{i_m k}\sim \text{Gam}(\alpha_0,\alpha_0\nu^{(m)}).
\]

Let $\Omega_+=\{\mathbf{i}: y_{\mathbf{i}}>0\}$. Substituting the model and the variational family into \cref{eq:elbo-basic} yields

\begin{align}
\label{eq:elbo-decomp}
\mathrm{ELBO}
&=
\underbrace{\sum_{\mathbf{i}\in\Omega_+}\E{q}{\log \mathrm{Gam}\big(y_{\mathbf{i}};\, s_{\mathbf{i}},\,\beta_{\mathbf{i}}\big)}}_{\text{(likelihood: total magnitude)}} \\
&\quad+\underbrace{\sum_{\mathbf{i}\in\Omega_+}\E{q}{\log \mathrm{Dir} \big(\pi_{\mathbf{i}};\,\{\eta^{(k)}_{\mathbf{i}}\}_{k\in\Delta_{\mathbf{i}}}\big)}}_{\text{(allocation: proportions)}} \nonumber\\
&\quad+\underbrace{\sum_{k=1}^K\sum_{\mathbf{i}\in\Omega_+} \E{q}{\log \mathrm{Pois} \big(\eta^{(k)}_{\mathbf{i}};\,\lambda^{(k)}_{\mathbf{i}}\big)}}_{\text{(Poisson layer)}} \nonumber\\
&\quad+ \underbrace{\sum_{m=1}^M \sum_{k=1}^K\sum_{i_m=1}^{I_m} \E{q}{\log \mathrm{Gam} \big(\theta^{(m)}_{i_m k};\,\alpha_0,\alpha_0\nu^{(m)}\big)}}_{\text{(priors)}} \nonumber\\
&\quad- \underbrace{\sum_{\mathbf{i}\in\Omega_+}\E{q}{\log q(\pi_{\mathbf{i}})} -\sum_{k=1}^K\sum_{\mathbf{i}\in\Omega_+}\E{q}{\log q(\eta^{(k)}_{\mathbf{i}})} -\sum_{m=1}^M \sum_{k=1}^K\sum_{i_m=1}^{I_m}\E{q}{\log q(\theta^{(m)}_{i_m k})}}_{\text{(entropies)}}.\nonumber
\end{align}

For implementation, the expectations in \cref{eq:elbo-decomp} are evaluated using the moments and log-moments of the variational families (Dirichlet, truncated Bessel, and Gamma). In particular, $\E{q}{\log \theta^{(m)}_{i_m k}}$ and $\E{q}{\theta^{(m)}_{i_m k}}$ have closed forms under the Gamma variational factors, and the Bessel terms are computed using the corresponding special functions.

\section{Proof of Proposition 1}
\label{appendix:proof_proposition1}

For completeness we first restate Proposition~\ref{prop1} in the main text with the explicit forms of the two conditional Bernoulli probabilities.

\begin{proposition}[Bernoulli--Poisson approximation; full statement]
Let $\lambda,s,\beta>0$, and suppose
\[
Y_1\sim\mathrm{Gam}(s,\beta),\qquad
Y_2\mid \eta \sim
\begin{cases}
\delta_0, & \eta=0,\\
\mathrm{Gam}(\eta,\beta), & \eta>0,
\end{cases}
\qquad
\eta\sim\mathrm{Pois}(\lambda),
\qquad
Y=Y_1+Y_2.
\]
For a fixed observation $Y=y$, let $Z=\mb{1}(Y_2>0)$. Then, conditional on
$(\lambda,s,\beta,Y=y)$,
\[
Z\sim \mathrm{Bernoulli}\bigl(p(\lambda,s,\beta,y)\bigr),\qquad
p(\lambda,s,\beta,y)
=1-\frac{(\sqrt{\beta y\lambda})^{\,s-1}}
{\Gamma(s)\,I_{s-1}\!\bigl(2\sqrt{\beta y\lambda}\bigr)}.
\]
If $X\mid(\lambda,s,\beta,Y=y)\sim\mathrm{Pois}\bigl(\lambda\beta y/s\bigr)$ and
$\widetilde Z=\mb{1}(X>0)$, then
\[
\widetilde Z\mid (\lambda,s,\beta,Y=y)\sim
\mathrm{Bernoulli}\bigl(\widetilde p(\lambda,s,\beta,y)\bigr),\qquad
\widetilde p(\lambda,s,\beta,y)=1-\exp\!\bigl(-\lambda\beta y/s\bigr).
\]
As $s\to\infty$, $\bigl|p(\lambda,s,\beta,y)-\widetilde p(\lambda,s,\beta,y)\bigr|\to 0$,
and hence, since $d_{\mathrm{TV}}(\mathrm{Bernoulli}(p),\mathrm{Bernoulli}(q))=|p-q|$,
\[
d_{\mathrm{TV}}\!\left(\mathrm{Bernoulli}\bigl(p(\lambda,s,\beta,y)\bigr),\,
\mathrm{Bernoulli}\bigl(\widetilde p(\lambda,s,\beta,y)\bigr)\right)\longrightarrow 0.
\]
\end{proposition}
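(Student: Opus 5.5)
Two things must be shown: that the conditional probability $\mathbb{P}(Z=0\mid Y=y)$ has the stated Bessel form, and that it converges to $\exp(-\lambda\beta y/s)$ as $s\to\infty$ with $(\lambda,\beta,y)$ fixed. The total-variation statement then follows from $d_{\mathrm{TV}}(\mathrm{Bernoulli}(p),\mathrm{Bernoulli}(q))=|p-q|$.

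\textbf{Step 1: exact form of $p$.} By Bayes' rule,
\[
\mathbb{P}(Z=0\mid Y=y)=\frac{e^{-\lambda}\,\mathrm{Gam}(y;s,\beta)}{f_Y(y)}.
\]
For the denominator, condition on $\eta=n$. Given $\eta=n$, the sum $Y$ is $\mathrm{Gam}(s+n,\beta)$ (using gamma additivity), so
\[
f_Y(y)=\sum_{n\ge0}\frac{e^{-\lambda}\lambda^n}{n!}\,\frac{\beta^{s+n}y^{s+n-1}e^{-\beta y}}{\Gamma(s+n)}.
\]
Factor out $e^{-\lambda}\beta^s y^{s-1}e^{-\beta y}$ and set $x=\sqrt{\beta y\lambda}$. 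The remaining series is
\[
\sum_{n\ge0}\frac{x^{2n}}{n!\,\Gamma(n+s)}=x^{-(s-1)}I_{s-1}(2x),
\]
by the power series of $I_\nu$. Dividing the numerator by this gives
\[
\mathbb{P}(Z=0\mid Y=y)=\frac{x^{s-1}}{\Gamma(s)\,I_{s-1}(2x)},
\]
which is exactly \cref{eq:Pb}. This step is routine. As a consistency check, the $\mathrm{PRG}_+$ convolution in \cref{eq:fy_active_conv} recovers the $n\ge1$ part of the same series.

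\textbf{Step 2: the limit.} Write the reciprocal of $1-p$ as
\[
R_s:=\frac{\Gamma(s)\,I_{s-1}(2x)}{x^{s-1}}=\sum_{n\ge0}\frac{\Gamma(s)}{\Gamma(s+n)}\,\frac{x^{2n}}{n!}.
\]
It then suffices to show $R_s-\exp(x^2/s)\to0$. Since $x^2=\lambda\beta y$ is fixed, $\exp(x^2/s)\to1$, and it is enough to show $R_s\to1$. Here $\Gamma(s)/\Gamma(s+n)=\prod_{j=0}^{n-1}(s+j)^{-1}\le s^{-n}$, so
\[
0\le R_s-1\le \sum_{n\ge1}\frac{(x^2/s)^n}{n!}=e^{x^2/s}-1\to0.
\]
Hence $1-p\to1$ and $1-\tilde p\to1$, so $|p-\tilde p|\to0$.

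For a more informative statement, dominated convergence applied term-by-term gives $R_s=1+x^2/s+O(s^{-2})$. This matches $e^{x^2/s}$ to first order, so $|p-\tilde p|=O(s^{-2})$, and the ratio $p/\tilde p\to1$. That is the stronger sense in which the Bernoulli--Poisson link approximates the exact conditional.

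\textbf{Main obstacle.} The only delicate point is interpreting $s$ as a real, possibly non-integer, Gamma shape, so that $\Gamma(s)$ and $I_{s-1}$ are well defined. The bound $\Gamma(s)/\Gamma(s+n)\le s^{-n}$ holds for all real $s>0$, so the argument goes through for any real $s>0$.
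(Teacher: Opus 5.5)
Your argument is correct, and its core matches the paper's. Both proofs expand $\Gamma(s)I_{s-1}(2x)/x^{s-1}$ by the Bessel power series and compare it termwise with $e^{x^2/s}$ via $\Gamma(s)/\Gamma(s+n)=\prod_{j=0}^{n-1}(s+j)^{-1}\le s^{-n}$. Your version differs in three useful ways.

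First, the paper's proof takes the Bessel form of $p$ as given; it opens with ``it suffices to prove $|p-\widetilde p|\to0$''. Your Step 1, via Bayes' rule and $Y\mid\eta=n\sim\mathrm{Gam}(s+n,\beta)$, actually establishes the first claim of the statement.

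Second, the paper obtains $B_m^{-1}-A_m^{-1}\to0$ from Tannery's theorem and then converts through $|A_m-B_m|=A_mB_m|A_m^{-1}-B_m^{-1}|$. Your sandwich $1\le R_s\le e^{x^2/s}$ makes that dominated-convergence step unnecessary. The paper already has both inequalities in hand, so Tannery is redundant there.

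Third, your refinement is the substantively informative part. Since $p$ and $\widetilde p$ are both of order $1/s$, the stated total-variation convergence already follows from $p\to0$ and $\widetilde p\to0$ separately. By contrast, $|p-\widetilde p|=O(s^{-2})$ and $p/\widetilde p\to1$ show that the Bernoulli--Poisson link is accurate in relative terms.

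You do not need dominated convergence for that refinement either. The same Pochhammer bound, together with the fact that the first two terms of $R_s$ are exactly $1$ and $x^2/s$, gives $0\le e^{x^2/s}-R_s\le e^{x^2/s}-1-x^2/s$. Since $R_s,e^{x^2/s}\ge1$, this yields the explicit bound $|p-\widetilde p|\le e^{x^2/s}-1-x^2/s$.

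Your remark that the bound holds for all real $s>0$ is also correct; the paper states it only for $m\ge1$.
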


\begin{proof}
Because $d_{\mathrm{TV}}(\mathrm{Bernoulli}(p),\mathrm{Bernoulli}(q))=|p-q|$, it suffices to prove that $\bigl|p(\lambda,s,\beta,y)-\widetilde p(\lambda,s,\beta,y)\bigr|\to 0$ as $s\to\infty$. Write $x=\lambda\beta y>0$ and $m=s$, and define
\[A_m=\frac{(\sqrt{x})^{\,m-1}}{\Gamma(m)\,I_{m-1}(2\sqrt{x})},\qquad B_m=\exp(-x/m),
\]
so that $p=1-A_m$, $\widetilde p=1-B_m$, and the claim reduces to showing $|A_m-B_m|\to 0$ as $m\to\infty$.

Using the series representation of the modified Bessel function,
\[
I_{m-1}(2\sqrt{x})=\sum_{k=0}^{\infty}\frac{(\sqrt{x})^{\,2k+m-1}}{k!\,\Gamma(m+k)},
\]
we obtain
\[
A_m^{-1}=\sum_{k=0}^{\infty}\frac{x^k}{k!}\,\frac{\Gamma(m)}{\Gamma(m+k)},
\qquad
B_m^{-1}=\exp(x/m)=\sum_{k=0}^{\infty}\frac{x^k}{k!\,m^k}.
\]
By the Pochhammer identity $\Gamma(m+k)/\Gamma(m)=\prod_{j=0}^{k-1}(m+j)$, for every $m\geq 1$ and $k\geq 0$,
\begin{equation}
\label{eq:pochhammer-bound}
0\;\leq\;\frac{\Gamma(m)}{\Gamma(m+k)}
=\prod_{j=0}^{k-1}\frac{1}{m+j}
\;\leq\;\frac{1}{m^k},
\end{equation}
so every term of the series
\[
B_m^{-1}-A_m^{-1}
=\sum_{k=0}^{\infty}\frac{x^k}{k!}
\left[\frac{1}{m^k}-\frac{\Gamma(m)}{\Gamma(m+k)}\right]
\]
is nonnegative and dominated by $x^k/k!$, which is summable (with sum $e^x$, independent of $m$). For each fixed $k\geq 0$,
\[
\frac{\Gamma(m)}{\Gamma(m+k)}\cdot m^k
=\prod_{j=0}^{k-1}\frac{m}{m+j}\longrightarrow 1
\qquad\text{as }m\to\infty,
\]
so each term of the series tends to $0$. By Tannery's theorem (dominated convergence for series),
\[
B_m^{-1}-A_m^{-1}\longrightarrow 0
\qquad\text{as }m\to\infty.
\]

From \cref{eq:pochhammer-bound}, $A_m^{-1}\leq B_m^{-1}$ and hence $B_m\leq A_m\leq 1$; combined with $B_m=\exp(-x/m)\to 1$, this forces $A_m\to 1$ and $A_m B_m\to 1$. Therefore
\[
|A_m-B_m|
=A_m B_m\,\bigl|A_m^{-1}-B_m^{-1}\bigr|
\longrightarrow 0,
\]
which completes the proof.
\end{proof}

\section{Updates of Gamma rates}
\label{app:beta_earlystop}

In practice, we update the Gamma rate parameters $\{\beta_{i_a}\}$ by empirical Bayes using Eq.~\eqref{eq:beta_a} in the main text only during the first 50 adaptation iterations prior to sampling the indicators $b$ and then hold them fixed thereafter. In simulations, fully iterating the EB updates improves recovery of the absolute $\beta_{i_a}$ values, but can slightly degrade recovery of the low-rank structure; early stopping yields more stable factor estimates and better overall predictive performance.

\Cref{fig:beta_earlystop} illustrates this tradeoff for $A=30$ independent rates: early stopping induces a level shift in $\hat\beta_{i_a}$ but closely preserves the relative variation across indices (ordering and scale differences). \Cref{tab:synthetic_K_20_with_earlystop} compares early-stopped and fully-updated EB within the same simulation design in Table~\ref{tab:synthetic_K_20} in the main text, while the fully-updated variant can improve some nonzero-magnitude metrics, it typically worsens structural recovery (e.g., higher Hamming loss). We therefore use early-stopped EB updates throughout the main text.

\begin{figure}[t]
    \centering
    \includegraphics[width=0.7\linewidth]{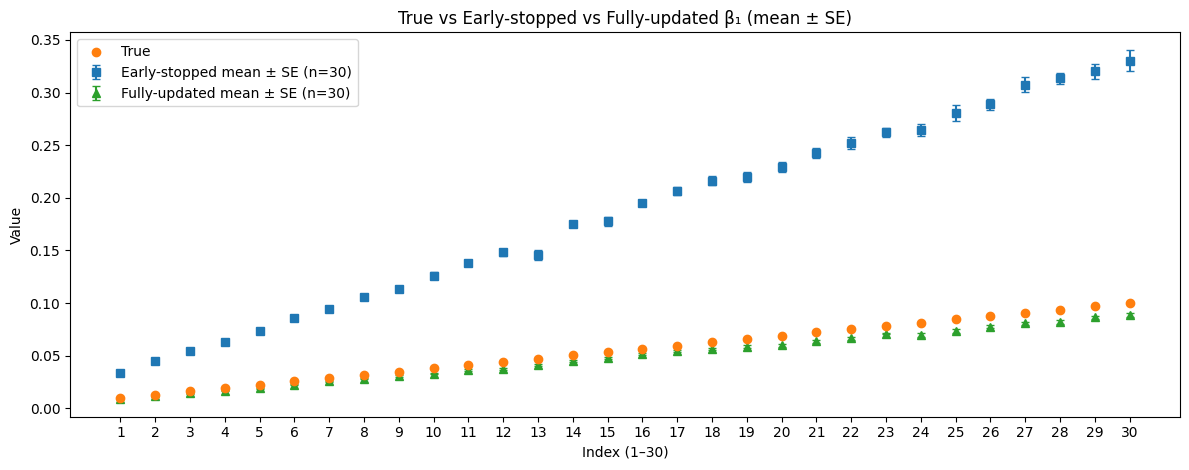}
    \caption{\textbf{Early-stopped updates recover relative heterogeneity but exhibit level bias.}
    Synthetic experiment with $A=30$ independent $\beta_a$ parameters. Orange: true $\beta_a$ values.
    Blue: posterior means (with $\pm$ SE across replicates) under early-stopped EB updates.
    Early stopping induces a systematic level shift but preserves the relative variation across $a$ (ordering and scale differences), which is often the primary role of $\beta_a$ in capturing slice-specific dispersion.}
    \label{fig:beta_earlystop}
\end{figure}

\begin{table}[!t]
\caption{Effect of early-stopped versus fully-updated empirical-Bayes updates of $\beta_{i_a}$ in simulation. The fully-updated variant can improve absolute recovery of $\beta_{i_a}$ and some nonzero-magnitude errors, but often increases structural error (e.g., Hamming loss), motivating our early-stopping schedule. Row-wise optima are bolded.}
\label{tab:synthetic_K_20_with_earlystop}
\centering
{\renewcommand{\baselinestretch}{1}\selectfont
\setlength{\tabcolsep}{5pt}
\begin{tabular}{@{}lccccc@{}}
\toprule
 & \multicolumn{2}{c}{Baselines} & \multicolumn{3}{c}{Proposed (BPRGTF)} \\
\cmidrule(lr){2-3} \cmidrule(lr){4-6}
Metric & NTF-LS & NTF-$\beta$
  & \makecell{shared\\($\beta_{\mathbf{i}}{=}\beta$)}
  & \makecell{local\\($\beta_{\mathbf{i}}{=}\beta_a$)\\(early)}
  & \makecell{local\\($\beta_{\mathbf{i}}{=}\beta_a$)\\(fully)} \\
\midrule
\multicolumn{6}{c}{\textit{Working rank} $K=15$} \\
\midrule
MAE ($\downarrow$)       & 20.6 (0.5)    & 50.3 (1.6)    & 22.3 (0.7)    & \textbf{18.2 (0.5)}    & 21.7 (0.6) \\
MAE-NZ ($\downarrow$)    & 56.3 (0.9)    & 97.7 (2.6)    & 57.5 (1.0)    & \textbf{55.3 (0.8)}    & 56.2 (1.0) \\
MArel-NZ ($\downarrow$)  & 4.6 (0.3)     & 7.9 (0.4)     & 5.7 (0.3)     & \textbf{3.2 (0.1)}     & 5.0 (0.2)  \\
HAM-Z ($\downarrow$)     & 0.699 (0.010) & 0.986 (0.002) & 0.945 (0.007) & \textbf{0.432 (0.008)} & 0.868 (0.009) \\
AUC ($\uparrow$)         & 0.859 (0.005) & 0.682 (0.004) & 0.894 (0.002) & \textbf{0.906 (0.002)} & 0.882 (0.003) \\
\midrule
\multicolumn{6}{c}{\textit{Working rank} $K=20$} \\
\midrule
MAE ($\downarrow$)       & 19.8 (0.4)    & 48.8 (1.6)    & 20.1 (0.5)    & \textbf{16.8 (0.4)}    & 19.1 (0.4) \\
MAE-NZ ($\downarrow$)    & 54.3 (0.7)    & 95.8 (2.6)    & 53.7 (0.6)    & 51.6 (0.6)             & \textbf{51.5 (0.6)} \\
MArel-NZ ($\downarrow$)  & 4.7 (0.3)     & 8.5 (0.6)     & 5.4 (0.3)     & \textbf{3.2 (0.2)}     & 4.7 (0.2)  \\
HAM-Z ($\downarrow$)     & 0.674 (0.011) & 0.982 (0.002) & 0.889 (0.009) & \textbf{0.339 (0.007)} & 0.754 (0.012) \\
AUC ($\uparrow$)         & 0.882 (0.004) & 0.693 (0.004) & 0.918 (0.002) & \textbf{0.938 (0.002)} & 0.914 (0.002) \\
\midrule
\multicolumn{6}{c}{\textit{Working rank} $K=25$} \\
\midrule
MAE ($\downarrow$)       & 19.4 (0.4)    & 46.8 (1.6)    & 19.3 (0.5)    & \textbf{16.6 (0.4)}    & 18.3 (0.4) \\
MAE-NZ ($\downarrow$)    & 53.5 (0.7)    & 92.5 (2.4)    & 52.4 (0.6)    & 51.2 (0.5)             & \textbf{50.3 (0.6)} \\
MArel-NZ ($\downarrow$)  & 4.8 (0.3)     & 7.7 (0.4)     & 5.3 (0.3)     & \textbf{3.3 (0.2)}     & 4.7 (0.2)  \\
HAM-Z ($\downarrow$)     & 0.650 (0.010) & 0.978 (0.002) & 0.843 (0.011) & \textbf{0.322 (0.006)} & 0.699 (0.011) \\
AUC ($\uparrow$)         & 0.897 (0.004) & 0.704 (0.003) & 0.928 (0.002) & \textbf{0.943 (0.001)} & 0.924 (0.002) \\
\bottomrule
\end{tabular}
}
\end{table}

\section{Supplementary Material for Section 6}

\subsection{Preprocessing details}
\label{app:preprocessing}

We flag a reporter-year lapse whenever an exporter reports no positive exports in a given year across all partners and products, i.e., $\sum_{j,a}\mb{1}(y_{ijat}>0)=0$. In such cases, we treat the entire exporter-year block as missing rather than as genuine zero trade. For each $(i,j,a)$ series, we impute short internal gaps of length at most two years by linear interpolation in levels between the nearest observed years. If a single missing value occurs at the beginning or end of the sample window, we fill it using the nearest observed value. Longer gaps are left unimputed. We retain a binary observation mask that equals one for observed or short-gap-imputed cells and zero otherwise, so that structurally missing entries do not contribute to estimation. All remaining zeros are treated as observed zeros.

\Cref{fig:hist_supp} shows the distribution of zeros across exporter--importer--product triples. The distribution is sharply concentrated at 27 (the full sample length), indicating that most dyad--product series are zero throughout and that observed zeros likely combine true absence of trade with missingness.

\begin{figure}[!htbp]
    \centering
    \includegraphics[width=0.85\linewidth]{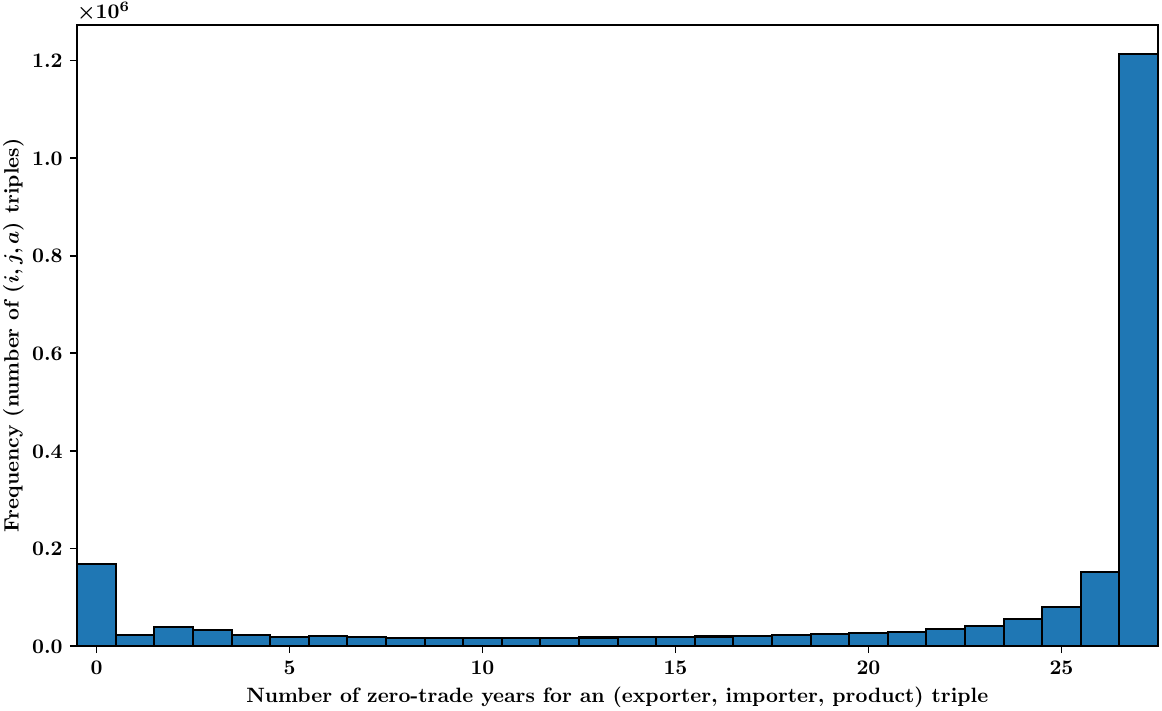}
    \caption{Histogram of $\sum_{t=1997}^{2023} \mathbf{1}(y_{ijat}=0)$ across all exporter--importer--product triples.}
    \label{fig:hist_supp}
\end{figure}

\subsection{Masking construction for out-of-sample evaluation}
\label{app:masking}

Within each held-out year slice $\mathcal{Y}_{t_0}$, the dense-block and complement masking schemes are defined as follows. Let $N_0 < N$ and define a dense core set $S_{t_0} \subset \{1,\dots,N\}$ of $N_0$ countries as those with the largest average trade volume in year $t_0$, where for each product $a$ trade volume is computed as the sum of exports and imports across all partners and then averaged over $a$. The dense block is the sub-tensor $y_{ijat_0}$ with $i,j \in S_{t_0}$ and $a = 1,\dots,A$. In the dense-block scenario, we observe this $N_0 \times N_0 \times A$ block and predict its complement; in the sparse-block (complement) scenario, we reverse the roles.

\Cref{fig:masking_supp} illustrates this construction on a single product--year slice ($a=9$, coffee and spices; $t_0=2023$). Countries are ranked by total trade intensity computed from the symmetrized matrix $y_{::at_0}+y_{::at_0}^\top$, after which the dense core corresponds to the top-left block.

\begin{figure}[!htbp]
  \centering
  \begin{subfigure}[t]{0.49\textwidth}
    \centering
    \includegraphics[width=\linewidth]{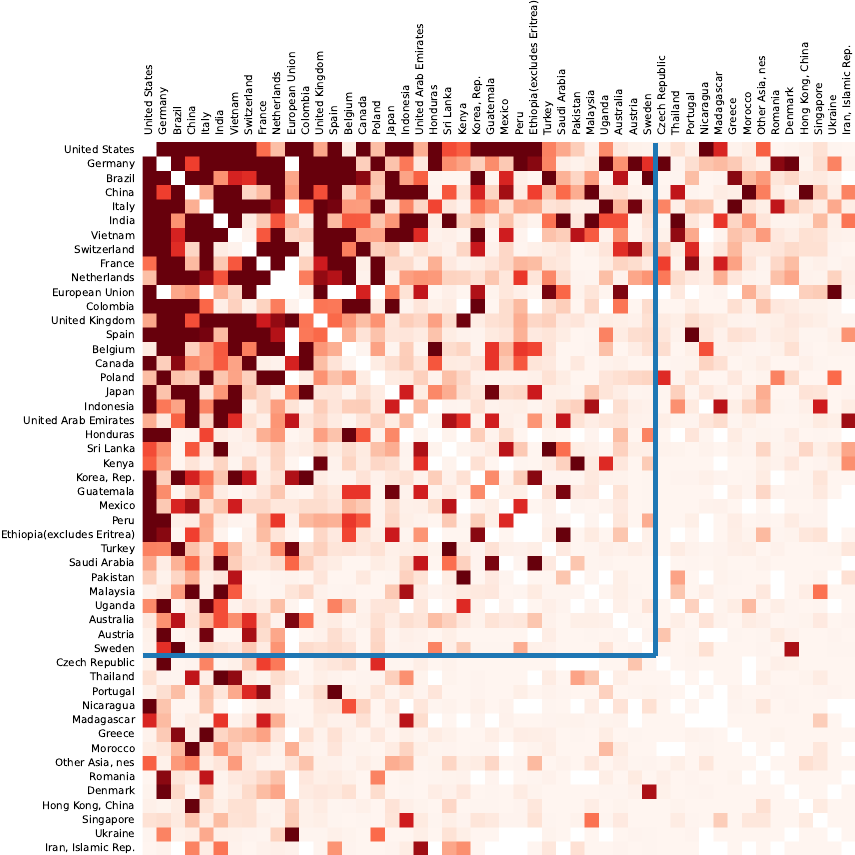}
    \caption{Symmetrized and ranked slice with the dense-block boundary marked (top $25\%\times 25\%$).}
  \end{subfigure}\hfill
  \begin{subfigure}[t]{0.49\textwidth}
    \centering
    \includegraphics[width=\linewidth]{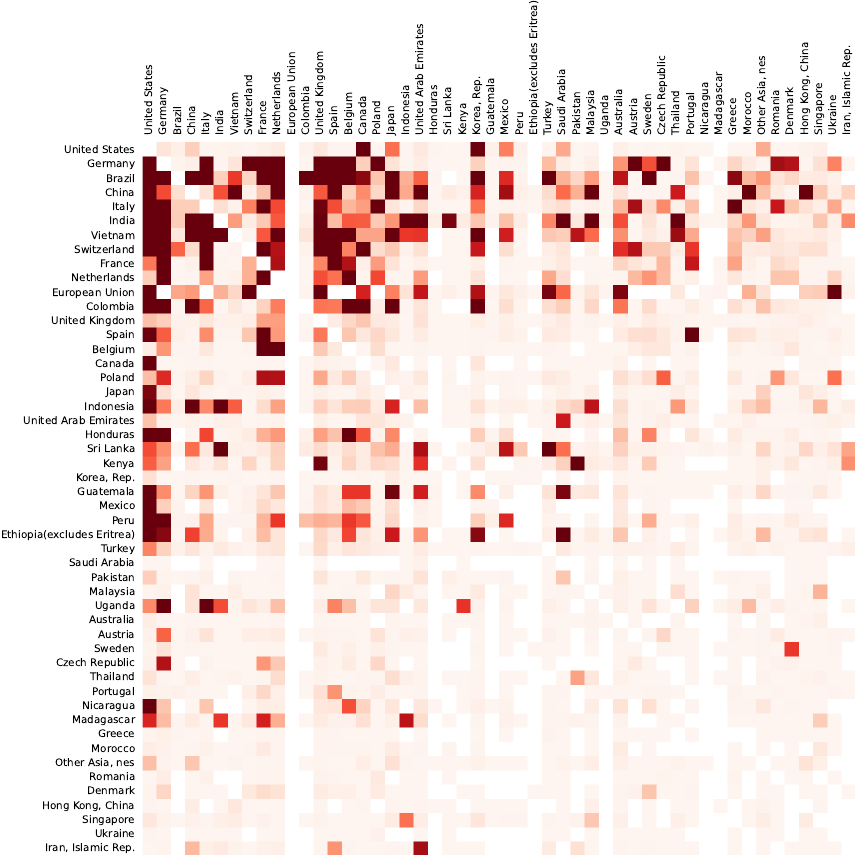}
    \caption{Raw directed slice under the same country ordering.}
  \end{subfigure}
  \caption{Illustration of the masking construction for one product--year slice ($a=9$, coffee and spices; $t_0=2023$). Each panel shows the $50\times 50$ submatrix of the country--country trade matrix after ranking by total trade intensity.}
  \label{fig:masking_supp}
\end{figure}

\subsection{Out-of-sample results at K=125}
\label{app:oos_table}

\Cref{tab:oos_trading_K125_supp} reports the full numerical results at rank $K=125$ for all three masking schemes.

\begin{table}[!t]
\caption{Out-of-sample predictive performance on the trade data at $K=125$. For each scenario, we report the variance-to-mean ratio (VMR) and nonzero density of the held-out set. BPRGTF with product-specific dispersion delivers the best overall performance.}
\label{tab:oos_trading_K125_supp}
\centering
{\renewcommand{\baselinestretch}{1}\selectfont
\footnotesize
\setlength{\tabcolsep}{4pt}
\begin{tabular}{@{}lcccc@{}}
\toprule
 & & & \multicolumn{2}{c}{BPRGTF} \\
\cmidrule(lr){4-5}
Metric & NTF-LS & NTF-$\beta$
  & \makecell{shared\\($\beta_{\mathbf{i}}{=}\beta$)}
  & \makecell{local\\($\beta_{\mathbf{i}}{=}\beta_a$)} \\
\midrule
\multicolumn{5}{l}{\textit{Complement held-out: size $=51{,}878{,}304$;\; density $=0.187$;\; VMR $=570{,}932.4$}} \\
\addlinespace[2pt]
MAE ($\downarrow$)       & 994.6           & 4834.6           & 2171.8           & \textbf{746.5}    \\
MAE-NZ ($\downarrow$)    & 4597.6          & 13610.1          & 6904.5           & \textbf{3717.4}   \\
MArel-NZ ($\downarrow$)  & 115.4           & 1137.4           & 588.8            & \textbf{40.1}     \\
HAM-Z ($\downarrow$)     & \textbf{0.554}  & 0.982            & 1.000            & 0.581             \\
AUC ($\uparrow$)         & 0.863           & 0.777            & 0.895            & \textbf{0.898}    \\
\midrule
\multicolumn{5}{l}{\textit{Dense held-out: size $=3{,}742{,}848$;\; density $=0.774$;\; VMR $=12{,}075{,}108.0$}} \\
\addlinespace[2pt]
MAE ($\downarrow$)       & 41007.6         & 66762.7          & 48042.9          & \textbf{37534.3}  \\
MAE-NZ ($\downarrow$)    & 51856.4         & 79840.5          & 60692.6          & \textbf{48159.1}  \\
MArel-NZ ($\downarrow$)  & 113.8           & 155.5            & 198.4            & \textbf{23.7}     \\
HAM-Z ($\downarrow$)     & 0.944           & 0.997            & 1.000            & \textbf{0.929}    \\
AUC ($\uparrow$)         & 0.811           & 0.790            & 0.851            & \textbf{0.872}    \\
\midrule
\multicolumn{5}{l}{\textit{Random held-out (averaged over 5 splits): size $=13{,}813{,}501$;\; density $=0.228$;\; VMR $=9{,}548{,}016.3$}} \\
\addlinespace[2pt]
MAE ($\downarrow$)       & 3076.3 (25.9)    & 5241.8 (154.7)   & 4746.4 (67.9)    & \textbf{2901.3 (42.2)}   \\
MAE-NZ ($\downarrow$)    & 12860.1 (101.5)  & 22226.2 (667.2)  & 17730.2 (285.5)  & \textbf{12536.1 (189.1)} \\
MArel-NZ ($\downarrow$)  & 117.9 (2.7)      & 127.6 (2.7)      & 423.0 (5.1)      & \textbf{32.8 (0.4)}      \\
HAM-Z ($\downarrow$)     & 0.558 (0.001)    & 0.601 (0.002)    & 1.000 (0.000)    & \textbf{0.547 (0.003)}   \\
AUC ($\uparrow$)         & 0.884 (0.000)    & 0.915 (0.001)    & 0.912 (0.000)    & \textbf{0.917 (0.000)}   \\
\bottomrule
\end{tabular}
}
\end{table}

\subsection{Product-specific Gamma rates and empirical association with trade-flow scale}
\label{subsec:beta_values}

For each product category $a$, we summarize the magnitude and dispersion of observed trade flows using only the non-zero entries of $\mathcal{Y}_{::a:}$. Let $\mathcal{S}_a \;=\; \{(i,j,t): y_{ijat}>0\}$ and 
\[
\bar{Y}^{(\mathrm{nz})}_a \;=\; \frac{1}{|\mathcal{S}_a|}\sum_{(i,j,t)\in\mathcal{S}_a} y_{ijat},
\qquad
\mathrm{Var}\!\left(Y^{(\mathrm{nz})}_a\right) \;=\; \frac{1}{|\mathcal{S}_a|-1}\sum_{(i,j,t)\in\mathcal{S}_a}\!\Big(y_{ijat}-\bar{Y}^{(\mathrm{nz})}_a\Big)^2 .
\]
We then assess the association between $\beta_a$ and these summaries using Spearman's rank correlation, which is sensitive to monotone relationships. Across products, $\beta_a$ is strongly negatively associated with both the mean and the variance of non-zero trade flows:
\[
\rho\!\left(\beta_a,\;\bar{Y}^{(\mathrm{nz})}_a\right)=-0.991,
\qquad
\rho\!\left(\beta_a,\;\mathrm{Var}(Y^{(\mathrm{nz})}_a)\right)=-0.947.
\]
\Cref{fig:beta_vs_mean_var_nz} visualizes these relationships (with the $x$-axes on a log scale for readability).

\begin{figure}[t]
  \centering
  \begin{subfigure}[t]{0.48\textwidth}
    \centering
    \includegraphics[width=\linewidth]{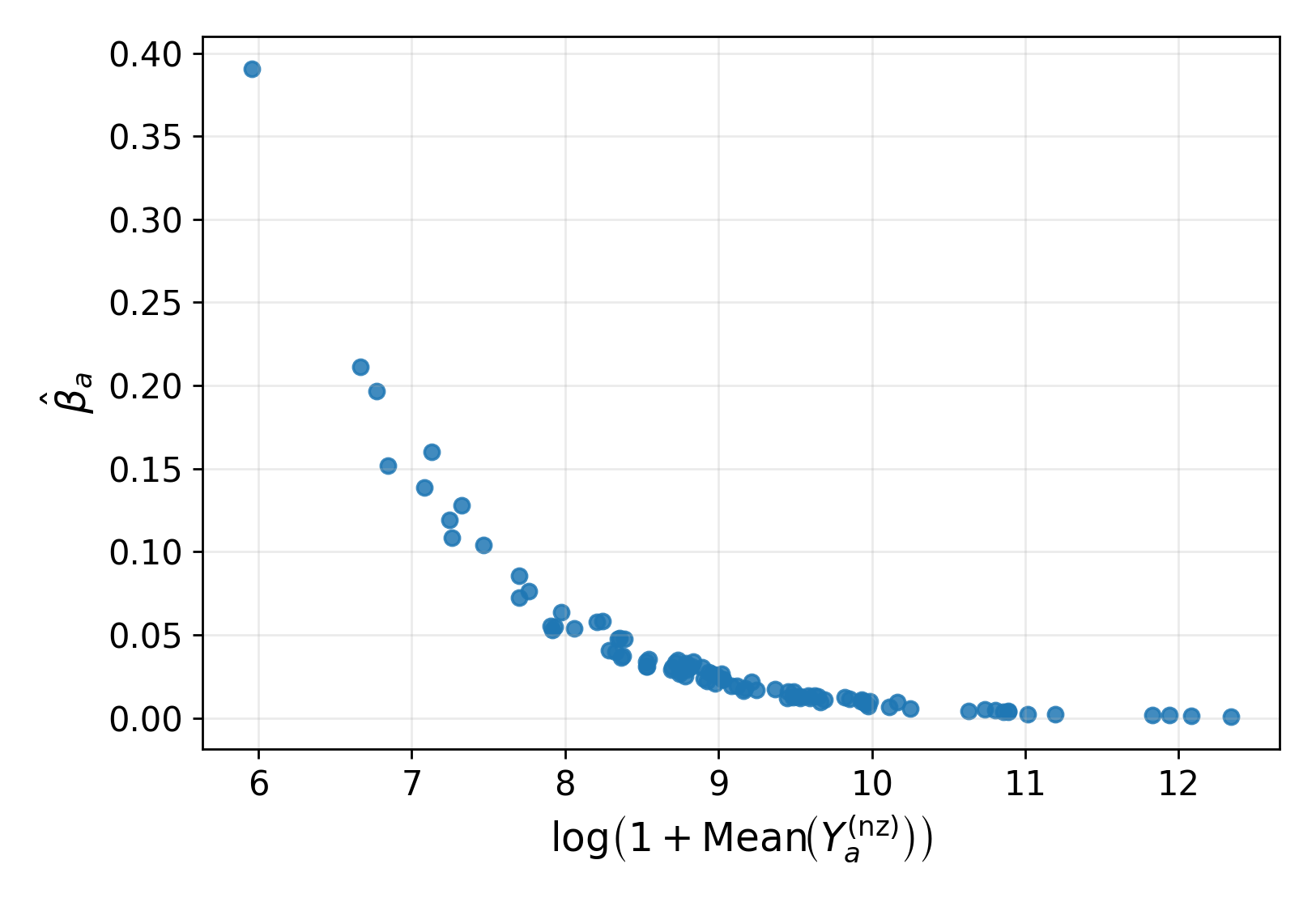}
    \caption{$\beta_a$ vs.\ $\log(\bar{Y}^{(\mathrm{nz})}_a)$}
  \end{subfigure}\hfill
  \begin{subfigure}[t]{0.48\textwidth}
    \centering
    \includegraphics[width=\linewidth]{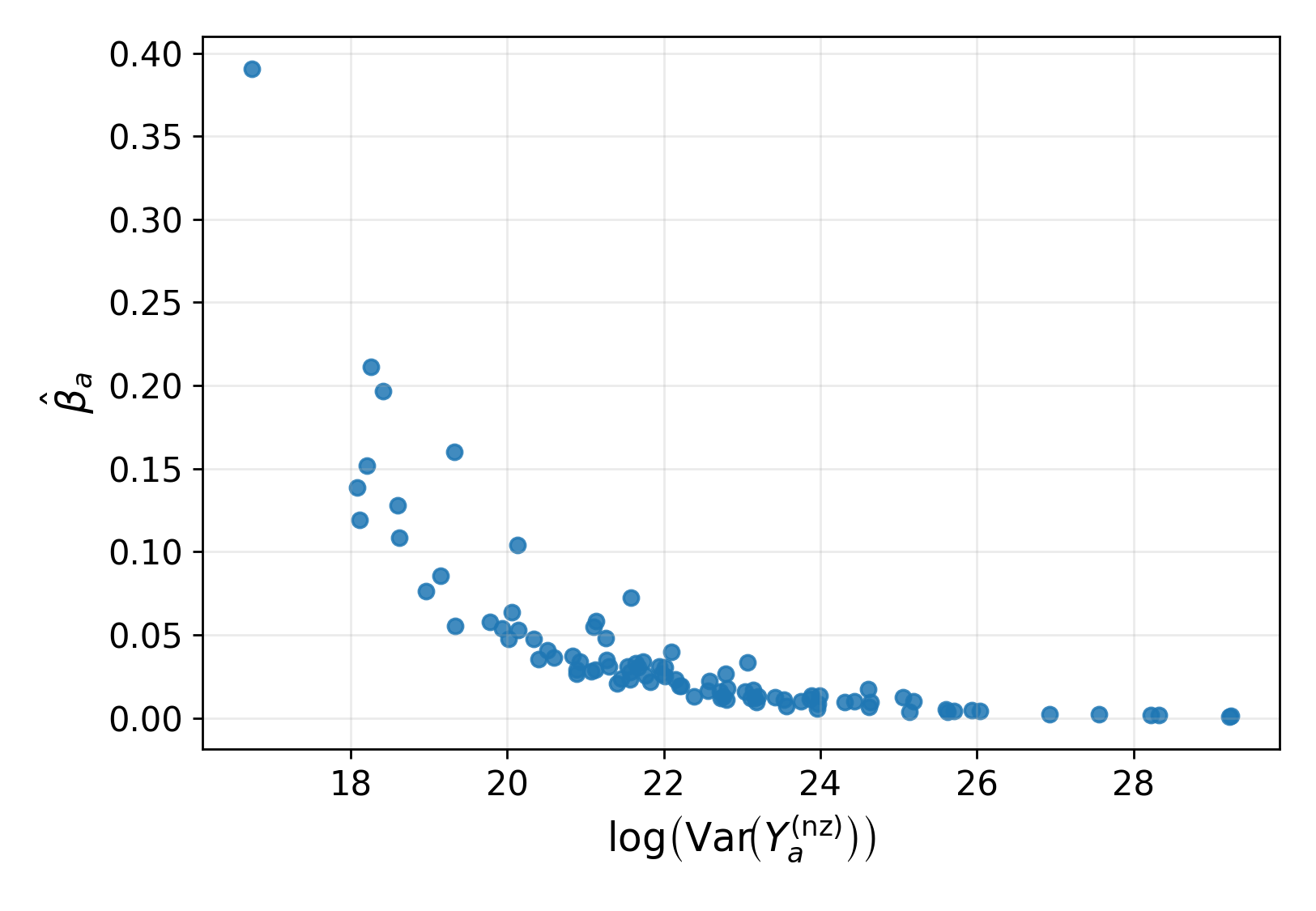}
    \caption{$\beta_a$ vs.\ $\log(\mathrm{Var}(Y^{(\mathrm{nz})}_a))$}
  \end{subfigure}
  \caption{Association between $\beta_a$ and the magnitude in panel (a), and dispersion in panel (b), of non-zero trade flows for each product $a$. The relationship is predominantly monotone and approximately decreasing: products with larger non-zero mean or variance tend to have smaller $\hat{\beta}_a$.}
  \label{fig:beta_vs_mean_var_nz}
\end{figure}

The product-specific estimates $\hat\beta_a$ are listed in \Cref{tab:beta_a_fixed_profile}.

\begingroup
\scriptsize           
\setlength{\tabcolsep}{3pt}        
\renewcommand{\arraystretch}{0.85} 
\begin{longtable}{r p{0.28\textwidth} r @{\hspace{1em}} r p{0.28\textwidth} r}
\caption{Estimated $\hat\beta_a$ by HS product group (indices 1--97; HS 77 is a placeholder).}
\label{tab:beta_a_fixed_profile}\\
\hline
\textbf{Idx} & \textbf{Product group} & \textbf{$\hat\beta_a$} &
\textbf{Idx} & \textbf{Product group} & \textbf{$\hat\beta_a$} \\
\hline
\endfirsthead
\hline
\textbf{Idx} & \textbf{Product group} & \textbf{$\hat\beta_a$} &
\textbf{Idx} & \textbf{Product group} & \textbf{$\hat\beta_a$} \\
\hline
\endhead
\hline
\multicolumn{6}{r}{\emph{Continued on next page}}\\
\endfoot
\hline
\endlastfoot
1  & live animals                 & 0.0308 & 50 & silk                          & 0.1082 \\
2  & meat products                & 0.0058 & 51 & wool/animal hair              & 0.0373 \\
3  & fish and seafood             & 0.0110 & 52 & cotton                        & 0.0194 \\
4  & dairy products               & 0.0128 & 53 & other veg.\ textiles          & 0.1386 \\
5  & other animal products        & 0.0765 & 54 & man-made filaments            & 0.0229 \\
6  & live plants and flowers      & 0.0311 & 55 & man-made staple fibers        & 0.0288 \\
7  & vegetables and roots         & 0.0167 & 56 & wadding/felt/nonwovens        & 0.0580 \\
8  & fruits and nuts              & 0.0120 & 57 & carpets/floor coverings       & 0.0636 \\
9  & coffee and spices            & 0.0281 & 58 & special woven/lace/embroidery & 0.0855 \\
10 & cereals                      & 0.0071 & 59 & coated/laminated fabrics      & 0.0473 \\
11 & milling products             & 0.0538 & 60 & knitted/crocheted fabrics     & 0.0252 \\
12 & oil seeds and grains         & 0.0124 & 61 & knit apparel/accessories      & 0.0097 \\
13 & resins and saps              & 0.1039 & 62 & woven apparel/accessories     & 0.0101 \\
14 & plaiting materials           & 0.3906 & 63 & other made-up textiles/rags   & 0.0334 \\
15 & animal/vegetable fats        & 0.0121 & 64 & footwear/parts                & 0.0132 \\
16 & prepared meat/seafood        & 0.0193 & 65 & headgear/parts                & 0.1601 \\
17 & sugars                       & 0.0292 & 66 & umbrellas/canes/parts         & 0.1964 \\
18 & cocoa products               & 0.0238 & 67 & feathers/artifc flowers/hair  & 0.0722 \\
19 & prepared cereal              & 0.0232 & 68 & stone/plaster/cement articles & 0.0350 \\
20 & prepared vegetables/fruits   & 0.0254 & 69 & ceramic products              & 0.0329 \\
21 & misc.\ food preparations     & 0.0276 & 70 & glass/glassware               & 0.0267 \\
22 & beverages and alcohol        & 0.0157 & 71 & pearls/precious metals/coin   & 0.0023 \\
23 & food industry waste/feed     & 0.0128 & 72 & iron/steel                    & 0.0038 \\
24 & tobacco products             & 0.0208 & 73 & iron/steel articles           & 0.0093 \\
25 & salt/sulfur/stone            & 0.0307 & 74 & copper/articles               & 0.0085 \\
26 & ores/slag/ash                & 0.0023 & 75 & nickel/articles               & 0.0164 \\
27 & mineral fuels/oils           & 0.0007 & 76 & aluminum/articles             & 0.0102 \\
28 & inorganic chemicals          & 0.0128 & 77 & reserved (placeholder)        & ---    \\
29 & organic chemicals            & 0.0038 & 78 & lead/articles                 & 0.0551 \\
30 & pharma products              & 0.0039 & 79 & zinc/articles                 & 0.0365 \\
31 & fertilizers                  & 0.0122 & 80 & tin/articles                  & 0.0529 \\
32 & dyes/pigments/paints         & 0.0219 & 81 & other base metals/cermets     & 0.0352 \\
33 & essential oils/perfume       & 0.0160 & 82 & tools/cutlery/base metal      & 0.0337 \\
34 & soap/detergents/waxes        & 0.0311 & 83 & misc.\ base-metal articles    & 0.0303 \\
35 & glues/enzymes                & 0.0476 & 84 & machinery/mech.\ appliances   & 0.0017 \\
36 & explosives/matches           & 0.1193 & 85 & electrical machinery/electronics & 0.0013 \\
37 & photo/film goods             & 0.0340 & 86 & railway/tramway equipment     & 0.0179 \\
38 & misc.\ chemical products     & 0.0109 & 87 & vehicles/parts                & 0.0016 \\
39 & plastics/articles            & 0.0051 & 88 & aircraft/spacecraft/parts     & 0.0043 \\
40 & rubber/articles              & 0.0124 & 89 & ships/boats/structures        & 0.0064 \\
41 & raw hides/skins/leather      & 0.0266 & 90 & optical/medical instruments   & 0.0047 \\
42 & leather goods/travel goods   & 0.0265 & 91 & clocks/watches/parts          & 0.0220 \\
43 & furskins/artificial fur      & 0.0549 & 92 & musical instruments/parts     & 0.1279 \\
44 & wood/articles                & 0.0136 & 93 & arms/ammunition/parts         & 0.0409 \\
45 & cork/articles                & 0.1519 & 94 & furniture/lighting/prefab     & 0.0100 \\
46 & plaited goods/basketware     & 0.2110 & 95 & toys/games/sporting goods     & 0.0173 \\
47 & wood pulp/paper scrap        & 0.0098 & 96 & misc.\ manufactured articles  & 0.0580 \\
48 & paper/paperboard articles    & 0.0114 & 97 & art/collectors/antiques       & 0.0399 \\
49 & printed books/media          & 0.0478 &    &                               &        \\
\end{longtable}
\endgroup

\subsection{Data-definition artifacts}
\label{app:data_artifacts}

While most components correspond to economically meaningful trade patterns, a small number primarily reflect reporting conventions or definitional breaks. \Cref{fig:usaircraft_supp} highlights discontinuities in the U.S.\ aircraft series---including near-zero values in 2004--2005, 2009--2015, and after 2022---that are more plausibly attributable to aircraft-specific reporting revisions and the post-2009 consolidation into HS880000 noted by the U.S.\ Census Bureau, together with subsequent HS concordance changes, than to genuine trade fluctuations \citep{USCensus2009Aircraft,ZhuYamanoCimper2011}. We treat such components as useful data diagnostics: the factorization separates measurement artifacts from substantive trade structure.

\begin{figure}[!htbp]
    \centering
    \includegraphics[width=0.6\linewidth]{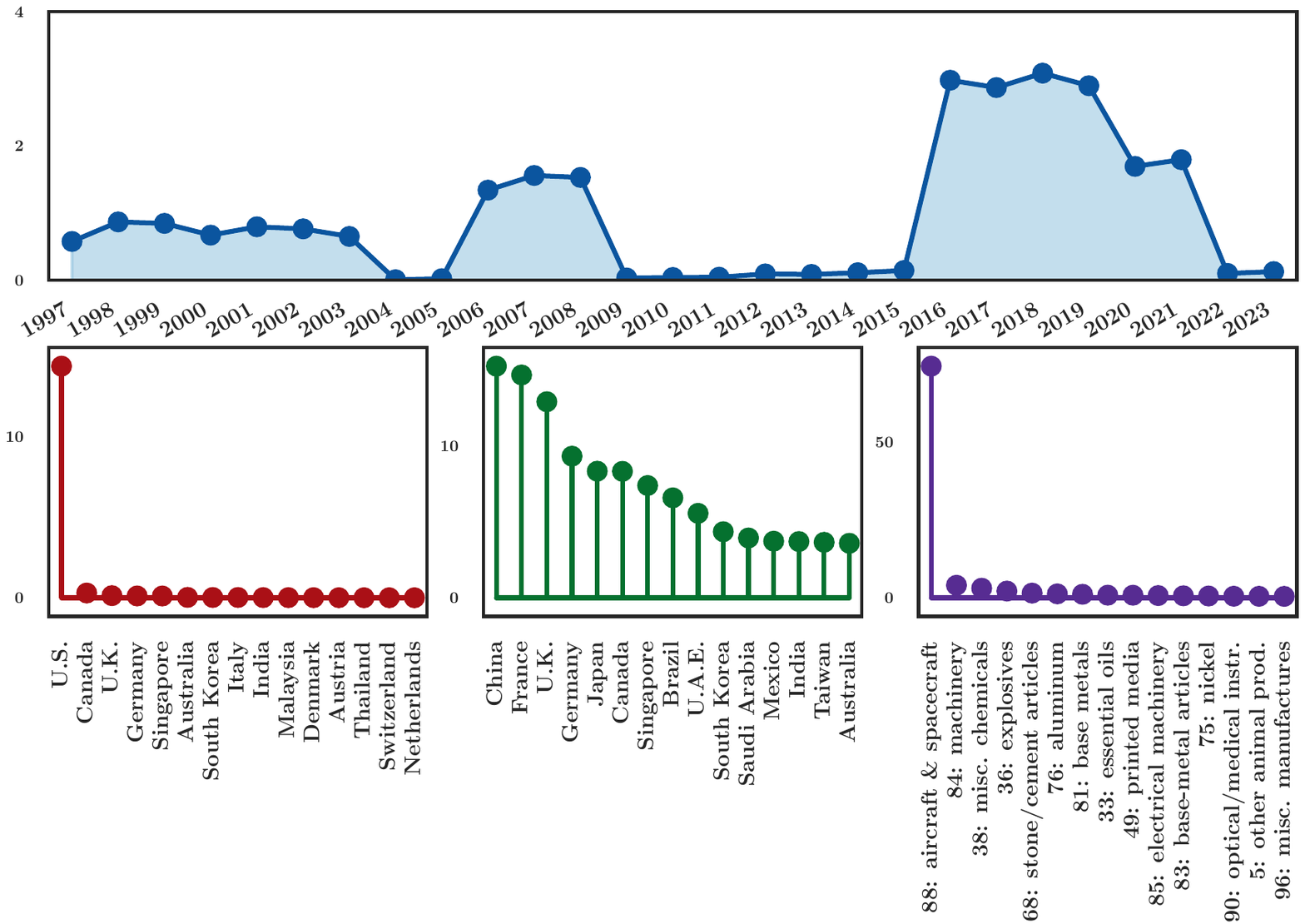}
    \caption{\textbf{Data-definition component: U.S.\ aircraft reporting and aggregation change.} This component reflects reporting conventions or definitional breaks rather than stable trade structure.}
    \label{fig:usaircraft_supp}
\end{figure}

\subsection{Product uniqueness: detailed list}
\label{app:product_uniqueness}

Components in the bottom quintile of product-mode entropy $H^P_k$ are often dominated by a single HS2 category. The most frequently dominant categories include HS71 (pearls/precious metals/coins), HS91 (clocks/watches/parts), HS88 (aircraft/spacecraft/parts), HS89 (ships/ boats/structures), HS97 (art/collectors/antiques), and HS29--30 (organic chemicals and pharmaceuticals), as well as resource- or geography-constrained agri-food categories such as HS09 (coffee/spices), HS12 (oil seeds), HS15 (animal/vegetable fats), HS18 (cocoa products), HS22 (beverages), and HS45 (cork). These patterns are consistent with well-established findings that certain product classes are exported by a small set of countries, either due to specialized capabilities (e.g., aircraft, pharmaceuticals, precision goods) or concentrated endowments (e.g., cocoa, coffee, oil seeds, cork), with additional concentration reflecting trading hubs (e.g., precious metals and art). See \citet{HidalgoHausmann2009} and \citet{Rauch1999}.

\subsection{Binned entropy summary}
\label{app:entropy_binned}

\Cref{fig:sprod_smin_bins_supp} presents a binned summary of the product-uniformity versus partner-concentration trade-off shown in Figure~\ref{fig:entropy_example} of the main text.

\begin{figure}[!htbp]
  \centering
  \includegraphics[width=0.5\linewidth]{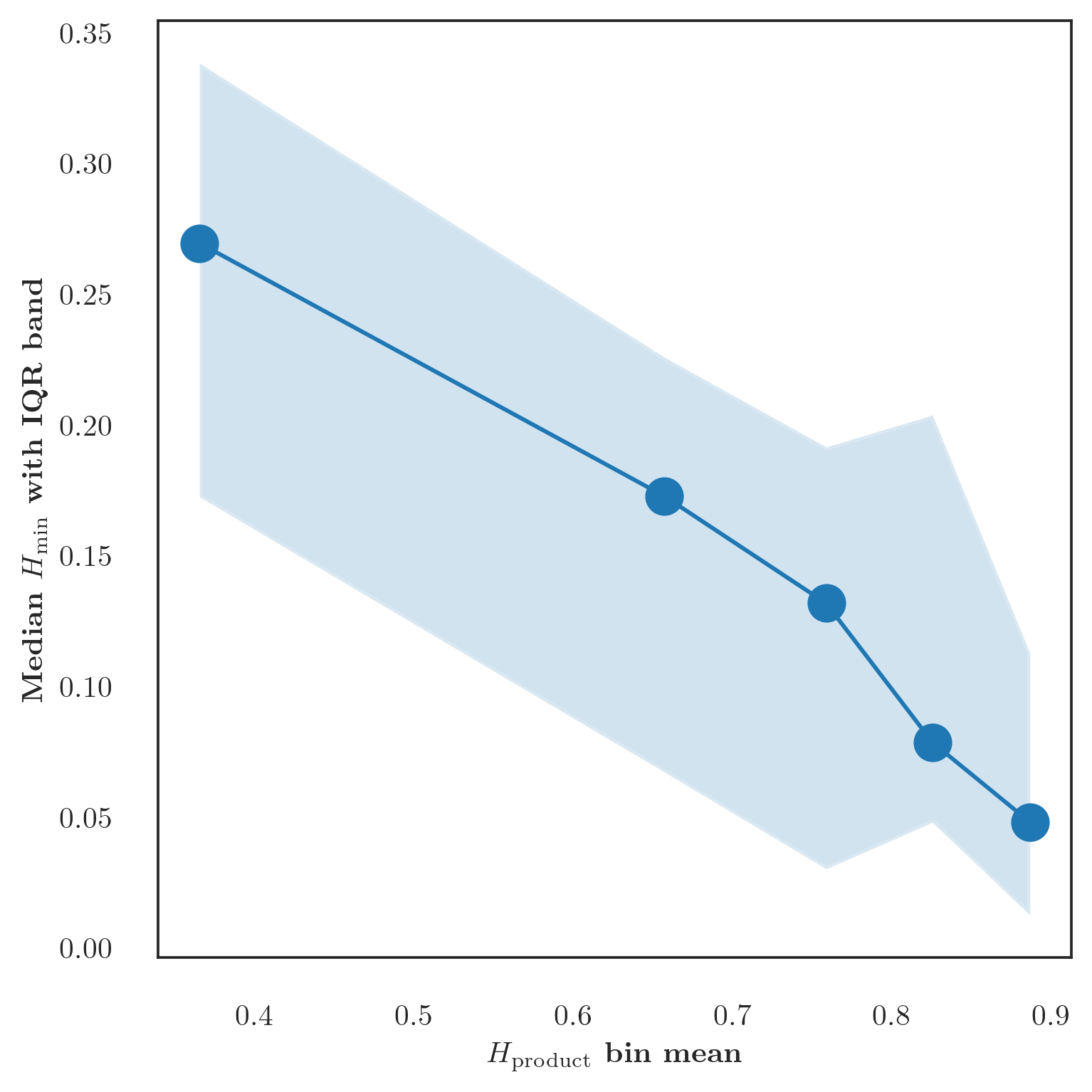}
  \caption{Binned median of $H_{\min,k}$ with interquartile-range band, grouped by product-mode entropy $H^{P}_k$.}
  \label{fig:sprod_smin_bins_supp}
\end{figure}

\subsection{Grubel--Lloyd index in latent product space}
\label{sec:supp_gl}

For each ordered pair $\{i,j\}$ where both directions of trade are observed, define the directed channel profile
\begin{equation}
\pi_{ij}(k) \;=\; \frac{\tilde\theta^{(1)}_{ik}\tilde\theta^{(2)}_{jk}}{\sum_{k'}\tilde\theta^{(1)}_{ik'}\tilde\theta^{(2)}_{jk'}},
\label{eq:supp_gl_pi}
\end{equation}
the share of bilateral trade from $i$ to $j$ flowing through latent channel $k$. Project through the product factor to obtain the implied product distribution
\begin{equation}
q_{ij}(a) \;=\; \sum_{k=1}^{125} \pi_{ij}(k)\,\tilde\theta^{(3)}_{ak},
\label{eq:supp_gl_q}
\end{equation}
and compute the Grubel--Lloyd index
\begin{equation}
\mathrm{GL}_{ij} \;=\; 1 \;-\; \frac{\sum_a |q_{ij}(a) - q_{ji}(a)|}{\sum_a \big(q_{ij}(a) + q_{ji}(a)\big)} \;=\; 1 - \mathrm{TV}\big(q_{ij},\,q_{ji}\big),
\label{eq:supp_gl}
\end{equation}
where $\mathrm{TV}$ denotes total-variation distance. Since $q_{ij}$ and $q_{ji}$ are probability distributions, the denominator equals~$2$ and the index is bounded in $[0,1]$.

Figure~\ref{tab:gl} in the main text reports the 25 most symmetric and most asymmetric pairs among those above the 95th percentile of bilateral volume. Restricting to high-volume pairs avoids trivially symmetric scores produced by very low-trade pairs (e.g., neighbouring small economies whose only bilateral trade consists of a few border goods). Figure~\ref{fig:4pairs} in the main text shows four representative pairs --- the most- and least-symmetric, together with two intermediate cases. The two extremes recovered by the index correspond to the two regimes described in the main text: pairs of similar economies that trade comparable product mixes in both directions (high GL), and pairs of structurally different economies whose trade is specialized along factor-endowment lines (low GL) \citep{helpman_krugman1985,feenstra2016advanced}. The decomposition recovers both patterns simultaneously from a single unsupervised model with no country labels, income groups, or product hierarchies supplied as input.

\subsection{China to Hong Kong textiles component}
\label{app:china_hk}

\Cref{fig:CHN-HK} shows a China$\to$Hong Kong cotton/textiles/knitwear component whose time factor declines with a pronounced drop around 2005. Hong Kong has long served as an entrep\^{o}t for Mainland trade \citep{hktid_mainland_entrepot}, and a large share of China's exports historically passed through Hong Kong intermediaries in light manufactures \citep{feenstra_hanson_2001}. The decline coincides with the expiry of the WTO Agreement on Textiles and Clothing (ATC) on 1 January 2005 \citep{wto_atc_legal,ers_mfa_2006}, which removed quota-based incentives to route textile shipments through re-export hubs and coincided with broader shifts toward direct exporting.

\begin{figure}[!htbp]
    \centering
    \includegraphics[width=0.55\linewidth]{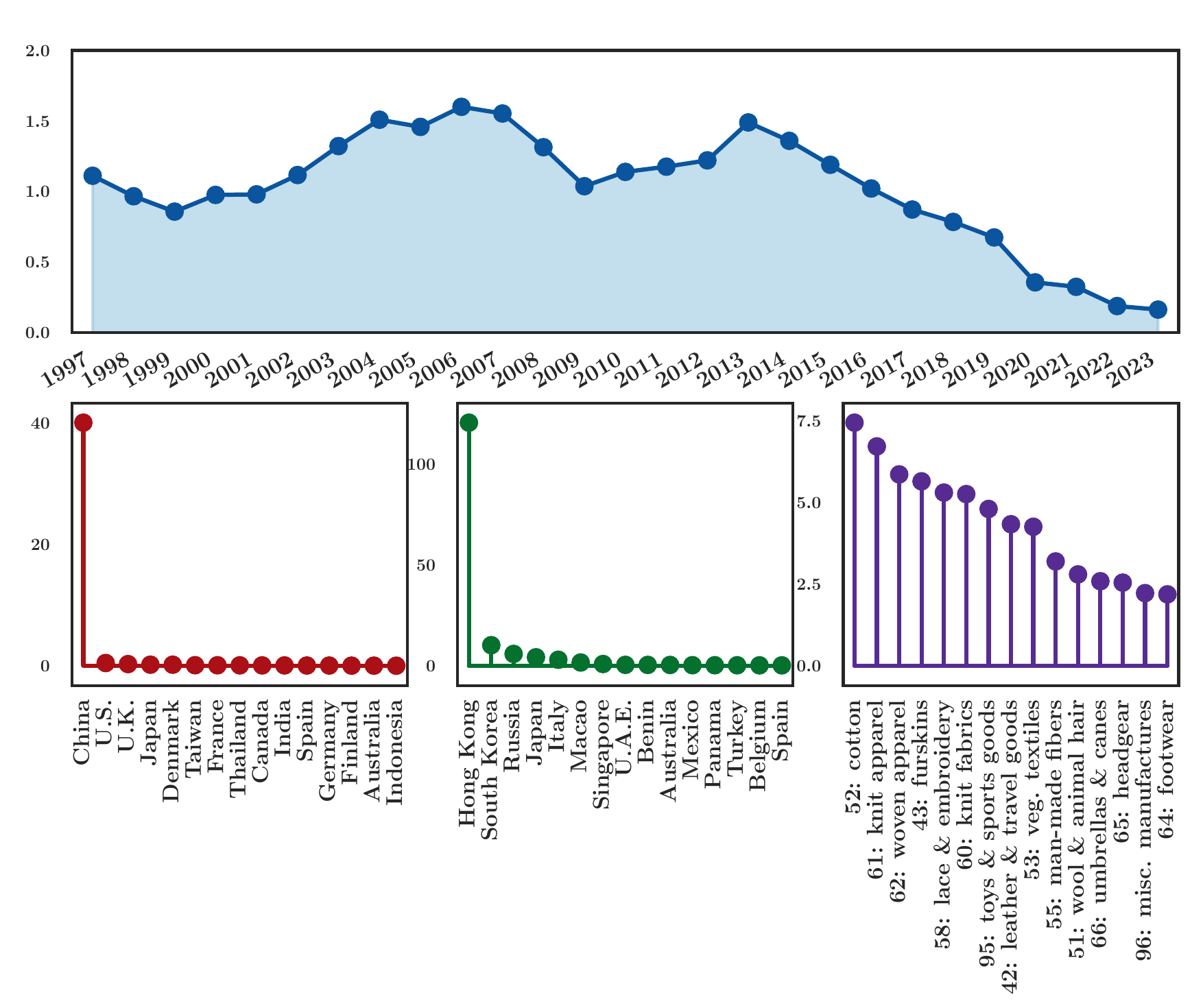}
    \caption{\textbf{China$\to$Hong Kong textiles/knitwear component (declining).} The time factor declines over the sample with a marked drop around 2005, consistent with reduced re-export routing following the end of the MFA/ATC quota regime \citep{feenstra_hanson_2001,hktid_mainland_entrepot,wto_atc_legal,ers_mfa_2006}.}
    \label{fig:CHN-HK}
\end{figure}

\end{document}